\newtheorem{theorem}{Theorem}[section]
\newtheorem{lemma}[theorem]{Lemma}
\newtheorem{corollary}[theorem]{Corollary}
\newtheorem{proposition}[theorem]{Proposition}
\newtheorem{observation}[theorem]{Observation}
\newtheorem{problem}[theorem]{Problem}
\newenvironment{customthm}[1]
  {\innercustomthm}
  {\endinnercustomthm}
\newcommand{\CH}{\ensuremath{\mathrm{CH}}}
\newcommand{\seq}[1]{\ensuremath{\left \langle #1 \right \rangle}}
\newcommand{\etal}{et al.}
\newcommand{\fig}[1]{\figurename~\ref{#1}}
\g@addto@macro\bfseries{\boldmath}
\title{An Optimal Algorithm for Reconstructing Point Set Order 
Types from Radial Orderings}
\author[1]{Oswin Aichholzer}
\author[2]{Vincent Kusters}
\author[3]{Wolfgang Mulzer}
\author[2]{Alexander Pilz}
\author[2]{Manuel Wettstein}
\affil[1]{Institute for Software Technology, Graz University of Technology. \authorcr \texttt{oaich@ist.tugraz.at}.
}
\affil[2]{Department of Computer Science, ETH Z\"urich. \authorcr
\texttt{[vincent.kusters, alexander.pilz, manuelwe]@inf.ethz.ch}.
}
\affil[3]{Institut f\"ur Informatik, Freie Universit\"at Berlin. \authorcr \texttt{mulzer@inf.fu-berlin.de}.
}
\begin{document}

\maketitle

\begin{abstract}
Let $P$ be a set of $n$ labeled points in the plane. 
The \emph{radial system} of~$P$ describes, for each 
$p\in P$, the order in which a ray that rotates
around $p$ encounters the points in $P \setminus \{p\}$. 
This notion is related to the \emph{order type} 
of~$P$, which describes the orientation (clockwise 
or counterclockwise) of every ordered triple in~$P$.
Given only the order type, the radial
system is uniquely determined and can 
easily be obtained. The converse, however,
is not true.  Indeed, let $R$ be the radial system
of $P$, and let $T(R)$ be the set of all order
types with radial system $R$
(we define $T(R) = \emptyset$ for the
case that $R$ is not a valid radial system).
Aichholzer \etal~(\emph{Reconstructing 
Point Set Order Types from Radial Orderings}, in 
Proc.~ISAAC 2014) show that 
$T(R)$ may contain up to $n-1$ order types. 
They also provide polynomial-time algorithms to 
compute $T(R)$ when only $R$ is given.

We describe a new algorithm for 
finding $T(R)$.  The algorithm constructs the convex 
hulls of all possible point sets with the radial 
system $R$. After that, orientation queries on point triples 
can be answered in constant time. A representation of this set of convex 
hulls can be found in $O(n)$ queries to the radial system,
using $O(n)$ additional processing time. This is optimal. 
Our results also generalize 
to \emph{abstract order types}.
\end{abstract}

\section{Introduction}\label{sec:intro}

Let $P = \{p_1, \dots, p_n\}$ be a set of $n$ labeled points
in the plane, where the point $p_i$ is given the label $i$.
The \emph{chirotope} $\chi : [n]^3 \rightarrow \{-1, +1, 0\}$
of~$P$ is a function that assigns to 
each triple $(i, j, k) \in [n]^3$
the orientation $\chi(i, j, k)$
 of the corresponding point triple $(p_i, p_j, p_k) \in P^3$
(clockwise ($-1$), counterclockwise ($+1$), or collinear~($0$)). 
If the elements of $(i, j, k)$ are not pairwise distinct,
then we set $\chi(i, j, k) = 0$.
Throughout this paper, 
we assume that $P$ is in  \emph{general position}, i.e.,
its chirotope~$\chi$ has 
$\chi(i, j, k) \neq 0$, for all $(i,j,k) \in [n]^3$
with pairwise distinct elements.

Let $P$ and $P'$ be two sets of $n$ labeled points in the plane,
and let $\chi$ and $\chi'$ be their chirotopes.
We say that $\chi$ and $\chi'$ are \emph{equivalent}
if either $\chi(i, j, k) = \chi'(i, j, k)$, for all
$(i, j, k) \in [n]^3$, or 
$\chi(i, j, k) = -\chi'(i, j, k)$, for all
$(i, j, k) \in [n]^3$.
This defines an equivalence relation on the
chirotopes. An equivalence class in this relation is
called \emph{order type}.
Many problems 
on planar point 
sets do not depend on the exact coordinates of the points 
but only on their order type. Examples include computing 
the convex hull and determining whether two segments with
endpoints in the point set intersect.
As far as algorithms are concerned, it often turns out 
that access to the order type suffices in order to
obtain efficient results.
For example, Knuth\cite{knuth1992axioms} 
shows that the
convex hull of a point set can be computed in 
$O(n \log n)$ time, even if one can only access its order type.%
\footnote{Actually, Knuth considers the generalized setting of abstract order types (to be defined later);
many algorithms, as, e.g., Graham's scan, can also be slightly adapted to work by accessing only order type information.}

Given a function $\chi : [n]^3 \rightarrow \{-1, +1, 0\}$,
it is a hard problem to determine whether $\chi$
is a chirotope for a labeled planar point set.\footnote{To be
precise, this problem is complete for the existential
theory of the reals $\exists\mathbb{R}$.\cite{Matousek14}} To get
around this difficulty, one uses the notion of \emph{abstract
order types}. Recall that an arrangement of \emph{pseudo-lines}
in the plane is a set of $x$-monotone planar curves
such that each pair of curves intersects in exactly
one point and such that this intersection is crossing.
A \emph{generalized configuration of points} consists of a labeled
point set $P$ in the plane and an arrangement of pseudo-lines such that 
each pseudo-line contains exactly two points from $P$ and such
that each pair of points from $P$ lies on a 
pseudo-line.\cite{goodman_proof}
Now we can define a chirotope on $P$ as follows:
if a point $p_k \in P$ is to the left 
of the pseudo-line through $p_i, p_j \in P$, directed from 
$p_i$ to $p_j$, then the triple $(i,j,k)$ is oriented 
\emph{counterclockwise}. Otherwise, it is oriented \emph{clockwise}. 
An equivalence class of chirotopes obtained in this way is
called an \emph{abstract order type}. 
Abstract order types can be characterized by a simple set
of axioms.\cite{knuth1992axioms}
For most combinatorial 
purposes, generalized configurations of points behave like 
point sets: their convex hull is the intersection of 
those halfspaces bounded by the pseudolines that contain 
all the points, and it determines a cycle of directed arcs.
The chirotope determines whether two arcs defined by 
pairs of points cross. We refer to the work of Goodman 
and Pollack (see, e.g., their work on semispaces of 
configurations\cite{semispaces}) and to a book by 
Knuth\cite{knuth1992axioms} (who calls abstract order 
types ``CC systems'') for more details.

In this paper, we will be solely concerned with abstract 
order types. We stress that, as opposed to many other publications on 
the subject, we consider \emph{labeled} 
abstract order types (and do not consider chirotopes
equivalent if they can be obtained by a permutation of their
arguments). In the following, we will not distinguish
between an abstract order type and a chirotope that
represents it.

\paragraph{Radial systems.}
We now define the main notion studied in this paper.
Let $P = \{p_1, \dots, p_n\}$ be a generalized configuration of points,
and let $\chi$ be the abstract order type of $P$.
The \emph{counterclockwise radial system} of 
$\chi$, denoted $R_\chi$, assigns to each
$i \in [n]$ the cyclic permutation $R_\chi(i)$ of 
$[n] \setminus\{i\}$ that is given
by the labels of the points in $P \setminus \{p_i\}$
in counterclockwise order around $p_i$. We call each 
$R_\chi(i)$ a \emph{counterclockwise radial ordering}.
If $\chi$ is realizable as a point set, then $R_\chi(i)$ equals the order of point labels 
found by sweeping a ray around $p_i$ in counterclockwise 
direction. Given a function $U$ that assigns to
each $i \in [n]$ a cyclic permutation $U(i)$ of $[n] \setminus \{i\}$,
we write $U \sim R_\chi$ 
if, for all $i \in [n]$, it holds that $U(i)$ equals
$R_\chi(i)$ or the reverse of $R_\chi(i)$.
Thus, the relation $\sim$ ``forgets'' the 
clockwise/counterclockwise direction of each individual~$R_\chi(i)$. We call an equivalence class under $\sim$  an 
\emph{undirected radial system}.
When we say \emph{radial system}, we always mean counterclockwise 
radial system. 
Radial systems were studied systematically by 
Aichholzer et al.\cite{urs} Before we describe
their results, let us first review some related notions
that have appeared in the literature.

\paragraph{Related work.}
Variants of the notion of radial systems have 
been studied in many contexts. First and foremost, 
there is the concept of 
\emph{local sequences}. Whereas our 
radial orderings are obtained by sweeping a ray around
each point, local sequences are obtained by sweeping 
a line. More precisely, let $P$ be a finite point set
in the plane. For a point~$p \in P$,
the \emph{local sequence 
of ordered switches} of $p$ is the cyclic sequence in 
which the points of~$P$ are encountered when rotating 
a directed line through~$p$. Additionally, we record 
whether a point appears 
before or after~$p$ on the directed line. Without
this additional information, we get 
the \emph{local sequence of unordered switches}. 
Goodman and Pollack\cite{semispaces} show that 
both concepts determine the order type of~$P$, 
and thus carry the same information. Wismath\cite{wismath} 
describes a method to reconstruct a point set (up 
to vertical translation and scaling) from its 
local sequences of ordered switches if, in addition, 
the $x$-coordinates of the points and the local 
sequences of directed switches are given.
He also mentions that the radial system does not 
always determine the order type.
Felsner and Weil\cite{felsner_weil} (Theorem~8) and 
Streinu\cite{clusters_of_stars} independently 
obtain a necessary and sufficient condition 
for sequences to be local sequences of unordered 
switches of an abstract order type. This condition allows 
for testing their realizability in polynomial time.

Another variation on radial systems was studied by 
Tovar, Freda, and LaValle\cite{TFL07} in the 
context of a robot that can sense landmarks around it.
Disser et al.\cite{DMW10} and Chen and Wang\cite{CW12} 
consider the \emph{polygon reconstruction problem 
from angles}, where the objective is to reconstruct 
a polygon when given, for each vertex $v$, the 
angles with the other vertices of the polygon visible
from $v$. Pilz and Welzl\cite{ooo} describe 
a hierarchy on order types based on crossing edges;
two order types are equivalent in their partial order 
if and only if they have the same radial system.
We refer to the work by Aichholzer et al.\cite{urs} for 
a more complete list of related work.

\paragraph{Good drawings.}
Radial systems are also closely related
to \emph{good drawings}. Let $G$ be
a graph.
A \emph{drawing} of $G$ is a representation of $G$ with vertices as distinct points in the plane or on the sphere and edges as Jordan arcs whose endpoints are the corresponding vertices.
It is usually assumed that no edges pass through vertices and two edges intersect only in a finite number of points.
A \emph{good drawing} (sometimes also 
called a \emph{simple topological graph}) of 
$G$ is a drawing of $G$ in the plane or on the 
sphere where each vertex is represented by 
a distinct point, and each edge is represented 
by a Jordan arc between its two vertices;
any two such arcs intersect in at most one 
point, which is either a common endpoint 
or a proper crossing.
Two drawings on the sphere are \emph{isomorphic} if they are equivalent under a homeomorphism of the sphere.
We consider two drawings in the plane isomorphic if they are isomorphic after a stereographic projection to the sphere.

The \emph{rotation} of a vertex $v$ in a drawing is the cyclic order of the 
edges incident to $v$. The \emph{rotation 
system} of a drawing is the set of 
the rotations of its vertices.
Clearly, good drawings are a generalization of geometric graphs.
The radial 
system of a point set $P$ is equivalent to 
the rotation system of the complete geometric 
graph on $P$. A generalized configuration of 
points $Q$ defines a good drawing of $K_n$ 
where the vertices are embedded on the points 
of $Q$ and every edge is a segment of a pseudo-line
in $Q$.

A good drawing is \emph{pseudo-linear} 
if its edges can be simultaneously extended 
to obtain a pseudo-line arrangement, or if it is isomorphic to such a drawing.
The good 
drawings obtained from generalized configurations
of points are exactly the pseudo-linear drawings (up to isomorphism).
See \figurename~\ref{fig_good_drawing_examples} for examples.
The radial system of $Q$ is equivalent to the 
rotation system of this good drawing. In a good 
drawing of~$K_n$, the rotation system determines 
which edges cross. Therefore, it fixes the drawing 
up to the ordering of the crossings; in particular, 
we can see whether two edges cross by locally 
inspecting the rotations for the four vertices 
involved.\cite{kyncl_realizability} Later, we will use 
good drawings as an important tool in our 
reconstruction algorithm.

It is well-known that not every rotation system can be realized by a good drawing of the corresponding graph.
Kyn\v{c}l\cite{kyncl_simplified} showed that a rotation system of $K_n$ is the rotation system of a good drawing if and only if this is true for every 5-vertex subset.
He approaches the problem from the aspect of \emph{abstract topological graphs}, where a graph is given together with a list of crossing edge pairs.
For abstract topological graphs of $K_n$, he shows that from every 6-vertex subset one can obtain the unique rotation system of the corresponding good drawing, if it exists.
For non-complete abstract topological graphs, the realizability problem is NP-complete.\cite{kyncl_realizability}
Deciding whether there is a good drawing of a non-complete graph with a given rotation system seems to be an open problem.
(There, the rotation system no longer determines the set of crossing edge pairs.)

Some good drawings are isomorphic to drawings where each edge is an $x$-monotone curve (after a projection to the plane).
Such drawings are called \emph{monotone}.
Clearly, all pseudo-linear drawings are monotone.
(Using Lemma~\ref{lem:important_triangles_not_crossed}, it is an easy exercise to provide an example showing that the converse is not true;
Kyn\v{c}l\cite{kyncl_realizability} provides all five non-isomorphic good drawings of $K_5$, of which only three are pseudo-linear.)
Balko, Fulek, and Kyn\v{c}l\cite{monotone} characterize monotone good drawings of $K_n$, and Aichholzer et al.\cite{monotonicity_algorithm} provide an $O(n^5)$ time algorithm for deciding whether a given rotation system is the one of a monotone good drawing of~$K_n$.
For non-complete graphs, no similar results are known.

In terms of rotation systems of good drawings, our algorithm solves the problem for pseudo-linear drawings of~$K_n$,
that is, whether a given rotation system is the one of a pseudo-linear drawing of~$K_n$.
We are not aware of any related results in connection with non-complete graphs.
Note that our problem is not concerned with finding a good drawing of a given rotation system, but with, in these terms, deciding whether the rotation system is the one of a pseudo-linear drawing, and determining the edges of all possible unbounded cells in all possible pseudo-linear drawings.

Also note that for any good drawing, the vertex triples can be oriented by defining the unbounded cell.
However, this must not be confused with the order type, as only point sets have an order type.
Finally, let us recall that there are good drawings of $K_n$ that have the same rotation system, but are non-isomorphic to each other.
In particular, even though the set of crossing edge pairs is determined, as well as the direction in which an edge crosses another one (the \emph{extended rotation system}, the order in which an edge is crossed by other edges is, in general, not fixed).
(This is not even the case for geometric graphs; e.g., slightly perturbing the vertices of an almost-regular hexagon influences the order in which its diagonals cross.)
A detailed discussion of this can also be found in Kyn\v{c}l\cite{kyncl_realizability}.
However, due to a result by Gioan\cite{gioan}, if a drawing of $K_n$ is pseudo-linear, then all good drawings with the same rotation system are pseudo-linear as well.

\begin{figure}
\centering
\includegraphics{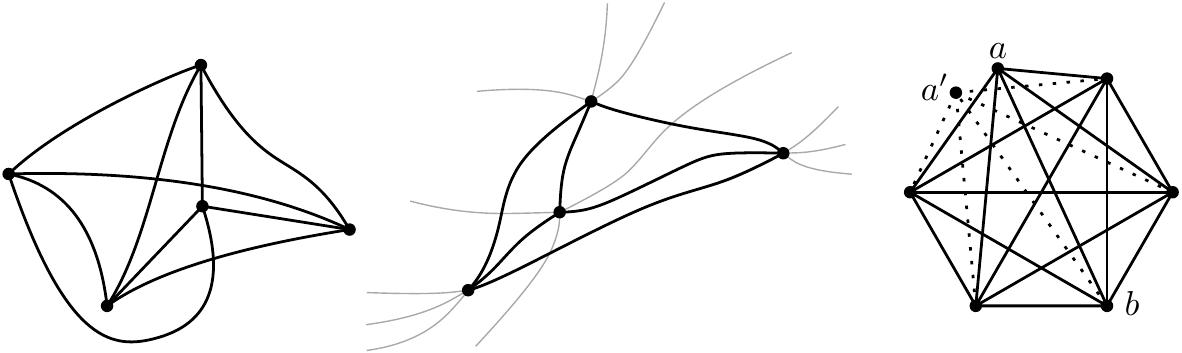}
\caption{Left: A good drawing of $K_5$.
It is monotone since there is a homeomorphism of the plane s.t.\ all edges are $x$-monotone.
Middle: A generalized configuration of points with gray pseudo-lines.
The induced pseudo-linear good drawing is shown by the black edges.
Right: A geometric drawing of $K_6$ and a perturbation of vertex $a$
to vertex $a'$, that shows that, even though the original and the
perturbed drawing have the same rotation system (i.e., the underlying
point set has the same radial system), and the directions in which two
edges cross each other match, the order in which other edges cross
$ab$ is different from the one of $a'b$ (giving two non-isomorphic labeled drawings).
}
\label{fig_good_drawing_examples}
\end{figure}

\paragraph{Properties of Radial Systems.}
Before we can describe our results, we
provide a quick overview of the 
previous work.
Aichholzer et al.\cite{urs} investigated 
under which circumstances the undirected radial system $U$
of a generalized configuration of points $P$ uniquely determines 
the abstract order type $\chi$. They show that if $P$ has 
a convex hull with at least four points, then $U$ uniquely 
determines~$\chi$.
In the following, let $U$ be an
undirected radial system that originates from an
abstract order type, and let  $T(U)$ be 
the set of abstract order types with undirected radial 
system~$U$.

\begin{theorem}[Theorem~1 and 2 in Aichholzer et al.\cite{urs}]
\label{thm:urs_combined}
Let $n \geq 5$ and 
consider an abstract order type $\chi$ on $[n]$.
Let $U$ be the undirected
radial system of $\chi$, and let 
$H\subseteq [n]$ be the elements of the convex hull of 
$\chi$. Then, we can compute $|H|$ from $U$ in polynomial 
time. Furthermore,
\begin{enumerate}
\item[(i)] if $|H|\neq 3$, then $T(U)=\{\chi\}$, and 
we can compute $\chi$ from $U$ in polynomial time; and
\item[(ii)] if $|H|=3$, then $|T(U)|\leq n-1$; all 
elements of $T(U)$ have a convex hull with exactly three
elements; and 
we can compute $T(U)$ from $U$ in polynomial time.
\end{enumerate}
\end{theorem}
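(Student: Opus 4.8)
We outline a strategy. The plan is to compute $T(U)$ by a controlled enumeration: isolate a small piece of combinatorial data that, together with $U$, pins down the entire chirotope; show that $U$ leaves at most $n-1$ possibilities for it; reconstruct the candidate chirotope for each; and discard the inconsistent ones. The reconstruction from a fixed choice uses only local tests, so the procedure runs in polynomial time, and the crux of the statement becomes the counting.

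\textbf{Reconstruction from oriented data.} First I would handle the situation in which a reading direction has been fixed at every point, i.e., the directed radial system $R_\chi$ is known; fixing the direction at one point costs nothing, as this choice is exactly the global reflection that turns a chirotope into an order type. I would then exploit that a directed radial system is precisely the rotation system of the complete geometric graph on~$P$: by Kyn\v{c}l's theorem such rotation systems, together with the crossing pattern they determine, are recognizable from their $5$-point restrictions, and by Gioan's theorem rotation-system equivalence preserves pseudo-linearity. Using this I would recover the convex hull (a hull vertex is one whose radial ordering admits a ``cut'' into a linear order that is consistent with all the other radial orderings and has all remaining points on one side), and then fix the orientation of every point triple by propagating inward from the hull, with the axioms of abstract order types guaranteeing consistency. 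The key sub-claim for part~(i) is that when $|H| \ne 3$, i.e., $|H| \ge 4$, the reading direction at each point is \emph{forced} once it is fixed at one: a convex hull with at least four vertices is rigid enough to eliminate the choice, so $R_\chi$, hence $\chi$, is unique up to the global reflection, and $|H|$ is read off. Proving this rigidity is itself a nontrivial case analysis on configurations near the hull.

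\textbf{The triangular case, and the main obstacle.} When $|H| = 3$ the three-fold symmetry of the triangular hull breaks the rigidity, and both the orientation of $U$ and, for a fixed orientation, the chirotope, may fail to be unique. The plan is to take as the ``small piece of data'' a canonical partial structure (say, one convex-hull edge, or the extreme point of $P$ in a fixed query direction) and to prove that $U$ admits at most $n-1$ choices for it, that each choice yields at most one chirotope through the reconstruction above, and that every chirotope so obtained again has exactly three hull vertices. Carrying this out, that is, precisely delimiting the family of order types that share the radial system $U$, showing every member is triangular, and sharpening the bound from ``polynomial'' to exactly $n-1$, is where I expect essentially all of the difficulty to lie; the reconstruction machinery above is comparatively mechanical, which is presumably why the cited work splits the statement into its Theorems~1 and~2. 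With at most $n-1$ candidates, each reconstructed and verified in polynomial time, the full set $T(U)$ (and hence $|H|$) is computed in polynomial time, giving part~(ii).
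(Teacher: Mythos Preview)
This theorem is not proved in the present paper; it is quoted as background from Aichholzer et al.\ (ISAAC 2014), so there is no proof here to compare against directly. The paper's own contribution (Sections~2--3, culminating in Theorem~\ref{thm:linear_preprocessing}) does, however, furnish an independent constructive route to the same conclusions with sharper bounds, and it is worth comparing your sketch to that.

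Your high-level plan---orient $U$, show that one directed convex-hull edge together with $R$ determines $\chi$, then enumerate the few candidate edges---matches the paper's architecture, and your reconstruction step is exactly Lemma~\ref{lem:sidedness}. Two points of divergence are worth noting. First, your route to directing $U$ through Kyn\v{c}l's and Gioan's theorems is heavier than needed: the paper (Theorem~\ref{thm:directing_radial_systems}) does this with an elementary parity argument on $4$-point subsets (Observations~\ref{obs_4tuple_p4} and~\ref{obs_4tuple_swaps}), and this works uniformly regardless of $|H|$, so the rigidity you attribute specifically to $|H|\ge 4$ is not what drives the orientation step. Second, your ``cut'' characterization of hull vertices is left vague; the paper instead builds an incremental \emph{hull structure} (Section~\ref{sec:obtaining_hull_edges}) that tracks candidate hull edges explicitly. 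As for the hard combinatorial core---the bound $|T(U)|\le n-1$ and the claim that every member of $T(U)$ has a triangular hull---you correctly flag this as the crux and leave it open; the present paper likewise does not prove it here but imports the structure of important triangles (Theorem~\ref{thm:important_triangles_structure}, Lemma~\ref{lem:important_partition}) from the cited work. So your outline is a reasonable strategy, but the substantive content resides in that imported structure theorem, which neither your sketch nor this paper supplies.
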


In the full version of their paper\cite{urs}, 
Aichholzer et al. show that (i) can be implemented 
in $O(n^3)$ time. Furthermore, they show that there exist counterclockwise 
radial systems $R$ with $|T(R)|=n-1$. Hence, it 
is not possible to improve the bound on $|T(U)|$ 
in (ii), even if we consider counterclockwise 
radial systems instead of undirected radial systems.\cite{urs}

Although $U$ does not always uniquely determine $\chi$, 
the pair $(U,H)$, where $H$ is the set of elements 
on the convex hull of $\chi$, always suffices.\cite{urs}
Thus, the abstract order types in $T(U)$ all have 
different convex hulls. Given an undirected radial 
system $U$ on $[n]$, we say that a subset 
$H\subseteq [n]$ is \emph{important} if $H$ is 
the convex hull of some abstract order type in $T(U)$.
An \emph{important triangle} is an important 
set of size $3$. Important sets are interrelated as follows.

\begin{theorem}[Propositions~1--4 in Aichholzer et al.\cite{urs}]
\label{thm:important_triangles_structure}
Let $n \geq 5$ and
consider a radial system $R$ on $[n]$.
If $R$ has more than two important triangles, 
then all important triangles must 
have one element $i^*\in [n]$ in common.
Thus, combining with Theorem~\ref{thm:urs_combined},
we can conclude that exactly one of the 
following cases applies:
\begin{enumerate}
  \item[(1)] There is an important set of
    size at least four, which is the only important set.
   \item[(2)] There are between $1$ and $n-1$ important sets.
   All important sets are triangles, and if there is more than
   one important set, there is an element $i^* \in [n]$
   that is contained in all of them.
  \item[(3)] There are exactly two important 
    sets, they are triangles, and they are disjoint.
  \end{enumerate}
\end{theorem}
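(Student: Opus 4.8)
The plan is to reduce the three-case statement to a single structural claim using Theorem~\ref{thm:urs_combined}, and then prove that claim directly from the given radial orderings. Throughout, write $U$ for the undirected radial system obtained from $R$. By Theorem~\ref{thm:urs_combined} the number $|H|$ is a function of $U$ alone and $|H|\geq 3$. If $|H|\neq 3$, then $|H|\geq 4$, $T(U)=\{\chi\}$, and $H$ is the only important set --- this is case~(1). If $|H|=3$, then every element of $T(U)$ has a triangular convex hull, $1\leq|T(U)|\leq n-1$, and distinct elements of $T(U)$ have distinct convex hulls (as recalled in the excerpt); hence the important sets are exactly these $m:=|T(U)|\in\{1,\dots,n-1\}$ triangles. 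For $m=1$ this is case~(2); for $m=2$ the two triangles are either disjoint (case~(3)) or they meet (case~(2)). So the whole theorem reduces to the claim: \emph{if $m\geq 3$, then all important triangles contain a common element $i^*\in[n]$}. Note that, granting this claim, the three cases are mutually exclusive: under $m\geq 3$ all important triangles share $i^*$, so no two are disjoint and case~(3) is impossible.

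As the main tool I would record a combinatorial description of important triangles in terms of $U$. Realizing some $\chi'\in T(U)$ as a generalized configuration of points: a label $i$ is on the convex hull of $\chi'$ exactly when the cyclic order $U(i)$ admits a ``cut'' --- a position between two cyclically consecutive labels --- such that, read linearly from that cut, all $n-1$ remaining labels lie in a halfplane through $p_i$; the two ends of this linear order are then the hull neighbours of $p_i$, so in particular the two hull neighbours of a hull vertex are radially consecutive around it. Conversely, fix a candidate triangle $H=\{a,b,c\}$ and a cyclic orientation of it. Then there is at most one chirotope $\chi_{U,H}$ with undirected radial system $U$ and convex hull $H$ carrying that orientation: for each of $a,b,c$ the other two hull-labels must be radially consecutive around it, which fixes the outward cut and hence the counterclockwise reading of its radial ordering; each interior label $p$ must see $a,b,c$ in the cyclic order prescribed by the orientation of $H$, which fixes the counterclockwise reading of $U(p)$; and since $(U,H)$ is known to determine the abstract order type, these oriented radial orderings reconstruct $\chi_{U,H}$. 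Thus $H$ is important iff the resulting $\chi_{U,H}$ is a valid abstract order type whose convex hull is exactly $H$, and every question about important triangles becomes a compatibility question about the fixed cyclic orders $U(1),\dots,U(n)$.

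With this I would establish the reduced claim in two parts. First, the disjoint case: if $H_1$ and $H_2$ are disjoint important triangles, then each $a\in H_1$ is a hull vertex of $\chi_1$ but lies inside the triangle $H_2$ in $\chi_2$, and comparing the two constraints this places on the single cyclic order $U(a)$ --- one outward cut confining all labels to a halfplane, versus the labels of $H_2$ wrapping all the way around $p_a$ --- locates the labels of $H_2$ inside $U(a)$; carrying this out at all six vertices of $H_1\cup H_2$ forces a structure on $U$ so rigid that no third important triangle survives, giving $m=2$ and case~(3) (and so, for $m\geq 3$, no two important triangles are disjoint). Second, for $m\geq 3$: the task is to show that the important triangles, now pairwise intersecting, share a common element. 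I would analyse the constraints that equality of radial systems imposes on pairs and then triples of important triangles --- viewing the passage from one $\chi'\in T(U)$ to another as a ``pivot'' that keeps certain hull vertices and swaps others for interior labels, a pivot which around each retained hull vertex is governed by the position of a single pair of radially consecutive labels --- and rule out every intersection pattern lacking a common element; for instance the pattern in which three important triangles meet pairwise in three distinct labels $\{x\},\{y\},\{z\}$ must be excluded because two pivots ``anchored'' at the distinct labels $x$ and $y$ cannot both be realized by the one fixed system $U$. Once enough such patterns are excluded, a short set-theoretic argument extracts the common $i^*$, and we are in case~(2).

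The hard part will be this last case analysis. The underlying difficulty is that whether a cyclic order $U(i)$ ``can be the radial order of a hull vertex'' is not a property of $U(i)$ alone --- it depends on orientation data not stored in $U$ --- so one cannot argue purely locally, and must carry the global constraint ``the same orientation data serves $\chi_1,\chi_2,\chi_3$ simultaneously'' through the entire analysis. I expect the proof to hinge on two structural facts that keep the bookkeeping finite: that the two hull vertices exchanged by a pivot are radially consecutive around every hull vertex that is retained, and that ``how far a pivot has progressed'' is governed by a linear, non-branching order, so two pivots anchored at different vertices are genuinely incompatible.
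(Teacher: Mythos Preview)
The paper does not prove Theorem~\ref{thm:important_triangles_structure}; it is quoted verbatim as Propositions~1--4 of Aichholzer et al.\ and no argument is supplied here. The only related material the paper provides is Lemma~\ref{lem:important_partition} (also quoted), the partition characterization of inner important triangles, which is presumably the geometric tool driving the original proof. So there is no ``paper's own proof'' to compare your proposal against.

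That said, your proposal is a plan rather than a proof, and the two structural steps you single out are promissory notes, not arguments. In the disjoint case you assert that comparing the constraints at the six vertices of $H_1\cup H_2$ ``forces a structure on $U$ so rigid that no third important triangle survives'', but you never say what that structure is or why it excludes a third triangle; this is exactly the content to be proved. For $m\geq 3$ you introduce a ``pivot'' picture and claim that ``two pivots anchored at different vertices are genuinely incompatible'', again without proof --- and you yourself concede that the hard part is this case analysis, which is not carried out. Your abstract ``cut'' description of hull vertices and your observation that hull neighbours are radially consecutive are correct and useful, but they do not by themselves rule out, say, three pairwise-intersecting important triangles meeting in three distinct labels; that exclusion needs a concrete argument you have not given.

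If you want a handle that makes the case analysis finite and checkable, Lemma~\ref{lem:important_partition} is the natural one: an inner important triangle cuts the point set into three wedges $P_a,P_b,P_c$ with a crossing condition on pairs inside each wedge, and the interaction of two such partitions (sharing zero, one, or two vertices) is a concrete geometric object one can reason about directly. Your ``pivot'' language can likely be made precise through this lens, but as written the proposal does not yet constitute a proof.
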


For cases (2) and (3), there is actually a 
complete characterization of the important triangles.
For an abstract order type $\chi \in T(U)$, an 
\emph{inner} important triangle of $\chi$ is an 
important triangle of $U$ that is not equal to 
the convex hull of~$\chi$. The following lemma 
reformulates the fact that an inner important 
triangle is not contained in a convex quadrilateral 
(see \fig{fig_important_partition_a}).

\begin{figure}
\centering
\includegraphics{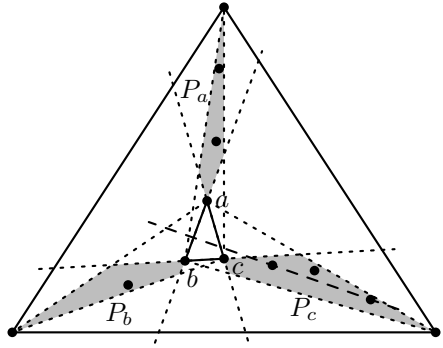}
\caption{An inner important triangle $\seq{a,b,c}$ partitions the 
point set into three subsets, in which each pair of 
points intersects the opposite edge.}
\label{fig_important_partition_a}
\end{figure}
\begin{lemma}[Aichholzer et al.\cite{urs}, Pilz and Welzl\cite{ooo}]
\label{lem:important_partition}
Let $P$ be a generalized configuration of $n$ points, and
let $\chi$ be the abstract order type of $P$.
A triple~$(a,b,c) \in [n]^3$ is an inner important 
triangle iff the following conditions hold.
\begin{enumerate}
\item[(1)] The triangle $p_ap_bp_c$ is empty of points of~$P$.
\item[(2)] The triangle $p_ap_bp_c$ partitions $P \setminus \{p_a,
p_b, p_c\}$ into three subsets $P_a$, $P_b$, and $P_c$, 
such that $P_a$ is to the left of the directed 
line $p_bp_a$ and to the right of the directed line $p_cp_a$, 
and similarly for $P_b$ and $P_c$. 
\item[(3)] For any two points $v, w \in P_a$, 
  the pseudo-line $vw$ intersects the edge~$p_bp_c$;
  and similarly for $P_b$ and $P_c$.
\end{enumerate}
\end{lemma}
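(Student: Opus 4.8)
\emph{Setup.} Unwinding the definitions, $(a,b,c)$ is an inner important triangle of $\chi$ precisely when there is an abstract order type $\chi'\in T(U)$ whose convex hull is exactly $\{a,b,c\}$ while $\{a,b,c\}$ is not the convex hull of $\chi$; combined with Theorem~\ref{thm:urs_combined} this forces $\chi$ to have a triangular hull as well, and since $(U,H)$ determines the order type, such a $\chi'$ is unique when it exists. So the task is to show that conditions (1)--(3) describe exactly the case in which one can ``turn $\chi$ inside out along $abc$''. (One may instead route through the reformulation mentioned in the text --- $abc$ is an inner important triangle iff it is an empty triangle of $\chi$ not contained in a convex quadrilateral spanned by points of $P$ --- and then verify combinatorially that this is equivalent to (1)--(3); the substance is the same.) I would prove the two directions separately.

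\emph{Necessity of (1)--(3).} Suppose $\chi'$ exists. Around any hull vertex of $\chi'$, say $a$, every other label lies in the wedge spanned by the directions to the two hull-neighbours $b,c$, so one of the two arcs between $b$ and $c$ in $R_{\chi'}(a)$ is empty; hence so is one such arc in $R_\chi(a)$, as $R_\chi(a)\sim R_{\chi'}(a)$. In a realization of $\chi$ this means no point lies in one of the two wedges at $a$ bounded by the rays $a\to b$ and $a\to c$, and the relevant wedge is the one containing both the interior of triangle $abc$ and the region beyond edge $bc$ (a short case analysis, using that $\{a,b,c\}$ is not the hull of $\chi$, excludes the other possibility). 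Carrying this out at $a$, $b$, and $c$ and intersecting the conditions, nothing lies inside triangle $abc$ --- condition (1) --- and nothing lies in an edge region, so the remaining points fall into the three vertex cones --- condition (2). For (3): since $\chi'$ has hull $\{a,b,c\}$, all of $P_a$ lies inside its hull triangle; comparing $R_\chi(v)$ with $R_{\chi'}(v)$ (which agree up to reversal) and reading off the orientations of $(v,w,b)$ and $(v,w,c)$ yields $\chi(v,w,b)\neq\chi(v,w,c)$ for all $v,w\in P_a$, i.e., line $vw$ meets edge $bc$; symmetrically for $P_b$ and $P_c$.

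\emph{Sufficiency of (1)--(3).} Assume (1)--(3). By (1)+(2) the set $[n]$ consists of $\{a,b,c\}$ together with the three vertex cones $P_a,P_b,P_c$. I would construct $\chi'$ by folding each cone into a fixed triangle $\Delta$ with corners $a,b,c$: the restriction of $\chi$ to $P_a\cup\{a,b,c\}$ is an empty triangle with every other point in a single cone beyond the vertex $a$, hence projectively equivalent to a configuration with $\{a,b,c\}$ as hull and $P_a$ inside, and I let $\chi'$ agree with that configuration on $P_a\cup\{a,b,c\}$ (and likewise on $P_b\cup\{a,b,c\}$ and $P_c\cup\{a,b,c\}$). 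The three local pictures agree on the triangle $abc$ once its orientation is fixed, and the remaining triples --- those meeting two or three cones --- are oriented by reading the radial system $U$. It then remains to check that this $\chi'$ satisfies Knuth's CC-system axioms, has convex hull $\{a,b,c\}$, and has undirected radial system $U$. Condition (3) is precisely the hypothesis that makes the radial check succeed: it forces, inside each folded cone, the pairs of points to relate to the opposite corners $b$ and $c$ exactly as $U$ demands; without it some pair $v,w\in P_a$ could only be placed inside $\Delta$ in a way incompatible with the reversed radial orders of $v$ and $w$.

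\emph{Main obstacle.} The real work is the sufficiency direction: assembling a single abstract order type from three folded cone-configurations, verifying the CC-system axioms, and --- the crucial point --- checking that the radial system comes out exactly as $U$, which is where all three conditions are genuinely used and which has to be done for pseudo-line arrangements, not only for straight-line point sets. The reformulation route replaces this by a still nontrivial order-type case analysis, but the folding/unfolding correspondence and the radial bookkeeping remain the heart of the matter.
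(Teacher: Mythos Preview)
The paper does not prove this lemma. It is stated with citations to Aichholzer et~al.\ and Pilz and Welzl, accompanied only by the remark that it ``reformulates the fact that an inner important triangle is not contained in a convex quadrilateral.'' There is therefore no proof in the paper to compare your proposal against.

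For what it is worth, your sketch is a plausible outline. You correctly identify the convex-quadrilateral reformulation as an alternative route; that is in fact the form in which the paper frames the result before citing it. Your necessity direction is essentially sound modulo routine case analysis. The sufficiency direction, as you yourself flag, is where the substance lies: the folding construction must be shown to yield a genuine CC-system (or, equivalently, be realized by an explicit pseudoline arrangement), and the radial-system check is where condition~(3) earns its keep. Your description of that step is on the right track but remains a sketch rather than a proof.
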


In this context, we mention that, if $R$ 
is the radial system of some point set order 
type, then every abstract order type with radial 
system $R$ can be realized as a point 
set (see Theorem~27 in Pilz and Welzl\cite{ooo}).
We do not deal with the realizability of abstract 
order types as point sets in this work.
In the following, by a \emph{realization of 
a radial system} $R$, we mean an abstract order
type whose radial system is $R$.

\paragraph{Our results.}
We assume that the radial system
is given in a data structure that 
lets us obtain the relative order of three 
elements in a radial ordering in constant time.
We call such a query a \emph{triple test}.
For example, a triple test can be carried out
in $O(1)$ time if we store not only the radial 
ordering, but also the rank of each element within 
some linear order defined by the radial ordering 
around each vertex.
(If this structure is not provided, it can be obtained in $\Theta(n^2)$ time.)

For a given undirected radial system $U$ 
on $n$ elements (which has size~$\Theta(n^2)$), 
we provide an algorithm to
direct the $n$ radial orderings in a
consistent manner in $O(n)$ triple tests
and $O(n)$ additional time
(Theorem~\ref{thm:directing_radial_systems}).

Our main algorithm identifies the convex 
hulls of all abstract order types consistent with 
a given radial system $R$ in $O(n)$ time (provided 
that $R$ is the radial system of an abstract order type).
This set allows for constant-time queries to 
each chirotope in $T(R)$. Throughout the paper,
when we speak of the ``convex hull'' of a
set of vertices, we mean a combinatorial
representation as the cyclic
permutation of the vertices that
appear on the convex hull, in this order.

\begin{theorem}\label{thm:linear_preprocessing}
Given a radial system~$R$ of an abstract order type, 
we can find in $O(n)$ triple tests and $O(n)$
additional processing a data structure that
represents the convex hulls of all chirotopes in $T(R)$.
With this data structure, 
we can answer queries to the chirotopes of $T(R)$ in 
constant time. 
\end{theorem}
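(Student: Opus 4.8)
My plan is to proceed in three stages: normalize the input, reconstruct combinatorial representations of all important sets, and finally equip these with a constant amount of auxiliary data per vertex so that orientation queries run in $O(1)$ time. For the normalization I would first apply Theorem~\ref{thm:directing_radial_systems} to replace the given radial system by a consistently \emph{directed} (counterclockwise) one in $O(n)$ triple tests and $O(n)$ extra time; from then on every $R(v)$ is a concrete counterclockwise cyclic order. The next step is to extract a single vertex $h_0$ lying on the convex hull of some realization, together with one of its two hull neighbours $v_0$. This is already delicate: a lone triple test does not by itself reveal the orientation of a triple, so one has to bootstrap purely from cyclic information, which I would do by an elimination procedure that maintains a candidate extreme vertex and a witness direction and discards one of two candidates with $O(1)$ work per step. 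Once $h_0$ and $v_0$ are known, the remaining hull vertices come essentially for free: in $R(h_0)$ the large reflex gap lies exactly between the two hull neighbours of $h_0$, so the other neighbour of $h_0$ is the cyclic neighbour of $v_0$ in $R(h_0)$ on the gap side; iterating, the hull successor of a hull vertex $u$ that follows the hull vertex $v$ is the cyclic neighbour of $v$ in $R(u)$. Each step costs $O(1)$, so one full convex hull, and in particular $|H|$, is obtained in $O(n)$ triple tests and $O(n)$ time.

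Now I would split on $|H|$. If $|H|\ge 4$ we are in Case~(1) of Theorem~\ref{thm:important_triangles_structure}, so $H$ is the unique important set and this stage is finished. If $|H|=3$ then, by Theorem~\ref{thm:important_triangles_structure}, every important set is a triangle, and unless there are at most two of them (in which case they are found directly) they all share a common element $i^\ast$. After identifying $i^\ast$ I would enumerate candidate inner important triangles through $i^\ast$ and test conditions~(1)--(3) of Lemma~\ref{lem:important_partition}. The forced three-way point partition of that lemma makes the surviving candidates form a ``fan'' around $i^\ast$ that can be swept in $O(n)$ triple tests in total, so all $O(n)$ important sets are listed within the budget.

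The data structure then consists of the directed radial system, the list of important sets, and $O(1)$ auxiliary data per vertex. A chirotope $\chi\in T(R)$ is specified by its convex hull $H_\chi$, which is one of the listed important sets. To answer $\chi(i,j,k)$, if some vertex of the triple, say $i$, lies in $H_\chi$, I reorder the triple cyclically so that $i$ is the apex and root $R(i)$ at the hull successor of $i$ (read off from $H_\chi$); then all of $[n]\setminus\{i\}$ occupies an angular range below $\pi$ around $i$ in this rooting, and $\chi(i,j,k)$ is just the sign of the rank difference of $j$ and $k$ in that rooted order, i.e.\ one triple test. If no vertex of the triple lies in $H_\chi$, I use that for an interior vertex $v$ the hull vertices appear in $R(v)$ in hull-cyclic order and split $R(v)$ into arcs each spanning less than $\pi$; I locate $j$ and $k$ among these arcs, apply the rank comparison again if they share an arc, and otherwise use a constant number of queries $\chi(v,h,h')$ with $h,h'\in H_\chi$ (each $O(1)$ by the previous case) to decide whether the counterclockwise arc from $j$ to $k$ around $v$ is shorter or longer than $\pi$, consulting in the one borderline sub-case the precomputed auxiliary data, namely the position in $R(v)$ of the antipode of the direction from $v$ toward each of the $O(1)$ hull vertices delimiting the arcs involved. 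Thus every query costs $O(1)$ triple tests; for Case~(2) the per-vertex auxiliary data is shared across the many important triangles, which differ only by local rotations about $i^\ast$, so the total size stays $O(n)$.

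The step I expect to be the main obstacle is the constant-time query for triples whose three vertices are all interior to the selected hull: resolving the ``near-$\pi$'' ambiguity with no per-pair precomputation, verifying that the case analysis is exhaustive, and making it work uniformly across all (up to $n-1$) important triangles while spending only $O(n)$ preprocessing in total. A secondary hurdle is the very first step, extracting an initial hull vertex together with a hull neighbour purely from the cyclic radial information in $O(n)$ triple tests, since individual triple tests do not expose orientations.
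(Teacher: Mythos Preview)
Your approach diverges from the paper's, and while one of your anticipated obstacles is a non-issue, the step you call secondary is where the real work lies.

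\textbf{The query procedure.} Your main anticipated obstacle---answering $\chi(i,j,k)$ in $O(1)$ when none of $i,j,k$ lies on $H_\chi$---dissolves once you observe that a single directed hull edge $ab$ suffices for \emph{every} triple, not just those touching the hull. Linearize $R(a)$ starting at $b$; every other point is to the left of $ab$, so this gives a total order $b=v_2,v_3,\dots,v_n$. For a triple $(v_i,v_j,v_k)$ with $i<j<k$ in this order, $(v_i,v_j,v_k)$ is clockwise iff $v_j$ lies inside the triangle $a\,v_i\,v_k$, which a single triple test in $R(v_i)$ decides. This is exactly the paper's Lemma~\ref{lem:sidedness}; no antipodes, no ``near-$\pi$'' case analysis, and no per-chirotope auxiliary data are needed. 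Your antipode scheme is also suspect on its own terms: the radial system carries no angular information, so antipode positions must be \emph{derived} from orientations, and you have not shown how to precompute them for all hull vertices across up to $n-1$ important triangles in $O(n)$ total.

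\textbf{Finding a hull edge.} What you flag as secondary is the genuine difficulty. Your ``elimination procedure that maintains a candidate extreme vertex and a witness direction'' is never specified, and it is unclear how to discard one of two candidates in $O(1)$ when a triple test by itself reveals no orientation. Relatedly, the phrase ``large reflex gap'' is misleading: the radial system encodes only cyclic order, so you cannot detect which adjacent pair in $R(h_0)$ spans the exterior wedge without already knowing a directed hull edge. The paper avoids this bootstrapping entirely: it never tries to single out one hull first. It runs an incremental construction maintaining a \emph{hull structure} $Z_k$ on a growing subset $P_k$, starting from a compact $4$-cycle among any five points. At all times $Z_k$ is one of three combinatorial types (a compact cycle; a compact cycle plus a covering top vertex; or two disjoint triangles), mirroring the trichotomy of Theorem~\ref{thm:important_triangles_structure}; each insertion costs amortized $O(1)$, so $Z_n$ is built in $O(n)$ triple tests, and a linear post-pass (Lemmas~\ref{lem:type_2_final} and~\ref{lem:type_3_final}) extracts exactly the important sets. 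This handles your $|H|=3$ branch uniformly: there is no need to ``identify $i^*$'' or to sweep candidate triangles while verifying condition~(3) of Lemma~\ref{lem:important_partition}, steps your outline leaves as sketches.
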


Hence, this is a means of reporting an explicit 
representation of $T(R)$ in $O(n)$ time, significantly 
improving Theorem~\ref{thm:urs_combined}.
We remark that we can show that $\Omega(n)$ triple tests
are necessary, as an 
adversary can use any unconsidered point in a suitable 
example to alter $|T(R)|$ (e.g., by using it to 
``destroy'' a top triangle as defined in 
Section~\ref{sec:obtaining_hull_edges}, see 
Proposition~\ref{prop:number_lower_bound}). 
In this sense, our result is optimal.

We assume that the input consists of the
permutations of an actual radial system.
If this is not the case, our algorithm might
fail, because it operates under an assumption that
is not satisfied, or it may compute a structure
that represents all chirotopes that are consistent with the
triple tests performed by the algorithm.
%
If we do not know that the set of permutations provided 
as input is indeed the radial system of an abstract 
order type, we show how to verify this in $O(n^2)$ time.
For some input~$R$, we define $T(R) = \emptyset$ if $R$
is not a valid radial system.
A straight-forward adversary argument shows that $\Omega(n^2)$ 
triple tests (i.e., reading practically the whole input) is 
necessary to verify whether $T(R)= \emptyset$.
(The adversary can exchange two unread elements in the 
radial ordering around a point, cf. 
Proposition~\ref{prop:existence_lower_bound}.) 

Finally, 
when considering the algorithmic complexity of 
determining the realizability of a radial system, 
the question arises whether there are constant-size 
non-realizable subsets in any non-realizable radial system.
If this were the case, one could check for realizability
by examining the induced radial systems up to a
certain constant size. Unfortunately, this
is not the case.
In Section~\ref{sec:minimal_unrealizable}, we show the 
following result.

\begin{theorem}
\label{thm:minimal_unrealizable}
For any $k \geq 3$, there exists a radial 
system $R_k$ over $n = 2k + 1$ elements that 
is not realizable as an abstract order type, 
but that becomes realizable as a point set 
order type when removing any point.
\end{theorem}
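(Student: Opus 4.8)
The plan is to construct, for each $k \geq 3$, an explicit radial system $R_k$ on $n = 2k+1$ elements that fails to be realizable but whose every $(n-1)$-element induced radial system is realizable even as a point set. The natural approach is to base $R_k$ on a \emph{minimal non-realizable abstract topological graph} or, more directly, on a known small non-stretchable configuration and then ``inflate'' it. Concretely, I would take a realizable building block of size roughly $k$ (say, points in convex position, whose radial system is trivially realizable and robust), attach to it a gadget of the remaining $\approx k$ points that collectively force a non-realizability obstruction, and arrange that the obstruction is a global parity/winding condition in which \emph{every} point participates. A clean way to enforce ``every point is essential'' is to make the non-realizability come from a cyclic constraint: the points are partitioned into pairs $\{a_i, b_i\}$ plus one apex $i^*$, the radial orderings around the pairs encode a system of $k$ local conditions arranged in a cycle, each pair contributing one link, so that the product of the $k$ conditions around the cycle is contradictory, but deleting any single point breaks the cycle and the remaining $k-1$ (or fewer) conditions are simultaneously satisfiable by an actual point set.

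The key steps, in order, are: (1) give the combinatorial description of $R_k$ — specify $R_k(i)$ for $i \in \{i^*\} \cup \{a_1,b_1,\dots,a_k,b_k\}$, using the convex-position backbone for the ``outer'' structure and prescribing the cyclic order around each $a_i, b_i$ so that the $i$-th local gadget behaves like a forced orientation flip relative to the apex; (2) prove non-realizability of $R_k$ — here I would argue by contradiction: if $\chi$ is an abstract order type with $R_\chi = R_k$, then Lemma~\ref{lem:important_partition} (or Theorem~\ref{thm:important_triangles_structure} together with the structure of important triangles) forces the convex hull and the relative position of each pair, and chasing these forced orientations around the cyclic gadget yields $+1 = -1$ for some triple; (3) prove realizability-after-deletion — for deleting $i^*$, exhibit a point set directly (the cycle of constraints now has a ``free'' endpoint and can be laid out greedily); for deleting one $a_j$ or $b_j$, show the broken cycle admits a point-set realization, again by an explicit construction where the $j$-th gadget is placed last with slack. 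Finally (4), verify $n = 2k+1$ and $k \geq 3$ are exactly the range where the construction is nondegenerate (for $k \leq 2$ there are too few hull configurations for the parity argument to bite, matching the $n \geq 5$ hypotheses in the cited theorems).

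The main obstacle I expect is step (2): proving that \emph{no} abstract order type — not merely no stretchable one — realizes $R_k$. It is easy to rule out point-set realizations by a stretchability/winding argument, but abstract order types are governed only by the local axioms (Knuth's CC-system axioms), so the contradiction must be derivable purely combinatorially from the prescribed radial orderings. The right tool is Theorem~\ref{thm:important_triangles_structure} and Lemma~\ref{lem:important_partition}: these pin down which triples can be the convex hull and impose the ``each pair in a part crosses the opposite edge'' condition, which I would leverage to propagate a forced sign around the cyclic gadget. Making this propagation airtight — showing that the local radial data around each consecutive pair $\{a_i,b_i\}$ genuinely forces the orientation of the relevant triple with the apex, with no escape route via an alternative important triangle — is the delicate part, and it is where the bulk of the case analysis (and the reason for the apex $i^*$, which kills all-but-one of the competing important triangles via Theorem~\ref{thm:important_triangles_structure}) will go.
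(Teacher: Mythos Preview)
Your proposal is a plan, not a proof: step~(1) is ``specify $R_k(i)$,'' but you never actually do so, and steps~(2) and~(3) are deferred to case analyses that are not carried out. Without an explicit construction there is nothing to verify, so as written there is a genuine gap.

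Beyond that, the route you sketch for step~(2) is harder than necessary, and you are right to flag it as the delicate part. Chasing forced orientations around a cycle via Lemma~\ref{lem:important_partition} and Theorem~\ref{thm:important_triangles_structure} might in principle be made to work, but the paper avoids the whole difficulty with a simpler idea you are missing: in a good drawing of $K_n$, the rotation system alone already determines which pairs of edges cross (the Kyn\v{c}l fact cited in the introduction). Hence one can read off, directly from $R_k$ and \emph{without assuming any realization}, the full set of edges that are uncrossed in every good drawing with that rotation system. Since the convex hull boundary of any realizing abstract order type must be a cycle of such edges, non-realizability follows immediately once no such cycle exists. The paper's $R_k$ is engineered to make this transparent: start from the $2k$-point \emph{double circle} (a convex $k$-gon $v_1,\dots,v_k$ with a point $w_i$ placed just inside each edge $v_iv_{i+1}$), which is an honest point set, and adjoin one further vertex $u$ whose prescribed slot in every radial ordering makes it appear to lie outside the $k$-gon from every side simultaneously. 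After discarding the short edges $v_iw_i$, $w_iv_{i+1}$ (which sit in the interior of the $k$-gon in any realization) and all edges that are crossed, the only surviving convex-hull candidates are the edges $uv_i$; these form a star, not a cycle, so there is no possible convex hull. This is a purely combinatorial argument from the radial orderings, exactly addressing the worry you raise about abstract order types, but with no parity chase and no important-triangle bookkeeping.

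Realizability after deletion then becomes three short pictures rather than a greedy layout with slack: removing $u$ leaves the double circle, already a point set; removing some $w_i$ lets one redraw the $k$-gon so that $u$ genuinely lies outside the edge $v_iv_{i+1}$; removing some $v_i$ reuses that drawing on a $(k{-}1)$-gon, with the two orphaned $w$-vertices placed outside.
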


\section{Directing Undirected Radial Systems}
Let $U$ be an undirected radial system on $n$ elements.
We show how to obtain a counterclockwise radial system 
$R$ from $U$ in $O(n)$ triple queries. Thus, for the remaining 
sections, we can assume that any radial system is oriented 
counterclockwise. 
We use two simple observations for $4$-sets of points of 
a set with counterclockwise radial system~$R$.
A \emph{swap} for an element $i$ inverts the radial ordering $R(i)$ 
(resulting in a new radial system).
Note that when restricting $R(i)$ to a 4-set containing $i$, since there 
are only two ways in which three vertices can be ordered around a fourth 
one, a swap corresponds to changing the order of two (arbitrary)
neighboring elements in $R(i)$ restricted to this 4-set.
We use two crucial observations.

\begin{observation}\label{obs_4tuple_p4}
For a $4$-set $\{i, j, k, l\}$, the 
counterclockwise radial ordering $R(l)$ is 
uniquely determined by the counterclockwise 
radial orderings $R(i)$, $R(j)$, and $R(k)$.
\end{observation}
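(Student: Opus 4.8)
The plan is to reduce the statement to a tiny algebraic identity about the chirotope of the four‑point subconfiguration. Since there are only two cyclic orders of three elements, the radial ordering $R(v)$, restricted to the three labels in $\{i,j,k,l\}\setminus\{v\}$, carries exactly one bit of information; write $s_v\in\{-1,+1\}$ for this bit, say $s_v=+1$ precisely when $R(v)$ agrees with the cyclic order of those three labels taken in increasing order (assume $i<j<k<l$; this costs nothing, as the conclusion will drop out of a fully symmetric identity). It therefore suffices to show that $s_l$ is a function of $(s_i,s_j,s_k)$.

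Next I would express each $s_v$ through the chirotope $\chi$ of the underlying abstract order type. The key fact is that if a ray rotating counterclockwise around $v$ meets $x,y,z$ in the cyclic order $(x,y,z)$, then the three counterclockwise gaps (from $x$ to $y$, from $y$ to $z$, from $z$ to $x$) are positive and sum to a full turn, so at most one of them is reflex; hence at least two of $\chi(v,x,y),\chi(v,y,z),\chi(v,z,x)$ equal $+1$. Thus $s_v=\mathrm{maj}\bigl(\chi(v,x,y),\chi(v,y,z),\chi(v,z,x)\bigr)$, where $\mathrm{maj}$ is the majority on $\{-1,+1\}$ and $\{x,y,z\}=\{i,j,k,l\}\setminus\{v\}$ listed increasingly (for abstract order types this compatibility of $R(v)$ with the triple orientations at $v$ is part of the definition of a generalized configuration of points). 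Putting $\alpha=\chi(i,j,k)$, $\beta=\chi(i,j,l)$, $\gamma=\chi(i,k,l)$, $\delta=\chi(j,k,l)$ and using that $\chi$ is an alternating function of its arguments, a short rewriting gives
\[ s_i=\mathrm{maj}(\alpha,\gamma,-\beta),\quad s_j=\mathrm{maj}(-\alpha,\beta,\delta),\quad s_k=\mathrm{maj}(\alpha,\gamma,-\delta),\quad s_l=\mathrm{maj}(\beta,\delta,-\gamma). \]

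Finally I would establish $s_i\,s_j\,s_k\,s_l=1$, which yields $s_l=s_is_js_k$ and hence the claim. Using the identity $\mathrm{maj}(p,q,r)=\tfrac12(p+q+r-pqr)$ valid for $p,q,r\in\{-1,+1\}$, this is a routine check over the $2^4$ sign patterns of $(\alpha,\beta,\gamma,\delta)$ (equivalently a short polynomial identity modulo $\alpha^2=\beta^2=\gamma^2=\delta^2=1$). I expect the only delicate part to be the bookkeeping in the middle step: carefully justifying the $\mathrm{maj}$ relation between the radial ordering at a vertex and the triple orientations there in the abstract (pseudoline) setting rather than the point‑set setting, and tracking the signs produced by the permutations of the arguments of $\chi$. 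The concluding identity itself is purely mechanical.
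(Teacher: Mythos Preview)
Your argument is correct. The paper states this as an \emph{observation} and offers no proof; the intended justification is presumably the trivial one: there are only finitely many abstract order types on four elements, so one can check by hand that fixing three of the four induced rotations pins down the fourth. Your approach is genuinely different and more structural: you encode each rotation as a sign $s_v$, express it via the chirotope as a majority of three orientation bits, and then verify the identity $s_is_js_ks_l=1$ over all $2^4$ sign patterns of $(\alpha,\beta,\gamma,\delta)$. I checked your sign bookkeeping and the identity; both are correct. A small bonus of your route is that the product identity $s_is_js_ks_l=1$ is exactly the content of the paper's companion Observation~\ref{obs_4tuple_swaps} (an even number of swaps between any two realizable $4$-point radial systems), so you get both observations from one computation. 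The point you flag as delicate---that the majority relation between the rotation at $v$ and the three orientations $\chi(v,\cdot,\cdot)$ holds in the pseudoline setting---is fine: at each point of a generalized configuration the incident pseudolines are locally straight, so the ``at most one reflex gap among three'' argument goes through verbatim.
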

\begin{observation}\label{obs_4tuple_swaps}
For a $4$-set $\{i, j, k, l\}$, consider two counterclockwise radial 
systems $R_A$ and $R_B$ that are realized by abstract order types.
Then the counterclockwise radial orderings of $R_A(i)$, $R_A(j)$,
$R_A(k)$, and $R_A(l)$ differ from the counterclockwise radial 
orderings of $R_B(i)$, $R_B(j)$, $R_B(k)$, and $R_B(l)$ 
by an even number of swaps.
\end{observation}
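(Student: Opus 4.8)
The plan is to prove this by a continuous-motion argument, moving one point at a time. It is elementary that every abstract order type on four elements is realizable as a point set, so I would first fix point-set realizations $P_A$ and $P_B$ of $R_A$ and $R_B$ on the labels $\{i,j,k,l\}$. Next I would construct a continuous motion from $P_A$ to $P_B$ in the plane that moves one point at a time (first $p_i$, then $p_j$, then $p_k$, then $p_l$, each along a path while the other three are held fixed), chosen generically so that at every moment the four points are in general position, except at finitely many times, and at each such time exactly one triple of the four points is collinear with the three collinear points distinct. Such generic paths exist: while one point moves, the ``bad set'' it must cross is just the three lines spanned by the (fixed) other three points, together with finitely many special points (the other three points themselves and the pairwise intersections of those three lines); a generic path avoids the special points and crosses each line transversally, hence only finitely often, and never at an intersection of two of them.

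The key local step is to show that each collinearity event changes the radial system by exactly two swaps. At an event where the triple $\{a,b,c\}$ passes through collinearity, the three points are distinct and collinear at the critical time, so one of them, say $b$, lies strictly between the other two. Just before and just after the event, the directions from $p_a$ to $p_b$ and from $p_a$ to $p_c$ are nearly equal and exchange their angular order, which is one swap for $a$; symmetrically there is one swap for $c$. Around $b$, the rays towards $p_a$ and $p_c$ are near-antipodal but never coincide, and the ray towards the fourth point $p_d$ does not meet either of them near the event (that would require a second collinear triple, which we excluded), so the cyclic order of the three rays around $b$ is unchanged; similarly, around $d$ the rays towards $p_a,p_b,p_c$ stay pairwise distinct near the event, so the cyclic order around $d$ does not change either. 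Hence exactly two swaps occur at each event.

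Combining the two parts, the full motion transforms $R_A$ into $R_B$ through a sequence of $2N$ swaps, where $N$ is the total number of collinearity events, so $R_A$ and $R_B$ differ by an even number of swaps. Equivalently, since each of the four radial orderings has only two possible states, for each element the number of swaps it undergoes during the motion has a well-defined parity, and the sum of these four parities equals the parity of $2N$, namely $0$.

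The main obstacle I anticipate is the careful local analysis at a collinearity event: making the ``which point is in the middle'' picture precise, and verifying that no spurious swap occurs at the two uninvolved points — in particular ruling out, via the one-triple-at-a-time genericity, a simultaneous second collinearity that could introduce an extra swap. The genericity of the motion itself is routine transversality but must be stated carefully enough to ensure finitely many events, each a single collinear triple of three distinct points. An alternative would be to enumerate the finitely many abstract order types on four labeled elements and verify the claim by inspection; I would favor the motion argument as the main route, since it is uniform and short.
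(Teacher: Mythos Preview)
Your continuous-motion argument is correct. The local analysis at a collinearity event is sound: if $b$ is strictly between $a$ and $c$ on the critical line, then from each of $a$ and $c$ the other two collinear points lie along the same ray and swap as the crossing is traversed transversally, while from $b$ they lie along opposite rays (so no coincidence, hence no swap), and from the fourth point $d$ any swap would require a second simultaneous collinear triple, excluded by genericity. So exactly two swaps occur per event, and the total is even.

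One small point to tighten: with your ``move one point at a time'' scheme, the hybrid configurations at the handoffs (e.g., $(p_i^B,p_j^A,p_k^A,p_l^A)$) need not be in general position a priori. This is easily repaired by first perturbing $P_B$ within its order type so that all three intermediate hybrid $4$-tuples are generic, or by replacing the phased motion with a single generic path in $(\mathbb{R}^2)^4$; you already flag that the genericity must be ``stated carefully,'' so this is a matter of phrasing rather than a real gap.

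As for comparison with the paper: the paper states this as an \emph{Observation} and gives no proof. The intended justification is almost certainly the finite check you mention as an alternative --- there are only finitely many (in fact eight, up to global reversal) realizable counterclockwise radial systems on four labeled points, and one verifies directly that they all have the same parity. Your motion argument is a genuinely different route: it is uniform and explains \emph{why} the parity is invariant (each degeneration flips exactly two rotations), whereas the enumeration is quicker but opaque. Either is adequate for an observation of this size.
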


\begin{theorem}\label{thm:directing_radial_systems}
Let $n \geq 5$, 
let $\chi$ be an abstract order type on $n$ elements, and 
let $U\sim R_\chi$.
Then $U$ uniquely determines $R_\chi$ (up to complete reversal), 
and we can compute $R_\chi$ from $U$ in $O(n)$ triple queries
and $O(n)$ total time.
\end{theorem}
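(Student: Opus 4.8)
The plan is to fix an arbitrary element, say $n$, use the three orderings $R_\chi(1),R_\chi(2),R_\chi(3)$ to anchor the correct directions, and then propagate consistency to all remaining elements one at a time, each time spending only $O(1)$ triple tests.

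First I would observe that by Observation~\ref{obs_4tuple_p4}, for any $4$-set $\{i,j,k,l\}$, the correctly directed ordering $R_\chi(l)$ restricted to this set is determined by the correctly directed $R_\chi(i),R_\chi(j),R_\chi(k)$ restricted to it. Since $U\sim R_\chi$, for each index $m$ we have $U(m)=R_\chi(m)$ or $U(m)$ is the reverse; our job is to decide, for each $m$, which of the two it is (globally up to reversing all of them). Fix the base triple $\{1,2,3\}$. There are $2^3=8$ ways to direct these three orderings, i.e.\ $4$ ways up to global reversal. By Observation~\ref{obs_4tuple_swaps}, applied to the $4$-set $\{1,2,3,m\}$ for any further element $m$, a valid radial system differs from the true one by an even number of swaps on that $4$-set; equivalently, the parity of the number of ``mis-directed'' orderings among $\{R(1),R(2),R(3),R(m)\}$ is fixed. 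This constrains which of the $4$ choices for directing $\{1,2,3\}$ are consistent, and in fact forces a \emph{unique} choice up to global reversal: I would argue that among the $4$ candidate directions of $(R(1),R(2),R(3))$, exactly one can be completed to a consistent directed radial system for each additional element by the parity constraint together with Observation~\ref{obs_4tuple_p4}, and the others lead to a contradiction with some $m$ (here $n\ge 5$ guarantees there is at least one such $m$, and examining two distinct such elements pins down the base). This establishes the uniqueness statement.

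For the algorithm: after determining the direction of $R(1),R(2),R(3)$ by a constant number of triple tests against a few witness elements, I would process each remaining $m\in\{4,\dots,n\}$ in turn. For each $m$, use Observation~\ref{obs_4tuple_p4} on the $4$-set $\{1,2,3,m\}$: knowing the true directions of $R(1),R(2),R(3)$ restricted to this set, the observation tells us the true direction of $R(m)$ restricted to $\{1,2,3\}$. Comparing this against $U(m)$ restricted to $\{1,2,3\}$ — which costs $O(1)$ triple tests, since restricting to a $3$-element subset of the cyclic ordering around $m$ is decided by the relative order of $1,2,3$ in $U(m)$ — reveals whether $U(m)$ is already correctly directed or must be swapped. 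Doing this for all $m$ costs $O(n)$ triple tests and $O(n)$ additional time, matching the claimed bound.

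The main obstacle I expect is the base case: rigorously showing that the parity obstruction of Observation~\ref{obs_4tuple_swaps}, combined with realizability, actually selects a \emph{unique} consistent orientation of the seed triple up to global reversal, rather than leaving two candidates. I would handle this by checking against \emph{two} distinct further elements $m,m'$ (possible since $n\ge5$) and arguing that any two of the four candidate seed-orientations are distinguished by the induced constraints on $\{1,2,3,m\}$ or $\{1,2,3,m'\}$ — intuitively, the four candidates correspond to the four ``sign patterns'' on $\{R(1),R(2),R(3)\}$ modulo the global flip, and the true order type realizes exactly one of them, detectable in $O(1)$ tests. The remaining details — that restricting a cyclic radial ordering to a $3$-subset is an $O(1)$ triple-test operation, and that Observation~\ref{obs_4tuple_p4} can be evaluated in $O(1)$ once the three orderings are restricted — are routine bookkeeping.
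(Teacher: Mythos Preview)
Your propagation step (using Observation~\ref{obs_4tuple_p4} on each $\{1,2,3,m\}$ once the seed is fixed) is exactly what the paper does, so that part is fine.

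The gap is in your base case. You propose to distinguish the four candidate orientations of $(R(1),R(2),R(3))$ by looking only at $4$-sets of the form $\{1,2,3,m\}$ and $\{1,2,3,m'\}$. This does not work: on four points, \emph{every} even-parity assignment of directions is realizable. Concretely, starting from the convex quadrilateral $1,2,3,4$ in CCW order (all four orderings ``positive''), flipping $R(2)$ and $R(4)$ yields precisely the radial system of the triangle $1,2,4$ with $3$ in its interior. So if you take the wrong seed where only $R(2)$ is reversed, then for every witness $m$ you can flip $R(m)$ to satisfy both the parity of Observation~\ref{obs_4tuple_swaps} and the realizability implicit in Observation~\ref{obs_4tuple_p4} on $\{1,2,3,m\}$. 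No single $4$-set containing all of $\{1,2,3\}$ will ever reject this wrong seed.

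What the paper does instead is check realizability on the full $5$-set $\{1,\dots,5\}$, and its uniqueness argument hinges on applying Observation~\ref{obs_4tuple_swaps} to $\{1,2,4,5\}$---a $4$-set that \emph{omits} one seed element. If only $R(2)$ is flipped, then Observation~\ref{obs_4tuple_swaps} on $\{1,2,3,4\}$ and $\{1,2,3,5\}$ forces $R(4)$ and $R(5)$ to flip as well, and now $\{1,2,4,5\}$ carries three flips, an odd number, which is the contradiction. Your argument can be repaired by inserting exactly this kind of cross-check on a $4$-set not containing the whole seed; without it the base case does not go through.
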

\begin{proof}
We choose the direction of $R(1)$ arbitrarily.
Now, there are four possible choices for the directions of 
$R(2)$ and $R(3)$.  For each choice, we consider the resulting 
induced counterclockwise radial system on $\{1, \dots, 5\}$, and
we  check whether it is realizable as an abstract order type.
This can be done in constant time, as there is only a constant 
number of abstract order types on five elements.
If no choice yields a realizable counterclockwise radial system,
$U$ cannot be realized, and we stop.

Next, we argue that at most one choice can lead to a realizable
counterclockwise radial system.
For the sake of contradiction, suppose that two different 
choices for $R(2)$ and $R(3)$ lead to 
realizable counterclockwise radial systems $R_A$ and $R_B$
on $\{1, \dots, 5\}$. 
Let us assume first 
that only $R(2)$ is inverted in~$R_B$.
Then, by applying Observation~\ref{obs_4tuple_swaps} on 
$\{1,2,3,4\}$ and on $\{1,2,3,5\}$, we see that also 
$R(4)$ and $R(5)$ are inverted.
But then for the $4$-set $\{1,2,4,5\}$, we have 
three swaps between $R_A$ and $R_B$, a 
contradiction to Observation~\ref{obs_4tuple_swaps}.
The same argument rules out that only $R(3)$ is inverted, so assume 
that both $R(2)$ and $R(3)$ are inverted.
Then, again by Observation~\ref{obs_4tuple_swaps}, 
the directions for $R(4)$ and $R(5)$ remain unchanged.
So, for the  $4$-set $\{1,2,4,5\}$, we have only one 
swap between $R_A$ and $R_B$, which is a contradiction.

Since the directions of $\{1, \dots, 5\}$ are now fixed, we use 
Observation~\ref{obs_4tuple_p4} to fix the direction of 
all $i = 6, \dots, n$ by considering the 
$4$-set $\{1, 2, 3, i\}$.
Thus, we conclude that if $U$ is realized by an 
abstract order type, we can obtain the 
unique counterclockwise radial system $R$ in~$O(n)$ time.
\end{proof}

Note that Theorem~\ref{thm:directing_radial_systems} actually holds 
for good drawings and not only for radial systems of abstract order types.

\section{Obtaining Chirotopes from Radial Systems}
\label{sec:obtaining_chirotopes_from_radial_systems}

Let $R$ be a given system of permutations. Our goal is to obtain the set 
$T(R)$ of abstract order types that realize $R$, if $R$ represents
a valid radial system.
Otherwise, $T(R)$ is empty.
Our algorithm (conceptually) constructs 
a good drawing of a plane graph on the sphere by adding vertices 
successively while maintaining the faces that are candidates for 
the convex hull. We will see that this actually boils down 
to maintaining at most two sequences of vertices plus one special vertex.
Throughout, we assume that the radial orderings 
indeed correspond to the radial system of an abstract order type.
If any of our assumptions does not hold, we know that there 
is no abstract order type for the given set of radial orderings.
If~$R$ can be realized as an abstract order type, then the plane 
graph is the subdrawing of a drawing weakly 
isomorphic~(cf.~Kyn\v{c}l\cite{kyncl_realizability}) to 
the complete graph on any generalized configuration of 
points that realizes that abstract order type.

Let $C=\seq{c_0, \dots, c_{m-1}}$ be a plane
cycle with $m$ vertices contained in a good drawing $\Gamma$ 
of the complete graph that realizes a radial system $R$.
(We think of $C$ as counterclockwise with the \emph{interior} 
to its left.) 
Let $c_i$ be a vertex of $C$, and let $v$
be a vertex not in $\{c_{i-1}, c_i, c_{i+1}\}$.\footnote{We 
consider all indices modulo the length of the corresponding sequence.}
We say that the edge $c_i v$ \emph{emanates to the outside} of 
$C$ if $v$ lies between $c_{i-1}$ and $c_{i+1}$
in $R(c_i)$ in counterclockwise order.
Otherwise, $c_i v$ \emph{emanates to the inside}.
Let $w$ be a vertex that does not belong to $C$.
If $c w$ emanates to the outside for all $c\in C$, 
then $w$ \emph{covers} $C$.
If $c w$ emanates to the inside for all $c \in C$, then $w$ 
\emph{lies inside} $C$; otherwise, $w$ \emph{lies outside} $C$.
If $w$ neither is inside $C$ nor covers $C$, then $\Gamma$
restricted to $C$ plus all edges from vertices of~$C$ to~$w$ is not plane.
A cycle $C=\seq{c_0, \dots , c_{m-1}}, m \geq 4$, in
$\Gamma$ is \emph{compact} 
if $C$ is plane and, for each $c_i$, the edges $c_i c_{i+2}, c_i c_{i+3}, 
\dots, c_i c_{i-2}$ all emanate to the inside.

\begin{observation}\label{obs:compact_cycle_convex}
Let $R$ be a radial system and let $C$ be a compact
cycle in $R$.
Let $P$ be a generalized configuration of points 
realizing $R$. Then the  vertices of $C$ are in 
convex position in $P$.
\end{observation}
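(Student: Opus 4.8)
The plan is to argue by contradiction: suppose the vertices of the compact cycle $C = \seq{c_0,\dots,c_{m-1}}$ are \emph{not} in convex position in the generalized configuration $P$. Since $C$ is a plane cycle in the good drawing induced by $P$ (and every edge of this drawing is a pseudo-segment), a standard fact about pseudo-line arrangements is that the vertices of $P$ lying on the convex hull of $\chi$ form a compact cycle, and more generally a cycle is compact if and only if it bounds a region that is ``locally convex'' at every vertex. Concretely, compactness says exactly that at each $c_i$, every chord $c_ic_j$ with $j \notin \{i-1,i,i+1\}$ emanates to the inside of $C$; geometrically this means that the (pseudo-)wedge at $c_i$ between the two incident cycle edges $c_ic_{i-1}$ and $c_ic_{i+1}$ that opens to the inside contains all the other points, i.e. $C$ turns the same way at every vertex. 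So the first step is to make precise the correspondence between ``emanates to the inside'' (defined purely combinatorially via the radial ordering $R(c_i)$) and the geometric side of the pseudo-line $c_{i-1}c_i$ resp. $c_ic_{i+1}$.

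Next I would isolate the obstruction. If $C$ is not convex in $P$, then there is some vertex $c_i$ that is a ``reflex'' vertex of the polygon bounded by $C$: the vertex $c_i$ lies inside the convex hull of $C \setminus \{c_i\}$, or equivalently, the pseudo-segments $c_{i-1}c_i$ and $c_ic_{i+1}$, together with the rest of the cycle, bend the wrong way at $c_i$. I would then pick an appropriate other vertex $c_j$ of $C$ and show that the chord $c_ic_j$ must emanate to the \emph{outside} of $C$ at $c_i$ — contradicting compactness. The natural choice is a vertex $c_j$ that is a convex-hull vertex of the sub-configuration $\{c_0,\dots,c_{m-1}\}$, or more simply, to observe that if $c_i$ is reflex then the chord $c_ic_{i+2}$ (a ``short'' chord skipping one vertex) already crosses the cycle edge $c_ic_{i+1}$'s ``outward'' half at $c_i$, forcing $c_ic_{i+2}$ to emanate outside. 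One has to be a little careful here because $C$ may be a cycle of combinatorial length $\geq 4$ in a larger drawing, not the full hull, so I would work entirely inside the pseudo-line subarrangement on $\{c_0,\dots,c_{m-1}\}$, which is itself a generalized configuration of points, and invoke the elementary fact for such configurations: a simple pseudo-polygon all of whose vertices ``turn left'' (equivalently: every diagonal goes to the inside) is convex.

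The main obstacle I anticipate is the bookkeeping that translates the purely local combinatorial definition of ``emanates to the inside'' into a genuine geometric statement about which side of a pseudo-line a point lies on, since in a good drawing the cyclic order $R(c_i)$ of edges around $c_i$ does not by itself tell you the side of a non-incident pseudo-line — one needs that the edges $c_ic_{i-1}, c_ic_{i+1}$ are the cycle edges and that $C$ is plane, so that ``outside the wedge $(c_{i-1},c_{i+1})$'' really is the outside of the polygon near $c_i$. Once that dictionary is set up, the convexity conclusion for the sub-configuration follows from the classical characterization of convex position via empty wedges / the monotonicity of turning, and it lifts back to $P$ because a chord $c_ic_j$ emanating to the inside in $P$ emanates to the inside in the induced sub-configuration and vice versa (the radial ordering of $C$-vertices around $c_i$ restricted to $C \cup \{c_j\}$ is the same whether computed in $P$ or in the sub-configuration). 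I would also remark that the hypothesis $m \geq 4$ is what makes ``compact'' nonvacuous (there exist chords to test) and rules out the degenerate triangle case handled separately elsewhere in the paper.
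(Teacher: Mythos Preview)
The paper states this result as an Observation and supplies no proof, so there is nothing to compare against. Your overall strategy---assume the vertices of $C$ are not in convex position in $P$, locate a reflex vertex of the simple pseudo-polygon $C$, and produce a diagonal that emanates to the outside---is the right shape. But the specific step you single out is false.

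You claim that if $c_i$ is reflex then the short chord $c_ic_{i+2}$ emanates to the outside at $c_i$. At a reflex vertex the ``inside'' arc (counterclockwise from $c_{i+1}$ to $c_{i-1}$ in $R(c_i)$) has angular measure \emph{greater} than~$\pi$, so it can easily swallow every other vertex of $C$. A concrete counterexample is the arrowhead quadrilateral $c_0=(0,2)$, $c_1=(-1,0)$, $c_2=(0,1)$, $c_3=(1,0)$, traversed in this order: it is plane and counterclockwise, $c_2$ is the unique reflex vertex, and the only chord from $c_2$, namely $c_2c_0$, emanates to the \emph{inside} at $c_2$. Compactness does not fail at the reflex vertex.

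It fails at a convex neighbour. The clean statement is: if two consecutive triples $(c_{i-1},c_i,c_{i+1})$ and $(c_i,c_{i+1},c_{i+2})$ have opposite orientations, then compactness is violated at whichever of $c_i$, $c_{i+1}$ carries the counterclockwise triple. For instance, if $(c_{i-1},c_i,c_{i+1})$ is counterclockwise and $(c_i,c_{i+1},c_{i+2})$ is clockwise, then $c_{i-1}$ lies to the left of the pseudo-line $c_i\to c_{i+1}$ while $c_{i+2}$ lies to its right; since the inside arc at $c_i$ runs counterclockwise from $c_{i+1}$ to $c_{i-1}$ and hence stays entirely in the left half-plane, $c_{i+2}$ is outside it, and $c_ic_{i+2}$ emanates to the outside at $c_i$. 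Thus compactness forces all consecutive triples of $C$ to have the same orientation, and a plane cycle with this property has its vertices in convex position. With this correction your argument goes through; the discussion of restricting to the sub-configuration and of the case $m\ge 4$ is fine but not needed once you argue via consecutive triples directly.
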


\begin{lemma}\label{lem:important_triangles_not_crossed}
Consider a radial system~$R$,
and let $\Gamma$ be a good drawing of the complete graph 
whose rotation system corresponds to~$R$. Let $S$ be an important
set of $R$. 
Then, the vertices of $S$ define a cell in $\Gamma$, and 
no edge of $\Gamma$ crosses an edge of this cell.
Furthermore, no element of
$S$ lies inside a compact cycle in~$\Gamma$.
\end{lemma}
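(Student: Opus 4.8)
\textbf{Proof plan for Lemma~\ref{lem:important_triangles_not_crossed}.}
The plan is to reduce everything to the point-set picture, using the fact (cited from Pilz and Welzl, and recalled at the end of the introduction) that every abstract order type with radial system $R$ is realizable as an actual point set, and then invoke Theorem~\ref{thm:urs_combined} together with Lemma~\ref{lem:important_partition}. By the definition of ``important set'', $S$ is the convex hull of some chirotope $\chi \in T(R)$; fix a generalized configuration of points $Q$ realizing $\chi$, and note that $\Gamma$ and the pseudo-linear drawing induced by $Q$ have the same rotation system $R$. By Gioan's theorem (also quoted in the introduction), since the $Q$-drawing is pseudo-linear, $\Gamma$ is pseudo-linear as well, and since in a good drawing of $K_n$ the rotation system determines all crossing pairs of edges, $\Gamma$ and the $Q$-drawing have the \emph{same} set of crossing edge pairs. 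Hence it suffices to prove both claims for the pseudo-linear drawing coming from $Q$.

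For the first claim, in $Q$ the vertices of $S$ are exactly the convex-hull vertices of $\chi$, so the convex hull cycle on $S$ bounds the unbounded cell of the subdrawing of $\Gamma$ restricted to the $\binom{|S|}{2}$ edges among $S$ — hence $S$ defines a cell. To see that no edge of $\Gamma$ crosses a hull edge of this cell: a hull edge $p_ap_b$ of $Q$ has all remaining points on one side (the pseudo-line through $p_a,p_b$ is a supporting pseudo-line of the point set), so no pseudo-line $vw$ with $v,w \in Q\setminus\{p_a,p_b\}$ can separate $p_a$ from $p_b$; therefore the arc $vw$ does not cross the arc $p_ap_b$ in the pseudo-linear drawing. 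Transporting back through the crossing-pair equality gives that no edge of $\Gamma$ crosses an edge of the cell defined by $S$.

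For the second claim — that no element of $S$ lies inside a compact cycle $C$ — suppose for contradiction that $s \in S$ lies inside a compact cycle $C$ in $\Gamma$. By Observation~\ref{obs:compact_cycle_convex}, the vertices of $C$ are in convex position in $Q$, and ``$s$ lies inside $C$'' translates (via the rotation-system/inside definition) to: $p_s$ lies in the convex polygon spanned by the $C$-vertices in $Q$. But $p_s$ is a convex-hull vertex of the \emph{entire} point set $Q$, so it cannot lie in the convex hull of any subset of the other points — contradiction. The only subtlety is to make sure the combinatorial notion ``lies inside $C$'' for the good drawing $\Gamma$ matches the geometric ``inside the convex polygon'' in $Q$; this follows because $C$ being compact forces (again by Observation~\ref{obs:compact_cycle_convex}) the restriction of the $Q$-drawing to $C$ to be a convex polygon with its topological interior on the side prescribed by the orientation of $C$, and all edges from $s$ to $C$-vertices emanating to the inside is exactly the statement that $p_s$ is on the interior side of every edge of that polygon.

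\textbf{Main obstacle.}
The routine bookkeeping (indices mod $m$, the two possible directions of each radial ordering) is harmless; the one genuine point requiring care is the translation layer between the purely combinatorial objects — rotation systems, the ``emanates to the inside/outside'' relation, compact cycles — and their geometric meaning in a realizing point set $Q$. Concretely, I expect the crux to be justifying that the cell structure and crossing pattern of the abstract good drawing $\Gamma$ (which a priori is only specified up to weak isomorphism by its rotation system) coincide with those of the concrete pseudo-linear drawing from $Q$; this is exactly where Gioan's result and the ``rotation system determines crossings in $K_n$'' fact are doing the real work, and the proof should make that dependence explicit rather than treating $\Gamma$ and the $Q$-drawing as interchangeable without comment.
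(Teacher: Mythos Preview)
Your argument for the first claim (the $S$-cell has uncrossed edges) matches the paper's: realize $R$ by a generalized configuration $Q$ whose convex hull is $S$, note that convex-hull edges are uncrossed there, and transfer to $\Gamma$ using that the rotation system of a good drawing of $K_n$ determines all crossing pairs. For the second claim your route is different from the paper's and more direct. The paper argues topologically: it restricts the drawing to $S \cup C$, then deletes vertices of $H_S$ one by one and watches the unbounded cell grow until a diagonal crossing of $C$ appears on its boundary, contradicting Observation~\ref{obs:compact_cycle_convex}. You instead use Observation~\ref{obs:compact_cycle_convex} once to put the $C$-vertices in convex position in $Q$ and then observe that ``$s$ lies inside $C$'' (every edge $c_is$ emanates to the inside) places $p_s$ in the interior angular cone at each polygon vertex, hence in the polygon itself---impossible for an extreme point of $Q$. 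This is correct and more elementary than the paper's cell-tracking argument. Two small cleanups: you need neither the Pilz--Welzl realizability result (a generalized configuration for $\chi$ exists by the definition of abstract order type; you never actually use a straight-line point set) nor Gioan's theorem (the only transfer principle you invoke is that the rotation system fixes the crossing pairs, which is the weaker fact already cited from Kyn\v{c}l).
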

\begin{proof}
Let $P$ be a generalized configuration of points
that realizes $R$ and whose convex hull
$H_S$ is a cycle with vertex set $S$.
Let $G$ be the embedding of the complete graph
on $n$ vertices that is obtained by
taking the pseudoline arrangement for $P$
and keeping only the parts of the pseudolines
between the vertices.
Two edges in the embedding $G$ cross if and only if 
they cross in~$\Gamma$. In particular, $H_S$ is not crossed in $\Gamma$,
and there is no edge in $\Gamma$ that emanates to the outside of~$H_S$.

Now, suppose $\Gamma$ contains a compact cycle $C$ such 
that some vertices of $H_S$ are in the interior of $C$ ($C$ and $H_S$ may share some vertices).
\fig{fig_important_triangles_not_crossed} shows an example where $C$ and $H_S$ share two vertices.
Recall that we defined the ``interior'' of $C$ via its direction, so a point inside $C$ does not need to be drawn 
such that it is separated from the unbounded cell by~$C$.
We remove all vertices not in $S$ or $C$ from~$G$.
Note that the curves forming the diagonals of $C$ cross in a bounded
cell of the resulting embedded graph that is in the interior of both $H_S$ and $C$, as all elements of the graph are in the interior of $H_S$ and the crossings are in the interior of $C$ by assumption.
By removing some vertices from $H_S$ that are not vertices of $C$, the cell outside of $H_S$ (the unbounded face 
of $G$) grows.
Eventually, the boundary of this cell contains a crossing~$p$ from the interior of $C$, as $p$ is not separated from the growing cell by~$C$ (i.e., $H_S$ is inside~$C$).
This means that the crossing is on the convex hull boundary of a
subset of~$P$ containing $C$, a contradiction to
Observation~\ref{obs:compact_cycle_convex} and the assumption that
$C$ is a compact cycle in $\Gamma$.

\begin{figure}
\centering
\includegraphics{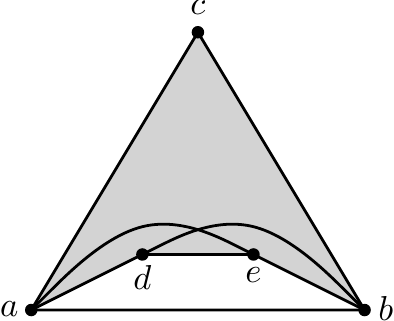}
\caption{Illustration of the proof of 
Lemma~\ref{lem:important_triangles_not_crossed} where 
$H_S=\seq{a,b,c}$ and $C=\seq{a,d,e,b}$. The shaded region is 
to the inside of both $H_S$ and $C$. (Note the direction of 
the cycles by which, e.g., $c$ is to the inside of $C$.) After removing $c$, the crossing will be on 
the unbounded face.}
\label{fig_important_triangles_not_crossed}
\end{figure}
\end{proof}

Lemma~\ref{lem:important_triangles_not_crossed} is closely 
related to Lemma~\ref{lem:important_partition} (see also 
Theorem~3.2 in Balko et al.\cite{monotone}).
\subsection{Abstract Order Types Determined by Convex 
Hull Edges}\label{sec:rotation_and_hull_edge}

Consider a radial system~$R$, and let $\chi$ be an
abstract order type that realizes $R$. Let $ab$ be a directed edge 
on the convex hull of $\chi$ so that all other points of $\chi$ 
are to the left of~$ab$.
It is easy to see that the edge $ab$ and $R$ together uniquely 
determine the convex hull of our abstract order type.
Hence, there is only one abstract order type realizing~$R$ where
$ab$ has this property.
Wismath's approach\cite{wismath} obtains a point set from 
its local sequences and its $x$-coordinates.
However, it can be observed that an abstract order type of a 
radial system is already determined by the relative horizontal 
order of the points, and the actual values of the $x$-coordinates 
are irrelevant (given that the abstract order type is the order 
type of a point set).\footnote{Wismath actually constructs 
a point set. Without being given the $x$-coordinates, deciding 
whether there exists such a point set would be equivalent to 
deciding stretchability of a pseudo-line 
arrangement\cite{semispaces}, an $\exists \mathbb{R}$-complete 
problem\cite{mnev} (see also Schaefer\cite{schaefer}).
But recall that this work is concerned only with abstract order types.}
When thinking of $a$ as a point at vertical infinity, the radial 
ordering around~$a$ gives a horizontal order of the remaining points.
We re-state the following well-known fact.

\begin{lemma}\label{lem:sidedness}
Given a radial system $R$ and a directed convex hull edge $ab$ of 
an abstract order type $\chi$ realizing $R$, 
we can compute for each triple $(i,j,k) \in [n]^3$
the orientation $\chi(i,j,k)$ 
in $O(1)$ triple queries and hence in constant time.
\end{lemma}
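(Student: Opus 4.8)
The plan is to turn every orientation query into a constant number of triple tests, using the radial ordering $R(a)$ as a stand-in for a ``horizontal order'' of the points, exactly in the spirit of placing $a$ at vertical infinity. Fix a generalized configuration of points $P$ whose abstract order type is $\chi$. Since $ab$ is a convex hull edge with every other point to the left of $\vec{ab}$, a ray that rotates around $a$ counterclockwise starting in the direction $\vec{ab}$ meets $b$ first and all remaining points afterwards within less than a half-turn. Reading $R(a)$ counterclockwise from $b$ therefore yields a genuine \emph{linear} order $\prec$ on $[n]\setminus\{a\}$ with minimum $b$. Comparing two elements $i,j\neq a$ under $\prec$ is a single triple test on $\{b,i,j\}$ in $R(a)$, so we can sort any fixed triple by $\prec$ using $O(1)$ triple tests.

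Next I would split into two cases. If $a$ is one of the three arguments, say the query is $\chi(a,i,j)$, then $\chi(a,i,j)=+1$ iff $\vec{aj}$ lies counterclockwise from $\vec{ai}$ by less than $180^\circ$; because all directions $\vec{ap}$ ($p\neq a$) sit in one half-turn around $a$, this happens iff $i\prec j$, so one comparison suffices. For a triple $\{i,j,k\}$ avoiding $a$, I first relabel so that $i\prec j\prec k$, and then prove the identity
$$\chi(i,j,k)=+1 \iff \text{the counterclockwise cyclic order of }\{a,j,k\}\text{ in }R(i)\text{ is }(a,j,k).$$
The right-hand side is decided by one triple test on $R(i)$, so in total we again use only $O(1)$ triple tests, and hence $O(1)$ time, which gives the lemma.

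To establish that identity I would argue inside $P$. From $i\prec j$ and $i\prec k$ the sweep argument around $a$ gives $\chi(a,i,j)=\chi(a,i,k)=+1$, which says (cyclically) that $a$ lies to the left of both $\vec{ij}$ and $\vec{ik}$. Consequently, around $i$, the two directions $\vec{ij}$ and $\vec{ik}$ both fall in the open half-turn whose counterclockwise endpoint is $\vec{ia}$. Hence the counterclockwise cyclic order of $\{a,j,k\}$ about $i$ is $(a,j,k)$ if $\vec{ij}$ precedes $\vec{ik}$ and $(a,k,j)$ otherwise; and, the two being within a half-turn of each other, $\vec{ij}$ precedes $\vec{ik}$ counterclockwise exactly when $k$ is to the left of $\vec{ij}$, i.e.\ when $\chi(i,j,k)=+1$.

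The routine-but-delicate part, and the only place I expect friction, is the ``half-turn'' bookkeeping: making precise that the hull-edge hypothesis confines all directions $\vec{ap}$ to a common arc of less than $180^\circ$, and then that the derived inequalities $\chi(a,i,j)=\chi(a,i,k)=+1$ keep $\vec{ij}$ and $\vec{ik}$ inside the half-turn clockwise-adjacent to $\vec{ia}$, so that ``precedes counterclockwise'' translates to the sign of $\chi(i,j,k)$ with no off-by-$180^\circ$ error. Everything else is immediate, and nothing beyond the existence of the realization $P$ is used; one could alternatively phrase the whole thing as ``$a$ is the point at vertical infinity'' (Wismath's viewpoint), but the direct argument above sidesteps having to justify that projective move for abstract order types.
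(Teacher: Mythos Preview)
Your proposal is correct and follows essentially the same approach as the paper's proof: both linearize $R(a)$ starting at $b$, dispose of triples containing $a$ via this order, and for a triple $i\prec j\prec k$ avoiding $a$ decide $\chi(i,j,k)$ by a single triple test on $\{a,j,k\}$ in $R(i)$. The paper phrases this last step geometrically (whether $v_j$ lies in the triangle $av_iv_k$) before translating to the rotation at $v_i$, while you argue the half-turn bookkeeping directly, but the content is the same.
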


\begin{proof}
We know that every point except for $a$ and $b$ is to the left of $ab$.
Let $\seq{v_2 = b, \dots, v_n}$ be the linear order obtained from 
$R(a)$, starting with~$b$.
If $a$ is involved in a sidedness query, this order already 
determines the orientation of a triple.
Otherwise, let $v_i, v_j, v_k$ be a triple of points, with $i < j < k$.
Then, if $v_j$ is contained in the triangle $a v_i v_k$, the 
triple $(v_i,v_j,v_k)$ is oriented clockwise; otherwise, it is 
oriented counterclockwise. This can be checked with $O(1)$ triple
queries to $R$.
In the radial ordering around~$v_i$, this corresponds to $v_j$
being between $v_k$ and $a$, or $v_k$ being between $v_j$ and $a$.
\end{proof}

\subsection{Obtaining Hull Edges}\label{sec:obtaining_hull_edges}

Let $P$ be a generalized configuration of $n$ points, and let 
$R$ be the radial system of the abstract order type~$\chi$ of~$P$.
The goal is to find a set of $O(n)$ \emph{candidate edges} that may 
appear on the convex hull of a realization of $R$ (i.e., the edges 
of the convex hull of~$P$, if there is no other realization of~$R$, 
or the union of the edges of all important triangles).
Our algorithm incrementally builds a ``hull structure'' 
(defined below) for $P$.
Before step $k$, we have a current set 
$P_{k-1} \subseteq P$ of $k-1$ points and a hull structure 
$Z_{k-1}$ that represents the candidate edges for $P_{k-1}$.
The algorithm selects a point $p_{k} \in P \setminus P_{k-1}$, adds 
it to $P_{k-1}$, and updates $Z_{k-1}$.
A careful choice of $p_{k}$ allows for updates in 
constant amortized time.

We begin with the description of the hull structure.
Let $P_k \subseteq P$ be a set of $k$ points $(k \geq 4)$.
The $k$th \emph{hull structure} $Z_k$ is an abstract representation 
of a graph with vertex set $V_k \subseteq P_k$ that is embedded on 
the sphere. That is, $Z_k$ stores the incidences between the 
vertices, edges, and faces, but it does not assign coordinates to 
the points.
Hull structures come in three types (see \fig{fig_types}), which 
correspond in one-to-one-fashion to the three possible configurations 
of important sets in Theorem~\ref{thm:important_triangles_structure}:

\noindent
\textbf{Type~1:}
$Z_k$ is a compact cycle (recall that, therefore, $R$ 
restricted to $V_k$ represents a convex $|V_k|$-gon with $|V_k| \geq 4$).

\noindent
\textbf{Type~2:}
$Z_k$ consists of a compact cycle $C$ and a \emph{top vertex} $t$ 
that covers $C$.
The $3$-cycles incident to $t$ are called \emph{top triangles}.
A top triangle $\tau$ is marked either \emph{unexamined}, 
\emph{dirty}, or \emph{empty}.
Initially, $\tau$ is unexamined.
Later, $\tau$ is marked either dirty or empty.
``Dirty'' indicates that $\tau$ cannot contain a convex hull 
vertex in its interior.
``Empty'' means that $\tau$ is a candidate for an 
important triangle.  We orient each top triangle so that 
all other vertices of $Z_k$ are to the exterior.

\noindent
\textbf{Type~3:}
$Z_k$ is the union of two vertex-disjoint $3$-cycles $T_1$ 
and $T_2$, called \emph{independent triangles}.
$T_1$ and $T_2$ are directed so that each has all of $P_k$ 
to the interior. Moreover, the edges between the vertices of 
$T_1$ and $T_2$ appear as in \fig{fig_types}.

Let $R_k$ be the restriction of $R$ to $P_k$.
We maintain the following invariant:
(a) if $R_k$ has exactly one important set of size at least four, $Z_k$ 
is of Type~1 and represents the counterclockwise convex hull boundary;
(b) if $R_k$ has two disjoint important triangles, $Z_k$ is of Type~3, 
and the important triangles are exactly the independent triangles;
(c) if $R_k$ has several important triangles with a common vertex, $Z_k$ 
is of Type~2 and all important triangles appear as top triangles;
(d) if $R_k$ has exactly one important triangle, $Z_k$ is of Type~2 or 3, 
with the important triangle as a top triangle (Type~2) or as an 
independent triangle (Type~3).
Furthermore, if $Z_k$ is of Type~2, no convex hull vertex for $P$ 
lies inside a dirty triangle, and each point of $P_k$ lies either in 
$C$ or in a dirty triangle.

\begin{figure}[t]
\centering
\includegraphics[scale=0.8]{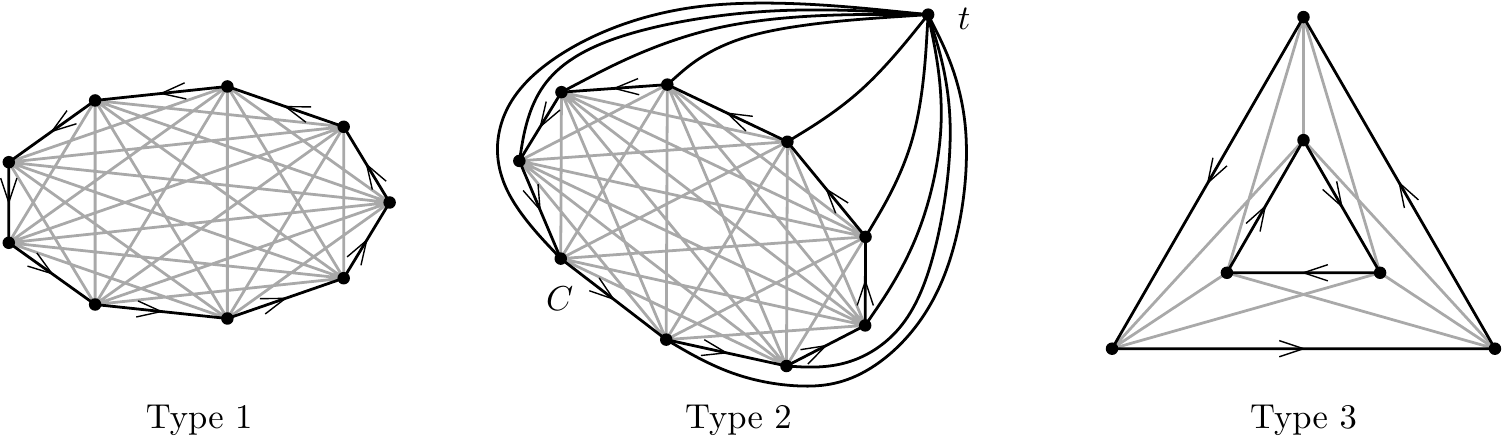}
\caption{The three different types of hull structures.}
\label{fig_types}
\end{figure}

Initially, we pick $5$ arbitrary points from $P$.
Among those, there must be a compact $4$-cycle $Z_4$ (see, e.g., 
Figure~4 in Aichholzer et al.\cite{urs}), which can be found in constant time.
Our initial hull structure $Z_4$ is of Type~1, with vertex set $V_4 = P_4$.
We next describe the insertion step for each possible type.
For the running time analysis, we subdivide the algorithm into 
\emph{phases}.
Each phase is of Type~1, 2, or 3, and a new phase begins each time 
the type of the hull structure changes.

\subsubsection{Type~1}
We pick an arbitrary vertex $c$ of $Z_{k-1}$, and we inspect
$R(c)$ to determine in 
constant time whether $c$ has an incident edge emanating 
to the outside of $Z_{k-1}$.
If not, the edges incident to $c$ in $Z_{k-1}$ are on the convex 
hull of $P$, and we are done; see below.
Otherwise, let $p_{k} \in P \setminus{P_{k-1}}$ be the endpoint 
of such an edge. We set $P_{k} = P_{k-1} \cup \{p_{k}\}$, and 
we walk along $Z_{k-1}$ (starting at $c$) to find the interval $I$ 
of vertices for which the edge to $p_{k}$ emanates to the outside
(this can be checked in $O(1)$ triple queries).
There are two cases: (i) if $I = Z_{k-1}$ (i.e., $p_{k}$ covers 
$Z_{k-1}$), then $Z_{k}$ is the hull structure of Type~2 with 
compact cycle $Z_{k-1}$, top vertex $p_{k}$, and all top 
triangles marked unexamined; (ii) if $I = \seq{c_i, \dots, c_j}$ 
is a proper subinterval of $Z_{k-1}$, the next hull structure 
$Z_{k}$ is of Type~1 with vertex sequence 
$\seq{p_k, c_j, \dots, c_i}$ (since $R$ is realizable, we
have $c_j \neq c_i$). 
\begin{lemma}\label{lem:inv_type1}
We either obtain an edge from which the convex hull of $P$ can be 
determined uniquely, or $Z_k$ is a valid hull structure for $P_k$.
\end{lemma}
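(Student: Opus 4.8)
The plan is to verify the two cases of the Type~1 insertion step separately, showing that each one either terminates correctly or produces a valid hull structure for $P_k$ satisfying the invariant. Throughout, I would use the fact (Observation~\ref{obs:compact_cycle_convex}) that a compact cycle corresponds to a convex polygon in any realization, together with Lemma~\ref{lem:important_triangles_not_crossed}, which tells us that important sets are never crossed and no element of an important set lies inside a compact cycle.

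First I would handle the termination case. Suppose the chosen vertex $c$ of the compact cycle $Z_{k-1}$ has no incident edge emanating to the outside. Since $Z_{k-1}$ is compact, all diagonals from $c$ to non-neighbors on the cycle emanate to the inside, so the only remaining edges from $c$ are to $P \setminus V_{k-1}$; by assumption these all emanate to the inside as well. I would argue that this forces the two cycle edges at $c$ to be convex hull edges of $P$: in a realization $P$, if $c$ were not on the convex hull, some point would be ``beyond'' the line through $c$'s cycle-neighbors on the outside, giving an edge from $c$ emanating outside — contradiction. More carefully, in a realization, the edge $c_{i-1}c_{i+1}$ (with $c=c_i$) together with the fact that every other point is on the inside side at $c_i$ means $c_{i-1}, c_i, c_{i+1}$ are consecutive on the convex hull of $P$. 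Then Lemma~\ref{lem:sidedness} applies: knowing one convex hull edge and $R$ determines $\chi$ uniquely, so $T(R)$ is a single order type and we are done. (One subtlety: this argument shows the convex hull is determined, but I should double-check that no important-triangle configuration can hide here; since we have a convex hull edge and the whole order type is pinned down, $|H| \neq 3$ or the hull is genuinely unique, so there is exactly one realization — consistent with case (1)/(i) of Theorem~\ref{thm:important_triangles_structure}.)

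Next I would handle the case where $c$ does have an outward edge, to $p_k$. Walking along $Z_{k-1}$ from $c$, the set of vertices whose edge to $p_k$ emanates outward forms a contiguous interval $I$ — this uses compactness of $Z_{k-1}$ and planarity: in a realization, $p_k$ sees a contiguous arc of the convex polygon from outside. If $I$ is all of $Z_{k-1}$, then $p_k$ covers $Z_{k-1}$, and by definition $Z_k$ is Type~2 with compact cycle $Z_{k-1}$ and top vertex $p_k$; I would verify the Type~2 invariant (d): all top triangles start unexamined, and at this moment $P_k$ consists of $V_{k-1}$ (on the cycle $C$) plus $p_k$ (the top vertex), so the ``each point of $P_k$ lies in $C$ or a dirty triangle'' clause is vacuous — there is not yet any point properly inside. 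If $I = \seq{c_i,\dots,c_j}$ is a proper subinterval, I claim $\seq{p_k, c_j, \dots, c_i}$ is again a compact cycle. Planarity is clear since $p_k$'s edges to $c_i,\dots,c_j$ bound the outer face together with the arc $c_i \dots c_j$; compactness requires checking that all diagonals of the new cycle emanate inward. For diagonals among $c_i,\dots,c_j$ this follows from compactness of $Z_{k-1}$ (going inward there still means inward now, as the outer side only shrank); for diagonals from $p_k$ to $c_{i+1},\dots,c_{j-1}$, these emanate inward precisely because those are the vertices strictly inside the interval $I$, and the boundary edges $p_k c_i$ and $p_k c_j$ are the ones delimiting the outward region — so from $p_k$'s perspective everything between $c_i$ and $c_j$ (other than $c_i, c_j$ themselves, which are cycle-neighbors) is on the inside. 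I would also note $c_i \neq c_j$ since $R$ is realizable (a single outward vertex is impossible for a realizable system when $|V_{k-1}| \geq 4$).

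The main obstacle I anticipate is the compactness verification in case (ii): one must rule out that some diagonal of the new cycle $\seq{p_k, c_j, \dots, c_i}$ emanates to the outside, which would mean $Z_k$ is not a valid Type~1 structure. The delicate point is diagonals of the form $p_k c_\ell$ for $i < \ell < j$ versus the newly-created ``diagonals'' that were boundary edges of $Z_{k-1}$, namely $c_i c_j$-type chords — actually the chord $c_{i-1}c_{i+1}$-style reasoning. The clean way to dispatch this is to pass to a realization $P$ of $R_k$ (which exists since $R$ is realizable, hence so is any induced radial system on a subset — invoking the remark before ``Our results'' and standard heredity), observe that $V_{k-1}$ is in convex position there, that $p_k$ lies outside the polygon $V_{k-1}$ in the region bounded by the arc $c_i \dots c_j$ (because those are exactly the edges visible from $p_k$), and therefore the points $p_k, c_j, c_{j-1}, \dots, c_{i+1}, c_i$ are precisely the convex hull of $P_{k-1} \cup \{p_k\} \cap (\text{relevant region})$ — more simply, they are the convex hull of $P_k$, since the points $c_{i+1}, \dots, c_{j-1}$ become interior. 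Convexity of this polygon in a realization then gives compactness of $Z_k$ as a purely combinatorial consequence via Observation~\ref{obs:compact_cycle_convex}'s converse direction (a convex polygon's cycle is compact). This realization-based argument sidesteps the combinatorial case analysis and is the route I would take to finish.
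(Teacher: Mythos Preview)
Your termination case and non-covering case are essentially the paper's argument (the paper dispatches the non-covering case with ``simple geometry,'' which is exactly your pass-to-a-realization move). Two issues remain, one minor and one a genuine gap.

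First, the minor one: in the non-covering case your combinatorial discussion has the indices inverted. The new cycle $\seq{p_k, c_j, \dots, c_i}$ traverses the arc \emph{complementary} to the interval $I$, so $c_{i+1}, \dots, c_{j-1}$ are \emph{removed} from the cycle; they are not vertices of $Z_k$, and ``diagonals from $p_k$ to $c_{i+1},\dots,c_{j-1}$'' are not diagonals of $Z_k$ at all. Your realization-based argument at the end is fine and supersedes this, so just drop the combinatorial attempt.

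The genuine gap is in the covering case. You verify only the bookkeeping clause of the Type~2 invariant (and even there, your claim that $P_k = V_{k-1} \cup \{p_k\}$ is false: points deleted from the cycle in earlier Type~1 steps are in $P_{k-1} \setminus V_{k-1}$; they do lie inside $C$, so the clause still holds, but not for the reason you give). What you do not address at all is the substantive part of the invariant: that every important set of $R_k$ appears as a top triangle of $Z_k$. This is the content of the lemma in this case, and it requires an argument. The paper's line is: since $p_k$ covers the convex hull $Z_{k-1}$ of $P_{k-1}$, the point $p_k$ is extreme in every realization of $R_k$; hence every important set of $P_k$ contains $p_k$, and its remaining two edges must be candidate edges of $P_{k-1}$, i.e., edges of $Z_{k-1}$. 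That forces every important set to be a triangle $\seq{p_k, c_{i+1}, c_i}$, i.e., a top triangle. Without this step you have not shown that $Z_k$ is a valid hull structure.
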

\begin{proof}
By the invariant, $Z_{k-1}$ is the convex hull of~$P_{k-1}$.
If $c$ has no incident edges emanating to the outside, the edges 
$e$ and $f$ incident to $c$ in $Z_{k-1}$ lie on the convex hull of 
$P$, as $e$ and $f$ lie on the convex hull of $P_{k-1}$ and all 
points of $P$ lie in the wedge spanned by them.
Furthermore, $P_{k-1}$, and hence $P$, has at least one point 
outside the triangle spanned by $e$ and $f$, so $P$ has a 
convex hull with at least four vertices.
Then the convex hull of $P$ is unique, by
Theorem~\ref{thm:important_triangles_structure}.

Otherwise, if $p_{k}$ does not cover $Z_{k-1}$, the lemma 
follows from simple geometry. If $p_{k}$ covers $Z_{k-1}$, 
then $p_k$ must lie on every convex hull of $P_k$.
Moreover, a candidate edge of $P_k$ is either a candidate 
edge of $P_{k-1}$ or connects $p_k$ to an extreme point of $P_{k-1}$.
Thus, all important sets of $P_k$ are top triangles of $Z_{k}$.
\end{proof}

\begin{lemma}\label{lem:time_type1}
A Type~1 phase that begins with a hull structure of size $m$ 
and lasts for $\ell$ insertions takes $O(m + \ell)$ time.
Furthermore, the next phase (if any) is of Type~2, beginning 
with a hull structure of size at most $m + \ell$.
\end{lemma}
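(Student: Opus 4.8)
The plan is to analyze the work done in a single Type~1 phase and charge it either to the size $m$ of the starting hull structure or to the $\ell$ insertions performed in the phase. First I would recall the structure of a Type~1 insertion step: we pick an arbitrary vertex $c$ of the current hull structure, inspect $R(c)$ in $O(1)$ triple queries to see whether some incident edge emanates to the outside, and if so we walk along the current cycle starting at $c$ to determine the interval $I$ of vertices whose edge to the new point $p_k$ emanates to the outside. The walk is the only potentially expensive part, so the key is to bound the total length of all these walks over the whole phase.

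The main observation is that the walk that identifies $I = \seq{c_i, \dots, c_j}$ traverses exactly the vertices of $I$ (plus the two boundary vertices $c_{i-1}$ and $c_{j+1}$ at which the walk detects that the interval has ended), and in case (ii) all of the interior vertices of $I$ — that is, $c_{i+1}, \dots, c_{j-1}$ — are \emph{removed} from the hull structure when we replace the vertex sequence by $\seq{p_k, c_j, \dots, c_i}$. So I would set up a potential/accounting argument: each vertex, once it appears in a Type~1 hull structure, is removed at most once during the phase, and each step adds exactly one vertex ($p_k$). Hence the total number of vertices ever removed during the phase is at most $m + \ell$ (the $m$ initial vertices plus the $\ell$ added), and the sum of the lengths of all the walks is $O(m + \ell)$, since each walk of length $|I|$ either removes $|I|-2$ vertices (case (ii)) or ends the phase (case (i), where $I$ is the whole cycle). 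Adding the $O(1)$ per-step overhead for inspecting $R(c)$ gives $O(m + \ell)$ total. The ``done'' case (where $c$ has no outward edge and the convex hull of $P$ is determined) also costs only $O(1)$, so it does not affect the bound.

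For the second sentence of the lemma, I would argue that a Type~1 phase can only end in one of two ways: either we discover a convex hull edge of $P$ and terminate (no next phase), or some inserted point $p_k$ covers the entire current cycle $Z_{k-1}$, which by the insertion rule case~(i) produces a hull structure of Type~2 with $Z_{k-1}$ as its compact cycle and $p_k$ as the top vertex. This is the only transition out of Type~1, because in case~(ii) the hull structure stays Type~1, and in the ``done'' case the algorithm halts. Since the phase started with size $m$ and each of the $\ell$ insertions changes the size by a bounded amount but the cycle is only ever shrunk or grown by at most the added vertex per step in case (ii), the compact cycle handed to the Type~2 phase has size at most $m + \ell$; more precisely it has size exactly that of $Z_{k-1}$ at the moment of the covering insertion, which is at most $m$ plus the number of vertices added so far, hence at most $m + \ell$. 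The main obstacle I anticipate is making the amortization fully precise — in particular confirming that a vertex deleted from the Type~1 cycle in case~(ii) never reappears later in the same phase (it cannot, since once $p_k$ covers a subinterval, the deleted vertices are strictly inside the new cycle and inside remains inside), so that the ``each vertex removed at most once'' bookkeeping is valid — but this follows from the geometry underlying Observation~\ref{obs:compact_cycle_convex} and the realizability assumption.
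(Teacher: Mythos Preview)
Your proposal is correct and follows essentially the same amortization as the paper: vertices deleted from the Type~1 cycle never reappear, so the total number of distinct vertices over the phase is at most $m+\ell$; the cost of each non-covering insertion is charged to the vertices it deletes (plus $O(1)$ overhead), and the single covering walk that terminates the phase costs at most the current cycle size, which is again $O(m+\ell)$. Your treatment of the second sentence---observing that the only exit from Type~1 is via case~(i), yielding a Type~2 structure consisting of $Z_{k-1}$ plus the top vertex $p_k$, of total size at most $m+\ell$---is exactly what the paper has in mind, and your added justification for why deleted vertices cannot reappear is a welcome elaboration of a point the paper simply asserts.
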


\begin{proof}
Once a vertex is deleted from the hull structure, it does not reappear, 
so the total number of distinct vertices is at most $m + \ell$.
The time to insert non-covering vertices can be charged to the deleted 
vertices. It takes $O(m + \ell + m)$ time to identify a covering 
vertex, but then the phase is over.
\end{proof}

\subsubsection{Type~2}
We begin with a simple observation.

\begin{observation}\label{obs_covering_order}
Let $Z_{k-1}$ be a Type~2 hull structure with compact 
cycle $C$ and top vertex $t$. The vertices of $C$ appear in 
their circular order in the clockwise radial ordering around $t$.
\end{observation}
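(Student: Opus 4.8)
The plan is to argue geometrically in a generalized configuration of points $P$ realizing $R$, using the fact that $t$ covers $C$. First I would fix a realization $P$ of $R_{k-1}$ in which the cycle $C$ is the convex hull of the point set $P_{k-1}$ (this exists because $C$ is compact, so by Observation~\ref{obs:compact_cycle_convex} its vertices are in convex position, and since $t$ covers $C$ every other point lies inside the region bounded by $C$, hence $C$ is exactly the convex hull boundary). The claim is then purely about how the convex hull vertices of a point set look from an external point $t$: from a point $t$ lying outside a convex polygon $C$, the vertices of $C$ are encountered in their circular order when sweeping a ray around $t$ — but with the orientation reversed relative to the counterclockwise order of $C$, because $C$'s interior lies on the side of $C$ facing $t$. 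Concretely, the rays from $t$ through the hull vertices of $C$ sweep out an angular interval, and the natural order in which they appear is the clockwise order around $t$ corresponds to the counterclockwise order $\seq{c_0,\dots,c_{m-1}}$ along $C$.

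The key steps, in order, are: (1) invoke Observation~\ref{obs:compact_cycle_convex} and the covering hypothesis to realize $C$ as the convex hull of $P_{k-1}$ with $t$ somewhere inside the disk bounded by $C$ — wait, $t$ covers $C$ means all edges $tc$ emanate to the \emph{outside} of $C$, so in fact $t$ is \emph{outside} $C$; I would re-examine the sign conventions here and conclude that in a realization the cycle $C$ together with $t$ forms a point set whose convex hull still has all of $C$ on it (or a subchain of it), with $t$ outside, and crucially $t$ sees the chain of $C$ in convex-position order. (2) Use a standard fact: for a convex polygon $C$ with vertices in counterclockwise order and a point $t$ strictly outside, the angular order of $c_0,\dots,c_{m-1}$ as seen from $t$, taken clockwise around $t$, is a cyclic shift of $\seq{c_0,\dots,c_{m-1}}$. (3) Translate the geometric statement back into the combinatorial language of radial systems: the clockwise radial ordering around $t$, restricted to the vertices of $C$, equals the circular sequence $\seq{c_0,\dots,c_{m-1}}$ up to rotation. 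Since the statement and all intervening lemmas hold for generalized configurations / good drawings, not just stretchable point sets, I would phrase step (2) in terms of the pseudoline arrangement so that no stretchability is used — the convex hull of a generalized configuration is still a cycle of directed arcs, and the "angular order from an external point" argument goes through verbatim with pseudolines in place of lines.

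The main obstacle I expect is bookkeeping the orientation conventions correctly: the paper defines $C$ as counterclockwise with interior to the left, defines "emanates to the outside" via betweenness in the \emph{counterclockwise} radial ordering around $c_i$, and the conclusion is stated in terms of the \emph{clockwise} radial ordering around $t$. Getting the clockwise-vs-counterclockwise flip exactly right — and verifying that "$t$ covers $C$" really does place $t$ outside $C$ in every realization rather than merely making the drawing non-plane — is the delicate point; everything else is the routine observation that an external viewpoint sees a convex chain in order. I would also need to handle the mild subtlety that a priori the convex hull of $P_{k-1}\cup\{t\}$ might drop some vertices of $C$, but since $C$ is compact and $t$ covers it, one checks that no vertex of $C$ can be hidden behind another from $t$'s viewpoint, so all of $C$ survives on the hull and the "in order" conclusion is about the full cycle $C$.
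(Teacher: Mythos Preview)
Your step~(2) invokes a ``standard fact'' that is actually false: it is \emph{not} true that from an arbitrary point $t$ outside a convex polygon $C$, the vertices of $C$ appear in their cyclic order in the angular order around~$t$. Take a regular hexagon $C=\seq{c_0,\dots,c_5}$ and a point $t$ far outside near the vertex~$c_0$; then the near vertex $c_0$ and the diametrically opposite far vertex $c_3$ lie at almost the same angle from $t$, and a small perturbation makes the angular order around $t$ something like $c_1,c_2,c_3,c_0,c_4,c_5$, which is neither a cyclic shift of $\seq{c_0,\dots,c_5}$ nor of its reverse. Your attempted repair at the end --- that ``no vertex of $C$ can be hidden behind another from $t$'s viewpoint, so all of $C$ survives on the hull'' --- is also false: when $t$ covers $C$, typically only \emph{two} consecutive vertices $c_j,c_{j+1}$ of $C$ lie on the convex hull of $C\cup\{t\}$, and all of $c_{j+2},\dots,c_{j-1}$ sit inside the triangle $tc_jc_{j+1}$. (Your own small example with $|C|=4$ and a triangular hull already exhibits this.)

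What makes the observation true is precisely the covering hypothesis, and you have to use it substantively, not just mention it. One clean route: the paper notes (right after defining ``covers'') that if $t$ covers $C$ then the subdrawing consisting of $C$ together with all spokes $tc_i$ is plane. In a plane drawing of this wheel-like graph the faces incident to $t$ are exactly the triangles $t c_i c_{i+1}$, so the spokes around $t$ occur in the same cyclic order as the rim vertices along $C$; the direction is pinned down because compactness plus covering give, at each $c_i$, the counterclockwise rotation $c_{i-1},t,c_{i+1}$ among these three neighbours, which forces the counterclockwise order around $t$ to be $\dots,c_{i+1},c_i,c_{i-1},\dots$, i.e., the clockwise order matches $\seq{c_0,\dots,c_{m-1}}$. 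Alternatively, in a realization one checks that ``$t$ covers $C$'' is equivalent to $t$ lying on the exterior side of \emph{every} edge of $C$ except exactly one --- a far stronger condition than merely ``$t$ outside $C$'' --- and from that specific wedge the angular-order claim is immediate.
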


We need to identify a suitable vertex $p_k$ to insert.
For this, we select an unexamined top triangle 
$\tau = \seq{t, c_{i+1}, c_i}$, and we test whether $c_i$ has 
an incident edge that emanates to the inside of $\tau$.
If yes, let $v \in P \setminus P_{k-1}$ be an endpoint of 
such an edge and check whether $c_i v$ crosses the edge $t c_{i+1}$.
If so, the 
vertices of $\tau$ lie inside a convex quadrilateral, and
by Lemma~\ref{lem:important_triangles_not_crossed} there 
is no convex hull vertex inside~$\tau$.
We mark $\tau$ dirty and proceed to the next unexamined triangle.
If not, we set $p_{k} = v$ and $P_k = P_{k-1} \cup \{ p_k \}$.
If $c_i$ has no incident edge emanating to the inside of $\tau$, 
we perform the analogous steps on $c_{i+1}$.
If $c_{i+1}$ also has no such incident edge, we mark $\tau$ empty
and proceed to the next unexamined triangle.
(The empty triangle $\tau$ might still be crossed by an
edge incident to $t$.)

\begin{lemma}\label{lem:new_p_k_type2}
We either find a new vertex $p_k$, or all candidate edges
for $P$ lie in~$Z_{k-1}$. Furthermore, no dirty triangle contains
a possible convex hull vertex of~$P$.
\end{lemma}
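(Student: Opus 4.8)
The plan is to argue that the search procedure described just above exhausts all unexamined top triangles, and that when it terminates without finding a new vertex $p_k$, every top triangle has been classified as either dirty or empty; I will then show that in that situation every candidate edge of $P$ already appears in $Z_{k-1}$, and that no dirty triangle can contain a convex hull vertex of $P$.

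First I would observe that the procedure only fails to return a new $p_k$ when, for each unexamined top triangle $\tau = \seq{t, c_{i+1}, c_i}$, we found no ``inward'' edge at either $c_i$ or $c_{i+1}$ that we could promote (either because no such edge exists, in which case $\tau$ is marked empty, or because the edge we found crosses $tc_{i+1}$, in which case $\tau$ is marked dirty). So at termination, every top triangle is dirty or empty. For a dirty triangle $\tau$, by construction we witnessed a crossing of one of its edges, say $c_i v$ crosses $tc_{i+1}$, so the four points $t, c_i, c_{i+1}, v$ of $P$ lie in convex position with the crossing pair being the two diagonals; hence $\tau$ is contained in a convex quadrilateral of $P$, and Lemma~\ref{lem:important_triangles_not_crossed} (together with Lemma~\ref{lem:important_partition}) tells us no vertex of the convex hull of $P$ lies inside $\tau$. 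This proves the ``furthermore'' part directly and will also be reused below.

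Next I would show that the candidate edges of $P$ are all in $Z_{k-1}$. By the Type~2 invariant (c)/(d), $Z_{k-1}$ consists of the compact cycle $C$, covered by the top vertex $t$; every point of $P_{k-1}$ lies either inside $C$ or inside a dirty triangle, and $t$ lies on every convex hull of $P_{k-1}$ (hence, a fortiori, of every $P_j \supseteq P_{k-1}$, since $t$ covers $C$ and covering is monotone). A candidate edge of $P$ is either a convex hull edge of $P$ itself (when $T(R)$ has a single important set) or an edge of an important triangle of $R$; in either case it is an edge of some important set $S$ with $t \in S$ (Theorem~\ref{thm:important_triangles_structure}: when there are several important triangles they share a common vertex, which here is forced to be $t$ since $t$ covers $C$ and the vertices of $C$ cannot be the shared apex; when there is a single important set of size $\ge 4$, it is a compact cycle covered by nothing, which must contain $C$ and be $C$ itself plus possibly new points — handled by the Type~1 transition, not here; when there is a single important triangle, it likewise must be a top triangle by the invariant). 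The other two vertices of such an $S$, being the ``base'' of a top triangle, must be consecutive vertices $c_i, c_{i+1}$ of $C$ — otherwise the diagonal $c_i c_{i+1}$ would cross an edge of $C$, contradicting Observation~\ref{obs:compact_cycle_convex} and Lemma~\ref{lem:important_triangles_not_crossed}. Hence $S = \seq{t, c_{i+1}, c_i}$ is literally one of the top triangles of $Z_{k-1}$, and all of its edges are edges of $Z_{k-1}$, as desired; moreover, by the previous paragraph such an $S$ cannot be a dirty triangle, so it is one of the empty ones, consistent with the marking.

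The main obstacle I anticipate is the case analysis pinning down \emph{why} an important set of $P$ must be a top triangle of the current structure rather than something straddling $C$ and the outside region: this is where I lean hardest on Lemma~\ref{lem:important_triangles_not_crossed} (``no element of an important set lies inside a compact cycle'') to force the important set to have at most the two base vertices on $C$ and its apex at $t$, and on the invariant guaranteeing that no new point of $P \setminus P_{k-1}$ can create an important set disjoint from $\{t\} \cup C$ once $t$ covers $C$. I would make sure to handle carefully the subtlety flagged in the statement — that an empty triangle $\tau$ may still be crossed by an edge incident to $t$ — by noting this crossing does not place any vertex of $P$ inside $\tau$ (the crossing is of $t$'s edge with $c_i c_{i+1}$, i.e.\ a point of $P$ on the far side), so it does not affect whether $\tau$ is a candidate important triangle.
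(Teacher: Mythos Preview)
Your treatment of the ``furthermore'' clause (dirty triangles lie in a compact $4$-cycle, hence contain no convex hull vertex) matches the paper's argument. For the main claim, however, your route diverges from the paper's and leaves a real gap.

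The paper argues by contradiction: suppose some candidate edge $e$ for $P$ is not in $Z_{k-1}$. If both endpoints of $e$ lie in $P_{k-1}$, then $e$ is a candidate edge for $P_{k-1}$ as well (a convex hull edge of $P$ stays a convex hull edge of any subset containing its endpoints), so by the invariant $e$ is already in $Z_{k-1}$. Hence $e$ has an endpoint $v \in P \setminus P_{k-1}$. Now one locates $v$: it is not inside $C$ (compact cycle), not inside a dirty triangle (just shown), and not inside an empty triangle---because if $v$ sat inside $\tau = \seq{t, c_{i+1}, c_i}$, then the edge $c_i v$ would emanate to the inside of $\tau$ and the algorithm would have returned $v$ (or marked $\tau$ dirty) rather than marking $\tau$ empty. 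Since no unexamined triangles remain, this is a contradiction.

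You instead try to argue directly that every important set $S$ of $R$ is literally one of the top triangles. The step ``the other two vertices of such an $S$ \dots\ must be consecutive vertices $c_i, c_{i+1}$ of $C$'' presupposes that those vertices lie on $C$ at all, which is precisely the point at issue: a priori a vertex of $S$ could be some $v \in P \setminus P_{k-1}$ sitting inside a top triangle. Lemma~\ref{lem:important_triangles_not_crossed} only excludes the interior of the compact cycle $C$, not the interiors of the top triangles (which have only three vertices). Your appeal to an ``invariant guaranteeing that no new point of $P \setminus P_{k-1}$ can create an important set disjoint from $\{t\} \cup C$'' is not something the stated invariant provides, and in any case the worry is an important set that \emph{contains} $t$ together with such a new point. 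The missing ingredient is exactly the one-line observation the paper uses: a triangle marked \emph{empty} genuinely contains no point of $P$, because any such point would have been detected as an inward-emanating edge at $c_i$ or $c_{i+1}$. Once you add that, your argument can be completed---but at that point it has become the paper's argument, just reorganized.
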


\begin{proof}
All top triangles that are newly marked dirty are part 
of a compact $4$-cycle, so no dirty top triangle can contain
a possible convex hull vertex of $P$ in its interior.
Suppose we fail to find a vertex $p_k$ and there is a
candidate edge $e$ for $P$ not in $Z_{k-1}$.
Then $e$ must have one endpoint $v \in P \setminus P_{k-1}$, 
since otherwise $e$ would be a candidate edge for $P_{k-1}$ 
and part of $Z_{k-1}$, by the invariant.
Now, $v$ cannot lie in $C$ or in a dirty triangle, by the invariant.
Also, $v$ cannot lie in an empty triangle, since this would have 
been detected by the algorithm.
Thus, $v$ must hide in an unexamined triangle, but there 
are no such triangles left.
\end{proof}

\begin{figure}[t]
\centering
\includegraphics[width=\textwidth]{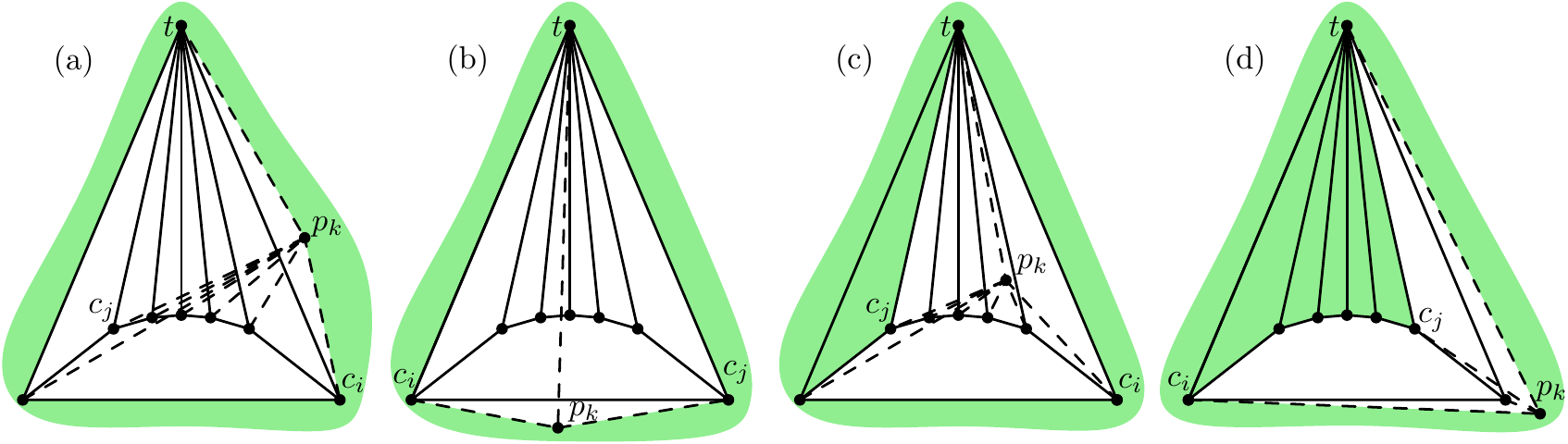}
\caption{$Z_{k-1}$ is of Type~2 and~$p_k$ is not covering: 
if~$p_k$ forms a non-crossed $4$-cycle, $Z_k$ is of Type~1~(a, b);
if not, $Z_k$ is of Type~2 with $p_k$ on the compact cycle~(c, d).
The algorithm will later discover that the triangle 
$\seq{t, c_{j+1}, c_j}$ in (c) is not important since it is 
inside a convex quadrilateral.}
\label{fig_type_2_not_covering}
\end{figure}

\begin{figure}
\centering
\includegraphics{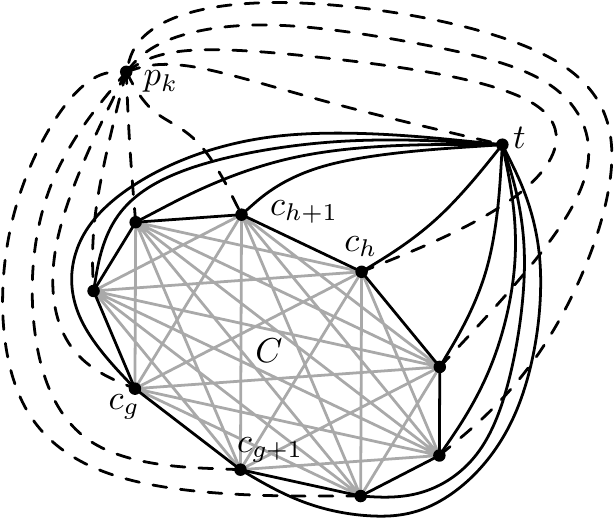}
\caption{$Z_{k-1}$ is of Type~2 and $p_k$ is covering.}
\label{fig_type_2_cover}
\end{figure}

With $p_k$ at hand, we inspect the boundary of $C$ to find 
the interval $I$ of vertices for which the edge to $p_k$ 
emanates to the outside of~$C$. First, if $p_k$ does not cover 
$C$, i.e., if $I = \seq{c_i, \dots, c_j}$ is a proper subinterval of 
$C$, then $p_k$ must lie between $c_{i-1}$ and $c_{j+1}$ in the 
clockwise order around~$t$, as in any realization one of the 
cases in \fig{fig_type_2_not_covering} applies.
If $p_k$ is between $c_{i-1}$ and $c_i$ or between $c_{j}$ and 
$c_{j+1}$, then either $\seq{p_k, t, c_{i-1}, c_i}$ or 
$\seq{t, p_k, c_{j}, c_{j+1}}$ is a compact $4$-cycle containing 
$P_k$, and we make it the next hull structure $Z_k$ of 
Type~1; see \fig{fig_type_2_not_covering}(a).
The green areas in the figures are the only regions where we 
might still find candidate edges.
Otherwise, if $i+1 = j$ and the edge $t p_k$ crosses $c_i c_{i+1}$, 
the compact $4$-cycle $\seq{t, c_j, p_k, c_i}$ contains 
$P_k$ and becomes the next Type~1 hull structure $Z_k$; 
see \fig{fig_type_2_not_covering}(b).
In any other case (i.e., $p_k$ lies between $c_i$ and $c_j$ 
in clockwise order around $t$ and if $i+1 = j$ then $tp_k$ 
does not cross $c_ic_{i+1}$), $Z_k$ is of Type~2 and obtained 
from $Z_{k-1}$ by removing the top triangles between $c_i$ and 
$c_j$ and adding the top triangles $\seq{t, p_k, c_i}$ and 
$\seq{t, c_j, p_k}$; see \fig{fig_type_2_not_covering}(c) and~(d).
If $c_ip_k$ intersects an edge of $Z_{k-1}$, then $\seq{t, p_k, c_i}$ 
lies in a compact $4$-cycle and is marked dirty.
Otherwise, it is marked unexamined.
We handle $\seq{t, c_j, p_k}$ similarly.

Second, suppose $p_k$ covers $C$ and let $g,h$ be so that $p_k$
is between $c_g$ and $c_{g+1}$ in clockwise order around $t$ and $t$
lies between $c_{h}$ and $c_{h+1}$ in clockwise order around $p_k$.
Observation~\ref{obs_covering_order} ensures that these edges are
well-defined; see \fig{fig_type_2_cover}.
Now there are three cases.
First, if $g = h$, then one of $\seq{c_{g}, c_{g+1}, t, p_k}$ 
or~$\seq{c_{g}, c_{g+1}, p_k, t}$ defines a compact 
$4$-cycle containing $P_k$, so $Z_k$ is of Type~1 and 
consists of this cycle; see \fig{fig_type_2_covering_same_disjoint_new}(a).
Second, if $\{g, g+1\} \cap \{h, h+1\} = \emptyset$, 
then $Z_k$ is of Type~3, with independent 
triangles $\seq{p_k, c_g, c_{g+1}}$ and 
$\seq{t, c_h, c_{h+1}}$; see 
\fig{fig_type_2_covering_same_disjoint_new}(b).
Third, suppose that $h = g+1$ or $g = h+1$, say, $h = g+1$.
Then $Z_k$ is of Type~2, with top vertex $c_j$ and compact 
cycle $\seq{t, p_k, c_g, c_{h+1}}$.
The top triangle $\seq{c_h, c_{h+1}, c_g}$ is dirty, the 
other top triangles are unexamined;
see \fig{fig_type_2_covering_same_disjoint_new}(c--d).

\begin{figure}[t]
\centering
\includegraphics[width=\textwidth]{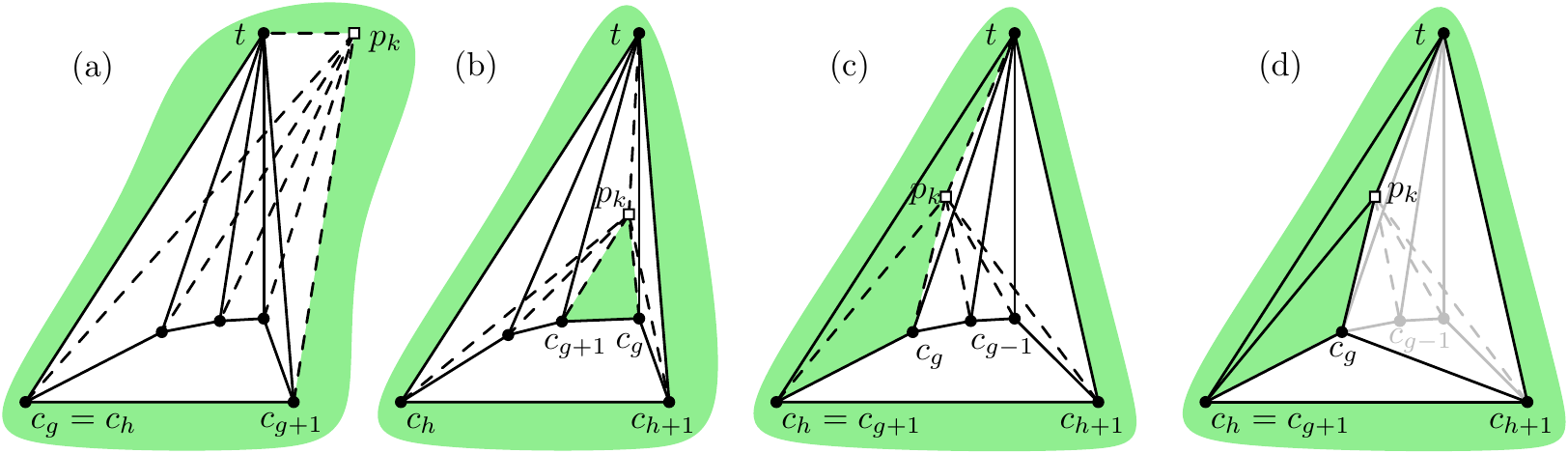}
\caption{$Z_{k-1}$ is of Type~2 and $p_k$ (box) is covering:
if $t$ and $p_k$ are between the same vertices in each 
other's rotation, $Z_k$ is of Type~1~(a);
if these vertices are disjoint, $Z_k$ is of Type~3~(b);
if~$t$ and $p_k$ have a common neighbor~$c_j$ in the other's 
rotation~(c), the new top vertex $c_h$ of $Z_k$ structure requires 
the construction of a new compact cycle~(d).
}
\label{fig_type_2_covering_same_disjoint_new}
\end{figure}

\begin{lemma}\label{lem:inv_type2}
The resulting hull structure is valid for $P_k$.
\end{lemma}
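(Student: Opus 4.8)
The proof is a case analysis that mirrors the case distinction in the construction of $Z_k$ from $Z_{k-1}$ in the Type~2 insertion step. Since $Z_{k-1}$ is assumed valid for $P_{k-1}$ (by the invariant maintained inductively) and we insert the point $p_k$ chosen by the algorithm, the goal in each case is to verify the four invariant clauses (a)--(d): that the *type* assigned to $Z_k$ is correct given the important-set configuration of $R_k$, that the distinguished structures (compact cycle, top triangles, independent triangles) are as claimed, and — for Type~2 outputs — that every point of $P_k$ lies either in the compact cycle $C$ or in a dirty triangle, and that no possible convex hull vertex of $P$ lies in a dirty triangle. The last two facts are largely supplied already: Lemma~\ref{lem:new_p_k_type2} handles the dirty-triangle claim for triangles marked dirty during the search for $p_k$, so I only need to check it for any triangle newly marked dirty during the *restructuring* (e.g. $\seq{t,p_k,c_i}$ when $c_ip_k$ crosses an edge of $Z_{k-1}$, or $\seq{c_h,c_{h+1},c_g}$ in the covering case), and there the dirtiness is witnessed by the containing compact $4$-cycle exactly as in Lemma~\ref{lem:important_triangles_not_crossed}.

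First I would treat the non-covering case. Here $p_k$ lies strictly outside $C$ over a proper subinterval $I=\seq{c_i,\dots,c_j}$, and the argument in the text already pins down the position of $p_k$ in the clockwise order around $t$ (it must lie between $c_{i-1}$ and $c_{j+1}$, else no realization exists — here is where we invoke that $R$ is a genuine radial system). I would then check each subcase: when $\seq{p_k,t,c_{i-1},c_i}$ (or the symmetric cycle, or $\seq{t,c_j,p_k,c_i}$ when $i+1=j$) is a compact $4$-cycle, Observation~\ref{obs:compact_cycle_convex} tells us its vertices are in convex position and, since every point of $P_k$ emanates to its inside (by the emanation bookkeeping), it *is* the convex hull of $P_k$; moreover $P_k$ has $\geq 4$ hull vertices so by Theorem~\ref{thm:important_triangles_structure} it has a unique important set, matching Type~1. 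In the remaining subcase, $Z_k$ is Type~2 with the same top vertex $t$; I verify that $t$ still covers the new compact cycle (the old cycle with $\seq{c_{i+1},\dots,c_{j-1}}$ replaced by $p_k$), that the two new top triangles $\seq{t,p_k,c_i}$ and $\seq{t,c_j,p_k}$ are correctly classified, and that every point of $P_k$ still lands in $C$ or a dirty triangle — points formerly in a removed top triangle between $c_i$ and $c_j$ now lie in one of the two new triangles or in the enlarged $C$, and $p_k$ itself is a cycle vertex.

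Next the covering case, split by the relative position of $g$ (where $p_k$ sits in the rotation around $t$) and $h$ (where $t$ sits in the rotation around $p_k$), using Observation~\ref{obs_covering_order} to see $g,h$ are well defined. If $g=h$, one of $\seq{c_g,c_{g+1},t,p_k}$, $\seq{c_g,c_{g+1},p_k,t}$ is a compact $4$-cycle containing $P_k$ — again Observation~\ref{obs:compact_cycle_convex} plus the emanation data make it the hull, giving Type~1 with a unique important set. If $\{g,g+1\}\cap\{h,h+1\}=\emptyset$, I check that $\seq{p_k,c_g,c_{g+1}}$ and $\seq{t,c_h,c_{h+1}}$ are vertex-disjoint triangles each having all of $P_k$ to the interior and connected as in the Type~3 picture, which by invariant clause (b)/(d) forces $R_k$ into configuration (3) of Theorem~\ref{thm:important_triangles_structure} with these as the (independent) important triangles. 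If $h=g+1$ (the symmetric $g=h+1$ is identical), $Z_k$ is Type~2 with new top vertex $c_h$ and compact cycle $\seq{t,p_k,c_g,c_{h+1}}$; I verify that $c_h$ covers this cycle, that $\seq{c_h,c_{h+1},c_g}$ is genuinely contained in a compact $4$-cycle (hence correctly marked dirty, with no hull vertex inside by Lemma~\ref{lem:important_triangles_not_crossed}), that the other $3$-cycles incident to $c_h$ are the new top triangles, and that every point of $P_k$ is in the new $C$ or in a dirty triangle (the old top triangles of $Z_{k-1}$ become the unexamined top triangles here, and $t, p_k$ are now cycle vertices). The main obstacle is the covering case with $h=g+1$: the compact cycle is *rebuilt from scratch* around a new top vertex, so one must argue carefully that $\seq{t,p_k,c_g,c_{h+1}}$ is actually compact (its two diagonals emanate inward — this uses that $p_k$ covered the old $C$ and the precise positions $g,h$) and that the resulting structure still records exactly the important triangles of $R_k$; the emanation bookkeeping around four distinct vertices here is the most delicate part of the verification.
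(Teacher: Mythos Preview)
Your case analysis follows the same skeleton as the paper's proof, and for the non-covering subcases your direct verification via emanation data would work. The paper streamlines that part by fixing an arbitrary realization $\chi_k$ of $P_k$ and asking whether the top triangle $\tau = \seq{t, c_{\ell+1}, c_\ell}$ containing $p_k$ (in the clockwise rotation around $t$) is the outer face or a bounded face of the induced $\chi_{k-1}$; this dichotomy lets all the non-covering subcases fall out uniformly rather than being checked one by one.

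There are two genuine gaps in your covering cases. In case $\{g,g+1\} \cap \{h,h+1\} = \emptyset$, checking that the two triangles are disjoint and that each has all of $P_k$ to its interior is not enough to establish that they are the \emph{only} candidates for the hull of $P_k$ --- you still have to rule out every other cell. The paper does this with a fan-crossing argument: $p_k c_h$ crosses the fan $t \rightarrow \seq{c_{g+1}, \dots, c_{h-1}}$, $p_k c_{h+1}$ crosses $t \rightarrow \seq{c_{h+2}, \dots, c_g}$, and symmetrically for $tc_g$ and $tc_{g+1}$, so the only cells of $Z_{k-1} \cup \{p_k\}$ with no crossed boundary edge are exactly $\seq{t,c_h,c_{h+1}}$ and $\seq{p_k,c_g,c_{g+1}}$. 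Your plan does not supply this step, and invoking ``invariant clause (b)/(d)'' here is circular --- that is precisely what you are trying to establish.

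In case $h = g+1$, your parenthetical ``the old top triangles of $Z_{k-1}$ become the unexamined top triangles here'' is wrong: the new compact cycle has only the four vertices $t,p_k,c_g,c_{h+1}$, so there are exactly four new top triangles (one dirty, three unexamined), and only $\seq{c_h, t, c_{h+1}}$ coincides with an old top triangle. The paper does not try to track where every point of $P_k$ lands; instead it argues directly that $\seq{t,p_k,c_g,c_{h+1}}$ is a compact $4$-cycle (so nothing inside it is extremal), that the dirty triangle $\seq{c_g,c_h,c_{h+1}}$ sits inside the compact $4$-cycle $\seq{c_g,c_h,c_{h+1},c_{g-1}}$ built from vertices of the old $C$, and that every uncrossed edge of $Z_{k-1}$ together with every uncrossed edge from $p_k$ to $Z_{k-1}$ appears in $Z_k$. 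Verifying the point-location part of the invariant directly is also feasible, but the bookkeeping you sketch for it does not match the actual structure of $Z_k$.
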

\begin{proof}
First suppose the vertices of $C$ whose edges 
to $p_k$ emanate to the outside form a proper subinterval 
$\seq{c_i, \dots, c_j}$ of $C$.
Suppose further that $p_k$ appears between $c_\ell$ 
and $c_{\ell+1}$ in the clockwise order around $t$ and consider 
some realizing abstract order type~$\chi_k$ for $P_{k}$.
In the induced realization $\chi_{k-1}$ with $P_{k-1}$, the 
top triangle $\tau = \seq{t, c_{\ell+1}, c_\ell}$ of $Z_{k-1}$ 
is either the outer face or a bounded triangle.

If $\tau$ is the outer face, suppose first that the set  
$\{t, c_{\ell+1}, c_\ell, p_k\}$ forms a compact $4$-cycle $C'$.
Since $\tau$ is the outer face of $P_{k-1}$, we know that $C'$ 
is the outer face of $P_k$.
Since $|C'|=4$, it is the only important set by 
Theorem~\ref{thm:important_triangles_structure} and we can set $Z_k=C'$.
Equivalently, our algorithm decides this case as follows.
If $C' = \seq{t, p_k, c_{\ell}, c_{\ell+1}}$, we must have $\ell = j$.
If $C' = \seq{t, c_{\ell}, c_{\ell+1}, p_k}$, we must have $\ell = i-1$.
If $C' = \seq{t, c_{\ell}, p_k, c_{\ell+1}}$, we must have 
$i = \ell$, $j = \ell+1$ and $tp_k$ crossing $c_ic_{i+1}$.
In either case, $C'$ contains $P_k$ and constitutes the 
unique convex hull of $P_k$.
We can thus set $Z_k = C'$.
Now, if $\{t, c_{\ell+1}, c_\ell, p_k\}$ does not form 
a compact $4$-cycle, the interior vertex must be $c_\ell$ or 
$c_{\ell+1}$ (as $p_k$ is on the outer face by choice of $\ell$ 
and not covering).
Thus, $\ell \in \{i, j-1\}$, and the edges of $Z_{k-1}$ 
incident to $c_{i+1}, \dots, c_{j-1}$ cannot be candidates 
(being inside a compact $4$-cycle), while the only candidate 
edges incident to $p_k$ are $p_kt$, $p_k c_i$ and $p_k c_j$.

Similarly, if $\tau$ is bounded, we must have $i \leq \ell < j$, 
and $p_k$ must form a compact cycle with $\seq{c_j, \dots, c_i}$.
The edges of $Z_{k-1}$ incident to $c_{i+1}, \dots, c_{j-1}$ are 
in a compact $4$-cycle.
The only possible candidate edges incident to $p_k$ are 
$p_kt$, $p_kc_i$, or $p_kc_j$ (the other such edges are crossed 
by edges of $Z_{k-1}$).
Thus, in the last two cases $Z_k$ is a valid Type~2 structure, 
and our algorithm covers all cases.

If $p_k$ covers~$C$, our algorithm distinguishes all 
possible rotations around $t$ and~$p_k$.
Recall that $p_k$ is inside the $3$-cycle $\seq{t, c_{g+1}, c_g}$, 
and $t$ is inside $\seq{p_k, c_{h+1}, c_h}$.

\textbf{(a)} If $g = h$, then $\{p_k, t, c_g, c_{g+1}\}$ is in 
convex position: if there were a realization with $p_k$ or $t$ 
in the interior, the rotations at $p_k$ and $t$ would be different; 
if $c_g$ or $c_{g+1}$ was in the interior, there would be a third 
vertex on $C$ showing that $p_k$ or $t$ does not cover~$C$.
Furthermore, if the edge $p_k t$ crossed $c_g c_{g+1}$, the 
rotations at $t$ and $p_k$ would be different.
Hence, either $\seq{c_g, c_{g+1}, t, p_k}$ or 
$\seq{c_g, c_{g+1}, p_k, t}$ makes a compact $4$-cycle, 
containing all of $P_k$.

\textbf{(b)} If the two 3-cycles are disjoint, then the 
edge $p_k c_h$ crosses the \emph{fan} 
$t \rightarrow \seq{c_{g+1}, \dots, c_{h-1}}$ (i.e., all edges 
from $t$ to $c_{g+1}, \dots, c_{h-1}$), and $p_k c_{h+1}$ crosses 
the fan $t \rightarrow \seq{c_{h+2}, \dots, c_g}$.
Furthermore, $t c_g$ crosses 
$p_k \rightarrow \seq{c_{h+1}, \dots, c_{g-1}}$, and 
$t c_{g+1}$ crosses $p_k \rightarrow \seq{c_{g+2}, \dots, c_h}$.
Hence, $\seq{t, c_h, c_{h+1}}$ and $\seq{p_k, c_g, c_{g+1}}$, are 
the only cells without crossed edges and the only candidates for 
the convex hull of~$P_k$. Thus, $Z_k$ is a valid Type~3 structure 
for $P_k$.

\textbf{(c)} If, w.l.o.g., $h = g+1$, then $\seq{t, p_k, c_g, c_{h+1}}$ 
forms a compact $4$-cycle, so no vertex inside it can be extremal.  
Similarly, $\seq{c_g, c_h, c_{h+1}, c_{g-1}}$ contains the triangle 
$\seq{c_g, c_h, c_{h+1}}$, which is rightfully marked dirty.
The possible candidate edges for $P_k$ are the uncrossed edges of 
$Z_{k-1}$ or the uncrossed edges between $p_{k}$ and $Z_{k-1}$, and 
they are all included in $Z_k$.
\end{proof}

\begin{lemma}\label{lem:time_type2}
A Type~2 phase that begins with a hull structure of size $m$ and lasts 
for $\ell$ insertions takes $O(m + \ell)$ time.
Furthermore, if the next phase (if any) is of Type~1, it 
begins with a hull structure of size at most~$4$.
\end{lemma}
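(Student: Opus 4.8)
\noindent\textit{Proof proposal.}
The plan is an amortized charging argument that bounds the total work of the phase by the number of top triangles that are ever created, which is $O(m+\ell)$, plus $O(1)$ of overhead per insertion, plus one $O(|C|)$ term for the insertion that ends the phase. At the start of the phase the hull structure is of Type~2 and has size $m$, so its compact cycle has $m-1$ vertices and the structure carries $m-1$ top triangles. Each of the $\ell$ insertions either ends the phase by switching to Type~1 or Type~3 (which have no top triangles), or reconfigures the Type~2 structure, destroying some top triangles and creating at most four new ones: the two triangles $\seq{t,p_k,c_i}$ and $\seq{t,c_j,p_k}$ in the non-covering case, or the four triangles of the new size-$4$ compact cycle in covering case~(c). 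Hence at most $(m-1)+4\ell=O(m+\ell)$ top triangles ever exist during the phase.

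To locate the new vertex $p_k$, the algorithm scans unexamined top triangles, spending $O(1)$ per triangle and leaving each scanned triangle marked dirty or empty, or else exiting in the triangle that contains~$p_k$, which is then destroyed by the ensuing reconfiguration; thus each top triangle is scanned at most once in its lifetime, and the total scanning cost over the phase is $O(m+\ell)$. If $p_k$ does not cover $C$ and the structure stays Type~2, the interval $I$ of cycle vertices whose edge to $p_k$ emanates outward is found by a two-sided walk along $C$ that starts at a vertex of $I$ (a vertex of the triangle in which $p_k$ was found) and costs $O(|I|+1)$. Here $|I|\geq 2$, since $|I|=1$ always produces a Type~1 compact $4$-cycle, and the reconfiguration destroys the $|I|-1\geq|I|/2$ top triangles strictly inside $I$; we charge the $O(|I|)$ to them, incurring $O(m+\ell)$ in total plus $O(1)$ per insertion.

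The remaining case is when $p_k$ covers $C$: recognizing this and locating $p_k$ and $t$ in each other's rotations (using Observation~\ref{obs_covering_order}) costs $O(|C|)$. In covering cases~(a) and~(b) the phase ends, so this is a single $O(|C|)=O(m+\ell)$ term, and in covering case~(c) all $|C|$ former top triangles are destroyed, so we again charge $O(|C|)$ to them. Summing all contributions yields $O(m+\ell)$. The only delicate point is precisely this covering-case amortization: one must verify that the $O(|C|)$ spent recognizing a covering vertex and computing the indices $g,h$ does not accumulate across the phase — which it does not, since covering cases~(a),(b) terminate the phase (so the cost occurs once, and always $|C|\le m+\ell$) while covering case~(c) annihilates all $|C|$ former top triangles and hence telescopes against the global $O(m+\ell)$ top-triangle budget; every other contribution fits the simple ``charge to scanned or destroyed top triangles'' pattern.

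For the second statement, inspect the insertion procedure: the only transitions from Type~2 to Type~1 occur when $p_k$ together with $t$ and two vertices of $C$ forms a compact $4$-cycle, namely $\seq{p_k,t,c_{i-1},c_i}$ or $\seq{t,p_k,c_j,c_{j+1}}$ (non-covering, $p_k$ between consecutive cycle vertices in the rotation of $t$), $\seq{t,c_j,p_k,c_i}$ (non-covering, $i+1=j$ with $tp_k$ crossing $c_ic_{i+1}$), or $\seq{c_g,c_{g+1},t,p_k}$ or $\seq{c_g,c_{g+1},p_k,t}$ (covering case~(a)). In every case the resulting Type~1 hull structure is a compact cycle on exactly four vertices, so if the next phase is of Type~1 it begins with a hull structure of size exactly four, in particular at most~$4$.
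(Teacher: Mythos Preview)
Your proof is correct and follows essentially the same amortized charging argument as the paper: bound the total number of top triangles ever created by $O(m+\ell)$ (constant number of new ones per insertion), then charge all work to triangles that are marked or destroyed. The paper's own proof is a two-sentence sketch of exactly this idea; you have simply filled in the details the paper omits, in particular the $O(|I|)$ walk charged to the $|I|-1$ destroyed triangles, the $O(|C|)$ cost of a covering insertion (which either ends the phase or annihilates all $|C|$ old top triangles in case~(c)), and the explicit case inspection for the ``at most $4$'' claim.
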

\begin{proof}
The second claim follows by inspection.
For the first claim, note that each top triangle is marked 
dirty or empty at most once, and that each insertion creates
a constant number of new top triangles. The total running time 
for the insertion operations can be charged to marked and removed 
top triangles.
\end{proof}

\subsubsection{Type~3}

Let $T_1 = \seq{a,b,c}$ and $T_2 = \seq{a',c',b'}$ be 
the two independent triangles of $Z_{k-1}$, labeled such that the edges $aa'$, $bb'$ and $cc'$ are uncrossed in the subdrawing with these six vertices.
Further, let $p_k$ 
be an arbitrary vertex of $P \setminus P_{k-1}$.
We set $P_k = P_{k-1} \cup \{p_k\}$, and we distinguish three cases.
First, if $p_k$ is inside both $T_1$ and $T_2$, then $Z_k = Z_{k-1}$.
Second, suppose that $p_k$ is outside, say, $T_1$, and that 
$\{p_k, a, b, c\}$ forms a compact $4$-cycle $C$. (Hence, $p_k$ 
is inside $T_2$; recall that ``inside'' and ``outside'' is 
defined by the cycle's orientation.)
Then $Z_k = C$ is of Type~1.
Third, suppose that $p_k$ is outside $T_1$ but 
$\{p_k, a, b, c\}$ does not form a compact $4$-cycle.
W.l.o.g., suppose further that $a$ is inside the triangle 
$\seq{p_k, b, c}$.
There are two subcases (see \fig{fig_type_3_triangular}):
(a) if $a$ lies inside a compact $4$-cycle, we replace~$a$ 
by $p_k$ in $T_1$ to obtain an independent $3$-cycle that, 
together with $T_2$, defines $Z_k$, again of Type~3;
(b) otherwise, $a$ is an element of a compact $4$-cycle $C$ 
that involves $p_k$, one vertex of $T_2$ and one other vertex of $T_1$.
Then, $Z_k$ is a Type~2 hull structure with compact cycle $C$ 
whose top vertex is the vertex of $T_1$ that is not an element of~$C$.
The top triangles incident to the vertex of $T_2$ are marked dirty, 
the remaining top triangles are marked unexamined.

\begin{figure}[t]
\centering
\includegraphics[width=0.95\textwidth]{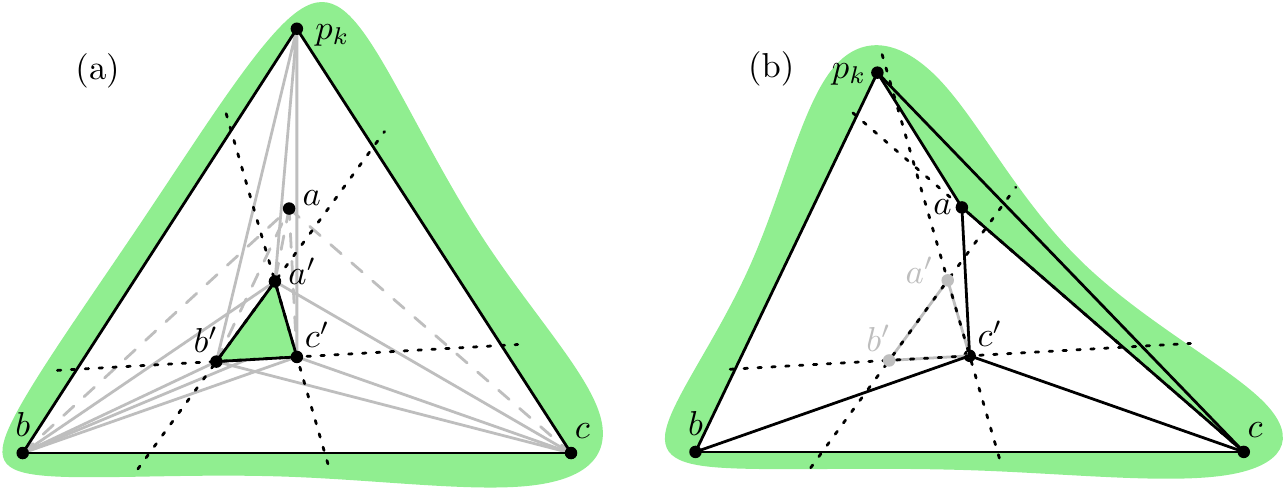}
\caption{If a vertex of an independent triangle is in a compact 
$4$-cycle (e.g., $\seq{p_k, a', c', c}$), then $Z_k$ if of Type~3~(a).
Otherwise, $Z_k$ is of Type~2 with top vertex $c$~(b).}
\label{fig_type_3_triangular}
\end{figure}

\begin{lemma}\label{lem:inv_type3}
The resulting structure $Z_k$ is a valid hull structure for $P_k$.
\end{lemma}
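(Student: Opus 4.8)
The plan is a case analysis that mirrors the three branches of the Type~3 insertion step, certifying in each branch that $Z_k$ satisfies the invariant (a)--(d) by reasoning through an arbitrary realization. The ingredients I lean on throughout are: (1) by the invariant for $Z_{k-1}$ together with the classification of Theorem~\ref{thm:important_triangles_structure}, the important sets of $R_{k-1}$ are contained in $\{T_1,T_2\}$, so every realization of $R_{k-1}$ has $T_1$ or $T_2$ as its convex hull; (2) a compact cycle is in convex position in every realization (Observation~\ref{obs:compact_cycle_convex}), hence no vertex inside a compact cycle, and no edge strictly inside a compact $4$-cycle, can lie on the convex hull of any realization; and (3) by Lemma~\ref{lem:important_triangles_not_crossed}, the boundary of an important set is uncrossed and its vertices lie inside no compact cycle. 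Facts (2) and (3) are exactly what makes a local ``emanation/crossing'' test --- computable with $O(1)$ triple queries --- certify a statement about all realizations at once. Throughout I fix an arbitrary realization $\chi_k$ of $R_k$, write $\chi_{k-1}$ for its restriction to $P_{k-1}$, and let $H, H'$ denote the convex hulls of $\chi_k, \chi_{k-1}$; note that $H\setminus\{p_k\}\subseteq H'\in\{T_1,T_2\}$ by fact~(1).

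In the first branch ($p_k$ inside both $T_1$ and $T_2$), whichever triangle equals $H'$, say $T_1$, the condition ``$p_k$ inside $T_1$'' says $p_k$ lies on the interior side of all three edges of the triangle $H'$, i.e.\ inside it, so $H=H'$; hence the important sets of $R_k$ are again contained in $\{T_1,T_2\}$, there is no important set of size $\geq 4$, and $Z_k=Z_{k-1}$ is a Type~3 structure whose independent triangles are $T_1,T_2$, which satisfies invariant (b)/(d) by Theorem~\ref{thm:important_triangles_structure}. In the second branch ($p_k$ outside $T_1$ and $\seq{p_k,a,b,c}$ compact), fact~(2) puts these four vertices in convex position with the $T_1$-order of $a,b,c$ preserved and $p_k$ outside the triangle $abc$; since every other point of $P_k$ is combinatorially inside $T_1$ and hence (as $p_k$ being outside $T_1$ forces the interior of $T_1$ to be the bounded triangle) inside that triangle, $C:=\seq{p_k,a,b,c}$ is the convex hull of every realization, and being a $4$-set it is the unique important set by Theorem~\ref{thm:important_triangles_structure}, so $Z_k=C$ is a valid Type~1 structure (invariant (a)).

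The third branch is the heart of the argument and the step I expect to be the main obstacle. I first note that it forces $H'=T_2$: if $H'=T_1$, then $a,b,c$ are extreme in $\chi_{k-1}$ and, $p_k$ being outside $T_1$ hence outside the triangle $abc$, the set $\{p_k,a,b,c\}$ is in convex position with the $T_1$-order of $a,b,c$ preserved --- a compact $4$-cycle, contradicting the branch hypothesis. With $H'=T_2$, one checks --- using that $a,b,c$ lie inside the triangle $T_2$ in $\chi_{k-1}$ and the fixed direction of $T_1$ --- that $abc$ sits in the interior of the hull $T_2$ (so that the combinatorial interior of $T_1$ is the side exterior to the triangle $abc$), and hence that the WLOG situation ``$a$ inside $\seq{p_k,b,c}$'' places $p_k$ in the vertex region of $a$ with respect to $abc$. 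Now I split exactly as the algorithm does. If $a$ lies inside a compact $4$-cycle, then by fact~(2) $a$ is never extreme, so by fact~(3) every important set of $R_k$ avoids $a$; a short argument --- in the style of the proof of Lemma~\ref{lem:inv_type2}, using the witnessing compact $4$-cycle (the cycle $\seq{p_k,a',c',c}$ in the notation of \fig{fig_type_3_triangular}) to kill every remaining candidate edge incident to $p_k$ --- shows the candidate edges are exactly those of $\seq{p_k,b,c}$ and of $T_2$, that these are vertex-disjoint $3$-cycles, and that with the appropriate labeling the matching edges between them are uncrossed; so $Z_k$ is a valid Type~3 structure (invariant (b)/(d)). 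Otherwise $a$ is a vertex of a compact $4$-cycle $C$ formed by $p_k$, one vertex of $T_2$, and one other vertex of $T_1$; then fact~(2) makes the vertices of $C$ convex in every realization, one reads from the rotations that the remaining vertex $t$ of $T_1$ covers $C$, the top triangles of the resulting structure that pass through the $T_2$-vertex of $C$ each lie inside a compact $4$-cycle and are rightly marked dirty (fact~(2)), every point of $P_k$ lies in $C$ or in such a dirty triangle, and fact~(3) supplies the remaining Type~2 conditions; so $Z_k$ is a valid Type~2 structure (invariant (c)/(d)). The genuinely delicate points --- and the reason this branch is the real work --- are showing that the dichotomy ``$a$ inside'' versus ``$a$ a vertex of a compact $4$-cycle of that shape'' is exhaustive and well-defined, and verifying the uncrossed matchings and the dirty markings; both reduce, via Lemma~\ref{lem:important_triangles_not_crossed} and the fact that the rotation system of a good drawing of $K_n$ determines all its crossings, to inspecting the rotations of a constant number of vertices.
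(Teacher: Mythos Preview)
Your proposal is correct and follows essentially the same approach as the paper: a case analysis mirroring the three algorithmic branches, using Observation~\ref{obs:compact_cycle_convex} and Lemma~\ref{lem:important_triangles_not_crossed} to rule out candidate edges via compact $4$-cycles. The organization differs slightly---you fix an arbitrary realization $\chi_k$ and argue first that $H'=T_2$ in the third branch, whereas the paper reasons directly about candidate edges (``any new candidate edge must be incident to $p_k$ and not to $T_2$, since such an edge would cross $T_1$'')---but these amount to the same argument. One place where the paper is a touch more explicit than your sketch: in subcase~(b) of the third branch, rather than deferring the exhaustiveness of the dichotomy to a constant-size rotation check, the paper observes that if no vertex of $T_1$ lies inside a compact $4$-cycle then some edge from $p_k$ to $T_2$ crosses an edge of $T_2$ (say $a'b'$), from which the compact $4$-cycle $\seq{p_k,a,c',b}$ is read off directly; you may want to record this concrete witness rather than leaving it as ``inspecting the rotations of a constant number of vertices.''
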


\begin{proof}
By the invariant, $T_1$ and $T_2$ are the only possible 
important triangles for $P_{k-1}$. Suppose that $T_1$ and $T_2$ 
are labeled such that $\seq{a,b,b',a'}$ is a compact $4$-cycle.
If $p_k$ is inside both independent triangles, then $p_k$ lies 
inside a compact $4$-cycle, and cannot have an incident candidate edge.
If $p_k$ forces four extreme points, $P_k$ lies in the corresponding 
compact $4$-cycle, and we are done.
Otherwise, let $p_k$ be outside, say, $T_1 = \seq{a,b,c}$.
Any new candidate edge for $P_k$ must be incident to $p_k$.
Furthermore, no new candidate edge is incident to $T_2$, since 
such an edge would intersect $T_1$. Thus, we have to consider the 
potential convex hulls formed by $p_k$ and the vertices of $T_1$.

Suppose first that, w.l.o.g., $a$ is contained in a compact $4$-cycle; 
see \fig{fig_type_3_triangular}(a).  This cycle must be 
$\seq{p_k, c, c', a'}$ or $\seq{p_k,b, b', a'}$.
The only possible important triangle incident to $p_k$ is 
$T_1' = \seq{p_k, b, c}$, and the $3$-cycle $\seq{p_k, b', c'}$ 
contains $a'$, so $T_1'$ and $T_2$ form the independent $3$-cycles 
of a Type~3 hull structure. There cannot be another important triangle, 
as such a triangle would have to contain a candidate edge from 
$Z_{k-1}$ and thus would be incident to $a$.

If there is no vertex of $T_1$ contained in a compact $4$-cycle, 
then an edge from $p_k$ to a vertex of $T_2$ crosses an edge of~$T_2$, 
say $a'b'$; see \fig{fig_type_3_triangular}(b).
In this case, $\seq{p_k, a, c', b}$ forms a compact $4$-cycle $C$, 
and all points from $P_k$ lie either in $C$ or in the triangles 
$\seq{c, a, c'}$ or $\seq{c, c', b}$. The latter two triangles are 
contained in the compact $4$-cycles $\seq{c, a, a', c'}$ and 
$\seq{c, c', b', b}$, and hence cannot be important.
Thus, all possible candidate edges are represented in $Z_k$.
\end{proof}

\begin{observation}\label{obs:time_type3}
A Type~3 phase with $\ell$ insertions takes $O(\ell)$ time.
If the next phase (if any) is of Type~2, it begins with a hull 
structure with at most $5$ vertices, if it is of Type~1, it begins 
with a hull structure of size $4$. 
\end{observation}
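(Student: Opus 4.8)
The plan is to verify both claims by direct inspection of the three cases of the Type~3 insertion step, just as Lemmas~\ref{lem:time_type1} and~\ref{lem:time_type2} do for the other types, with correctness of the case distinction already guaranteed by Lemma~\ref{lem:inv_type3}.

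For the running time, the key point is that, in contrast to Types~1 and~2, the vertex $p_k$ is not searched for but chosen arbitrarily from $P \setminus P_{k-1}$, so no walk along a long cycle is ever performed. The insertion step then only inspects bounded-size objects: since $T_1$ and $T_2$ each have three vertices, deciding for every vertex of $T_1 \cup T_2$ whether its edge to $p_k$ emanates to the inside or to the outside costs $O(1)$ triple queries in total; and testing whether a given $4$-set forms a compact cycle amounts to checking that its two diagonals emanate to the inside, again $O(1)$. Hence each insertion runs in worst-case constant time, and a Type~3 phase of $\ell$ insertions takes $O(\ell)$ time.

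For the second claim, recall that a new phase begins precisely when the type of the hull structure changes, and inspect when this happens in the Type~3 step. If $p_k$ lies outside one independent triangle, say $T_1$, and $\{p_k\}$ together with the three vertices of $T_1$ forms a compact $4$-cycle, then $Z_k$ is exactly this cycle and is of Type~1, so it has $4$ vertices. If $p_k$ lies outside $T_1$, does not form such a compact $4$-cycle, and the relevant vertex of $T_1$ lies in a compact $4$-cycle $C$ built from $p_k$, one other vertex of $T_1$, and one vertex of $T_2$ (case~(b) of the step), then $Z_k$ is of Type~2 whose compact cycle is $C$ (four vertices) and whose top vertex is the remaining vertex of $T_1$, so $Z_k$ has $5$ vertices. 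In every other case ($p_k$ inside both triangles, or case~(a), where $p_k$ simply replaces a vertex of $T_1$ and $Z_k$ stays of Type~3), the structure remains of Type~3 and the phase continues. Therefore, if a Type~3 phase is followed by another phase, that phase is of Type~1 beginning with a hull structure on $4$ vertices, or of Type~2 beginning with a hull structure on at most $5$ vertices.

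There is essentially no obstacle here; the only point worth stressing is the contrast with Types~1 and~2, where the per-insertion cost is merely \emph{amortized} constant because finding $p_k$ may traverse a cycle of length $\Theta(n)$, whereas in Type~3 it is worst-case constant since $p_k$ is obtained for free and every object touched (two triangles, a handful of candidate $4$-cycles) has bounded size.
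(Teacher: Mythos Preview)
Your proof is correct and is exactly the kind of direct case inspection the paper intends: the paper states this as an Observation with no proof, leaving the reader to verify it from the description of the Type~3 insertion step, which is precisely what you do. Your explicit contrast with Types~1 and~2 (worst-case vs.\ amortized constant per insertion) is a nice clarification but not strictly needed.
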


\subsubsection{Correctness and Running Time}
To wrap up, we get the following lemma:
\begin{lemma}\label{lem:wrap_up}
The final hull structure $Z_n$ contains all candidate edges for 
$R$, and it can be obtained in $O(n)$ time.
\end{lemma}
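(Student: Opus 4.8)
The plan is to prove Lemma~\ref{lem:wrap_up} by combining the per-type correctness lemmas (Lemmas~\ref{lem:inv_type1}, \ref{lem:inv_type2}, \ref{lem:inv_type3}) with the per-type timing results (Lemmas~\ref{lem:time_type1}, \ref{lem:time_type2}, Observation~\ref{obs:time_type3}), once we understand how the phases chain together. First I would establish correctness: the initial hull structure $Z_4$ is a valid Type~1 structure for the $5$-point subset we started with (here I use that among any $5$ points there is a compact $4$-cycle). Then, by induction on $k$, each insertion step either terminates early with an edge from which the convex hull of $P$ is uniquely determined (Lemma~\ref{lem:inv_type1}), or produces a valid hull structure $Z_k$ for $P_k$ (Lemmas~\ref{lem:inv_type1}, \ref{lem:inv_type2}, \ref{lem:inv_type3}, according to the type). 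In the non-terminating case, when $k$ reaches $n$ we have $P_n = P$, and by the invariant $Z_n$ records exactly the important sets of $R$ together with their incident edges; in the terminating case, the returned edge together with $R$ determines the (unique) convex hull of $P$, which we can view as a Type~1 structure. Either way, $Z_n$ contains all candidate edges for $R$, i.e.\ the edges of the convex hull of every realization of $R$ — this is precisely what Theorem~\ref{thm:important_triangles_structure} tells us the candidate edges are (convex hull edges of the unique realization, or the union of the edges of all important triangles).

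Next I would handle the running time by a global amortized argument across all phases. The computation is a sequence of maximal phases $\Phi_1, \Phi_2, \dots$, each of a single type, with a type change at each phase boundary. Let $m_i$ be the size of the hull structure at the start of phase $\Phi_i$ and $\ell_i$ the number of insertions performed during $\Phi_i$. By Lemma~\ref{lem:time_type1}, Lemma~\ref{lem:time_type2}, and Observation~\ref{obs:time_type3}, phase $\Phi_i$ runs in $O(m_i + \ell_i)$ time. Since $\sum_i \ell_i \leq n$ (each insertion adds a new point of $P$, and points are never re-inserted), it suffices to bound $\sum_i m_i$. Here I would use the transition constraints baked into those same lemmas: a Type~1 phase is only ever followed by a Type~2 phase; a Type~2 phase, if followed by a Type~1 phase, hands over a structure of size at most $4$; a Type~3 phase hands over a structure of size at most $5$. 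The only way $m_i$ can be large is if $\Phi_i$ is a Type~2 phase entered from a (possibly large) Type~1 phase — but then Lemma~\ref{lem:time_type1} says that preceding Type~1 phase had length $\ell_{i-1} \geq m_i - m_{i-1}$, so the ``oversized'' part of $m_i$ is paid for by insertions in the previous phase. Concretely, I would show $m_{i} \le m_{i-1} + \ell_{i-1}$ for every phase boundary, so $m_i \le m_1 + \sum_{j<i}\ell_j = O(n)$ and, more sharply, $\sum_i m_i = O(n)$ by charging each unit of $m_i$ either to a constant-size handover or to a distinct point insertion. I also need the number of phases to be $O(n)$, which follows because each phase performs at least one insertion (or is the last phase), so there are at most $n$ phases.

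The main obstacle is the bookkeeping of the phase-transition bounds and making the amortization airtight: one has to check that \emph{every} transition out of a large Type~1 phase really does go to a Type~2 phase whose starting size is at most the size at which the Type~1 phase ended (which is $\le m_{i-1} + \ell_{i-1}$ by Lemma~\ref{lem:time_type1}), and that Type~2 and Type~3 phases can only ever hand over $O(1)$-size structures — so no unbounded ``size'' ever survives more than one phase boundary without being matched by an equal number of insertions in the phase that created it. Once that is pinned down, the telescoping sum $\sum_i (m_i + \ell_i) = O(n)$ is immediate. A secondary subtlety is correctness when the algorithm terminates early inside a Type~1 or Type~2 step: here I would appeal to Lemma~\ref{lem:inv_type1} (and the analogous early-termination clause of Lemma~\ref{lem:new_p_k_type2}) to argue that the reported edge or hull structure still accounts for all candidate edges of the \emph{full} set $P$, not merely of $P_k$ — this is exactly the content of those lemmas, so no new work is needed beyond citing them.

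\begin{proof}[Proof sketch]
\textbf{Correctness.}
The initial structure $Z_4$ is a valid Type~1 hull structure for a $5$-point subset $P_5 \supseteq P_4$ of $P$, since any $5$ points contain a compact $4$-cycle.
By Lemmas~\ref{lem:inv_type1}, \ref{lem:inv_type2}, and~\ref{lem:inv_type3}, each insertion step either terminates early, returning a convex hull edge of $P$ from which the (then unique) convex hull of $P$ is determined by Theorem~\ref{thm:important_triangles_structure}, or yields a valid hull structure $Z_k$ for $P_k$.
In the latter case, once $k = n$ we have $P_n = P$, and the invariant (a)--(d) guarantees that $Z_n$ records exactly the important sets of $R$, so its edges are precisely the candidate edges for $R$.
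In the former case, the returned edge plus $R$ yields the convex hull of $P$, which we regard as the (Type~1) final structure; by Lemma~\ref{lem:new_p_k_type2} and the early-termination clause of Lemma~\ref{lem:inv_type1}, all candidate edges for the full set $P$ are accounted for.

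\textbf{Running time.}
Partition the run into maximal phases $\Phi_1, \Phi_2, \dots, \Phi_r$, each of a single type.
Let $m_i$ and $\ell_i$ be the size of the hull structure at the start of $\Phi_i$ and the number of insertions during $\Phi_i$, respectively.
Each insertion consumes a new point of $P$, never re-inserted, so $\sum_{i=1}^{r} \ell_i \le n$ and $r \le n$.
By Lemmas~\ref{lem:time_type1} and~\ref{lem:time_type2} and Observation~\ref{obs:time_type3}, phase $\Phi_i$ runs in $O(m_i + \ell_i)$ time.
The same three results give, for every boundary, $m_{i+1} \le 5$ unless $\Phi_i$ is of Type~1, in which case $\Phi_{i+1}$ is of Type~2 with $m_{i+1} \le m_i + \ell_i$ (the size at which $\Phi_i$ ended).
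Hence $m_{i+1} \le m_i + \ell_i$ holds at every boundary, and in fact each unit of $m_{i+1}$ can be charged either to one of at most $5$ handed-over vertices or to a distinct insertion in $\Phi_i$.
Summing, $\sum_{i=1}^{r} m_i = O\bigl(\sum_{i=1}^{r} \ell_i + r\bigr) = O(n)$, and therefore the total time is $\sum_{i=1}^{r} O(m_i + \ell_i) = O(n)$.
\end{proof}
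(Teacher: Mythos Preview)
Your proof is correct and follows essentially the same strategy as the paper's: correctness by induction via Lemmas~\ref{lem:inv_type1}--\ref{lem:inv_type3}, and running time by the per-phase bounds of Lemmas~\ref{lem:time_type1}, \ref{lem:time_type2} and Observation~\ref{obs:time_type3} together with the observation that only the Type~1~$\to$~Type~2 transition can hand over a non-constant structure, whose size is then paid for by the insertions of the preceding Type~1 phase.

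Two small inaccuracies worth tightening: (i) the handover constant is not quite $5$ --- a Type~2 phase can transition to a Type~3 structure with $6$ vertices (two disjoint triangles), so your bound ``$m_{i+1}\le 5$ unless $\Phi_i$ is Type~1'' should read ``$\le 6$''; the argument is of course unaffected. (ii) Your proof sketch says the invariant guarantees $Z_n$'s edges are \emph{precisely} the candidate edges; in fact for Type~2 (case~(c) of the invariant) the important triangles merely \emph{appear among} the top triangles, so $Z_n$ only \emph{contains} the candidate edges --- which is exactly what the lemma claims and what the subsequent filtering step (Lemmas~\ref{lem:type_2_final}, \ref{lem:type_3_final}) is for. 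Also, $Z_4$ is a valid hull structure for $P_4$, not for a $5$-point set; the fifth initially chosen point is simply inserted later.
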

\begin{proof}
Correctness follows from Lemmas~\ref{lem:inv_type1}, \ref{lem:inv_type2}, 
and \ref{lem:inv_type3}, which show that the invariant is maintained 
throughout the construction.
By Lemmas~\ref{lem:time_type1} and \ref{lem:time_type2}, the total 
time for a phase of Type~1 and Type~2 is proportional to the number of 
insertion operations plus the initial size of the hull structure.
By Observation~\ref{obs:time_type3}, the total time for a Type-3-phase 
is proportional to the number of insertion operations.
By Lemma~\ref{lem:time_type2} and Observation~\ref{obs:time_type3}, 
every phase of Type~1 begins with a hull structure of constant size.
By Observation~\ref{obs:time_type3}, every phase of Type~2 that follows 
a phase of Type~3 has constant size.
By Lemma~\ref{lem:time_type1} and the fact that every phase of 
Type~1 begins with a structure of constant size, the size of the hull 
structure at the beginning of a phase of Type~2 that follows a phase of 
Type~1 can be charged to the number of insertions in that Type-1-phase.
The total number of insertions is $n$ (as the invariant ensures that 
in a hull structure of Type~1 or 2, every point of $P_k \setminus V_k$ 
is in a compact cycle or in a dirty triangle). 
\end{proof}

\subsection{Obtaining the Actual Hulls from a Hull Structure}

After having obtained~$Z_n$, it remains to identify the faces 
that are important sets.
If~$Z_n$ is of Type~1, then it is the only important set of~$R$.
If this is not the case, we want to obtain all the important triangles 
of~$R$, i.e., all convex hulls of abstract order types realizing 
the radial system.

\begin{lemma}\label{lem:type_2_final}
Given a Type~2 hull structure, we can decide in linear 
time which top triangles are important triangles of~$R$.
\end{lemma}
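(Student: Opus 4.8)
Given a Type~2 hull structure $Z_n$ (compact cycle $C$, top vertex $t$, top triangles marked dirty/empty), decide in linear time which top triangles are actually important triangles of $R$ — i.e., which of them equal the convex hull of some abstract order type realizing $R$.

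The plan is to reduce the question to a single pass over the radial ordering around the top vertex~$t$.
First I would collect what the construction already gives us. By the invariant for Type~2 hull structures (together with Theorem~\ref{thm:important_triangles_structure}), every important triangle of~$R$ is one of the top triangles of~$Z_n$. A \emph{dirty} top triangle $\tau$ is inside a compact $4$-cycle, hence one of its edges is a diagonal of that $4$-cycle and is therefore crossed by the other diagonal; by Lemma~\ref{lem:important_triangles_not_crossed} such a $\tau$ cannot be important. So the important triangles form a subset of the \emph{empty} top triangles, and it remains to decide, for each empty top triangle $\tau_i=\seq{t,c_{i+1},c_i}$, whether it is important. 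I would also note that an empty top triangle really is empty of points of~$P$: a point strictly inside $\triangle t c_i c_{i+1}$ would be the endpoint of an edge from $c_i$ (or $c_{i+1}$) emanating into~$\tau_i$, so the algorithm would not have marked $\tau_i$ empty; equivalently, by the invariant every point of~$P$ lies inside the compact cycle~$C$ or inside some dirty top triangle.

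The key claim I would establish is: \emph{an empty top triangle $\tau_i=\seq{t,c_{i+1},c_i}$ is important if and only if $c_i$ and $c_{i+1}$ are cyclically consecutive in $R(t)$}, i.e., no vertex of~$P$ lies strictly between them in $R(t)$ (by Observation~\ref{obs_covering_order} the vertices of~$C$ do occur in $R(t)$ in their cyclic order, so this is well defined). For the ``only if'' direction I would argue the contrapositive: if a vertex $v$ lies between $c_i$ and $c_{i+1}$ in $R(t)$, then the edge $tv$ leaves~$t$ into the angular wedge of~$\tau_i$ at~$t$; since $\tau_i$ is empty, $v$ is not inside $\triangle t c_i c_{i+1}$, and since $tv$, $tc_i$, and $tc_{i+1}$ pairwise meet only at~$t$, the edge $tv$ must leave the triangle through the side $c_i c_{i+1}$. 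Thus $tv$ crosses an edge of~$\tau_i$, and by Lemma~\ref{lem:important_triangles_not_crossed} the triangle $\tau_i$ is not important.

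For the ``if'' direction I would fix a good drawing~$\Gamma$ of~$K_n$ with rotation system~$R$ that realizes an abstract order type (it exists since $R$ is assumed to be a valid radial system), and recall that $Z_n$ is a subdrawing of~$\Gamma$. The main step is to show that, when $c_i,c_{i+1}$ are consecutive in $R(t)$ and $\tau_i$ is empty, no edge of~$\Gamma$ crosses the boundary $\{tc_i,\, tc_{i+1},\, c_i c_{i+1}\}$ of~$\tau_i$. An edge incident to~$t$ can cross only the side $c_i c_{i+1}$ (it shares~$t$ with the other two sides, and two edges through~$t$ cross only at~$t$), and this is excluded precisely because no vertex lies between $c_i$ and $c_{i+1}$ in $R(t)$ (and $\tau_i$ is empty). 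An edge whose endpoints are among the remaining vertices is excluded because those vertices all lie inside the compact cycle~$C$ or inside a dirty top triangle, and the no-crossing properties of compact cycles (Observation~\ref{obs:compact_cycle_convex}, Lemma~\ref{lem:important_triangles_not_crossed}) prevent such an edge from reaching an edge of the empty triangle~$\tau_i$. Given this, the region bounded by $\partial\tau_i$ that does not contain the other vertices is a single empty face of~$\Gamma$; re-rooting the drawing on the sphere so that this face becomes the outer face yields a good drawing~$\Gamma'$ with the same rotation system~$R$ in which $\tau_i$ bounds the outer face. Since $\Gamma$ is pseudo-linear, $\Gamma'$ is pseudo-linear as well by Gioan's theorem~\cite{gioan}, so $\Gamma'$ realizes an abstract order type with radial system~$R$ whose convex hull is~$\tau_i$; hence $\tau_i$ is important.

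Finally, the characterization is checkable in linear time: traverse $R(t)$ once, locating the vertices $c_0,\dots,c_{m-1}$ of~$C$ (their cyclic order around~$t$ is known, and their ranks in $R(t)$ can be retrieved with $O(m)=O(n)$ look-ups), and report as important exactly those top triangles $\tau_i$ that are marked empty and for which $c_{i+1}$ is the immediate successor of $c_i$ in $R(t)$. This uses $O(n)$ triple tests and $O(n)$ additional time. The main obstacle I expect is the crossing analysis in the ``if'' direction -- in particular, ruling out that an edge not incident to~$t$ crosses a side of an empty top triangle, which needs a careful case distinction on where that edge's endpoints sit relative to~$C$ and to the dirty triangles, together with the correct use of Gioan's theorem to turn an uncrossed empty face into a realization whose convex hull is that face.
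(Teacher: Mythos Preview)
Your characterization ``an empty top triangle $\tau_i$ is important iff $c_i$ and $c_{i+1}$ are consecutive in $R(t)$'' is incorrect; the ``if'' direction fails. Here is a concrete six–point counterexample. Take $t=a=(0,10)$, $b=(-5,0)$, $c=(5,0)$, $d=(-2,3)$, $e=(2,3)$ and $p=(0,1)$. Then $C=\langle b,c,e,d\rangle$ is a compact cycle covered by $t$, and the Type~2 hull structure has $p$ inside~$C$. The top triangle $\tau_{ce}=\langle a,e,c\rangle$ is empty in the algorithm's sense, and in $R(t)$ the clockwise order is $c,e,p,d,b$, so $c$ and $e$ are consecutive. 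Yet $\langle a,c,p,d\rangle$ is a compact $4$-cycle with $e$ strictly in its interior, so by Lemma~\ref{lem:important_triangles_not_crossed} the vertex $e$ cannot lie on the convex hull of any realization of~$R$; hence $\tau_{ce}$ is \emph{not} important. One checks directly that no edge of the complete graph crosses the boundary of $\tau_{ce}$, so its interior really is an empty triangular face of~$\Gamma$.

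The actual flaw is in your use of Gioan's theorem. You argue that after re-rooting so that this face is the outer face, the resulting drawing~$\Gamma'$ is pseudo-linear, hence realizes an abstract order type with convex hull~$\tau_i$. But ``pseudo-linear'' is defined only up to isomorphism on the sphere: $\Gamma'$ being pseudo-linear merely means it is sphere-isomorphic to a drawing whose edges extend to a pseudo-line arrangement, and that isomorphism is free to move the outer face. In other words, Gioan's theorem tells you \emph{some} cell of~$\Gamma'$ is the unbounded cell of a pseudo-linear drawing, not that the particular cell $\tau_i$ is. Being an uncrossed empty triangular face is necessary for importance (Lemma~\ref{lem:important_triangles_not_crossed}) but not sufficient; one must also ensure that no vertex of $\tau_i$ lies inside a compact $4$-cycle. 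This is exactly why the paper's proof reduces the question to compact $4$-cycles of the form $\langle t,\tilde p,\tilde q,c_i\rangle$ with $\tilde p,\tilde q$ consecutive in~$R(t)$ (not necessarily on~$C$) and then sweeps all consecutive pairs of $R(t)$, propagating removals along~$C$.
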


\begin{proof}
While we would only need to check top triangles that 
are not dirty, we do not use this fact in the proof.
Again, due to Lemma~\ref{lem:important_partition}, the important
triangles are exactly the top triangles
that are not contained in the interior of a compact $4$-cycle. 
Since~$t$ is an extreme point, any such compact $4$-cycle contains $t$.
Let $D = \seq{t, p, q, r}$ be such a compact 4-cycle, containing 
a top triangle $\tau = \seq{t, c_{i+1}, c_i}$.
Note that, in any abstract order type of the radial system, 
$D$ is in convex position and contains~$\tau$.
The case where the edge $c_i c_{i+1}$ is crossed by some edge 
$ts$ is evident from the radial ordering around~$t$.
But even if this is not the case, observe that there is a convex 
quadrilateral $D'$ containing $\tau$ that has either $c_i$ or 
$c_{i+1}$ as a vertex.
W.l.o.g., let $D' = \seq{t, p, q, c_i}$.

We first claim that if $D'$ exists, then there is a convex 
quadrilateral $\tilde{D} = \seq{t, \tilde{p}, \tilde{q}, c_i}$ 
such that $\tilde{D}$ contains $\tau$ and such that 
$\tilde{p}$ and $\tilde{q}$ 
are consecutive in the radial ordering around $t$. 
Such a quadrilateral clearly exists 
when $p=c_{i+1}$ and $tq$ crosses $c_i c_{i+1}$, so suppose 
this is not the case. If there is no vertex between $p$ and $q$ 
in the clockwise radial ordering around $t$, the claim is true 
with $\tilde{p} = p$ and $\tilde{q} = q$.
Otherwise, let $u$ be a vertex between $p$ and $q$.
Note that $u \neq c_{i+1}$.
If $u$ is in convex position with $t, c_i$, and $q$, we 
can replace $p$ by $u$ and obtain another quadrilateral containing~$\tau$.
Otherwise, we replace $q$ by $u$, also obtaining another convex 
quadrilateral containing $\tau$.
By this process, we eventually find $\tilde{D}$.

It now remains to show how to rule out all top triangles that 
are contained in a compact $4$-cycle.
For these, we know by the previous claim that we only need to check
consecutive pairs in the radial ordering around $t$.
Let $\seq{p_1, \dots, p_{n-1}}$ be the radial ordering around $t$ 
(with $p_1$ being an arbitrary point).
The compact cycle $C$ has exactly one edge that is intersected 
by the triangle $\seq{t, p_1, p_2}$. Let $c_j c_{j+1}$ be that edge.  
If $(c_j, c_{j+1}) = (p_2, p_1)$, then there is no vertex in
$C \setminus \{c_j, c_{j+1}\}$ that forms a compact
$4$-cycle with $\{t, c_j, c_{j+1}\}$, and we continue.
Otherwise, we remove $c_j c_{j+1}$ and 
look at its neighbors $c_{j-1} c_j$ and $c_{j+1} c_{j+2}$.
If a neighboring edge forms a compact $4$-cycle with $p_1 p_2$, we 
remove it as well. We continue this process (i.e., checking the edges of $C$ adjacent to the previously removed ones for convex position with $p_1 p_2$) until no edge is removed.
We then continue with the pair $p_2 p_3$, iteratively removing edges from $C$ that are adjacent to previously removed ones if they are in convex position with~$p_2 p_3$.
We incrementally continue this process for all pairs $p_\ell p_{\ell+1}$ and claim that the edges 
from $C$ we did not remove are exactly the ones that form an 
important triangle with~$t$.

Suppose there exists an edge $c_i c_{i+1}$ of $C$ that is 
contained in a quadrilateral formed with $p_j p_{j+1}$ but is 
not removed by this process.
The triangle $\seq{t, p_j, p_{j+1}}$ intersects the cycle~$C$ (recall
that all vertices not in $C \cup \{t\}$ lie inside $C$).
As this edge is intersected, there is a quadrilateral in convex position containing that edge.
Thus, there is a convex quadrilateral formed with all edges between 
the intersected one and $c_i c_{i+1}$, a contradiction 
(see \fig{fig_type_2_finding_hulls}).
Hence, we are left with exactly those edges that form an important 
triangle with~$t$.
\begin{figure}
\centering
\includegraphics{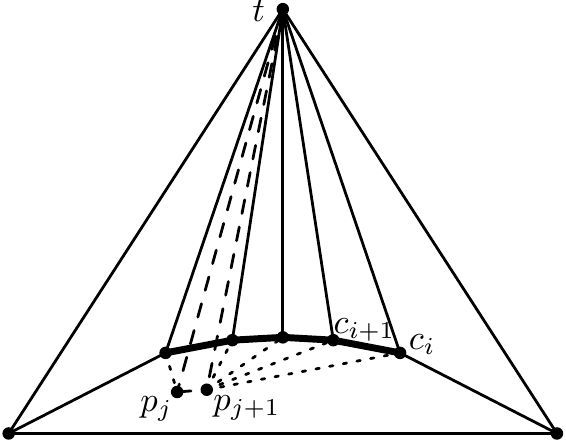}
\caption{If there is a compact $4$-cycle $\seq{t, p_j, p_{j+1}, c_i}$, 
then $c_i c_{i+1}$ is found by incrementally removing all edges of 
$C$ that form a similar quadrilateral (marked bold) with $p_j$ and 
$p_{j+1}$ when starting from the edge that is intersected by the 
triangle $\seq{t, p_j, p_{j+1}}$.
Note that this subset is realized in the same way in any 
realization of~$R$ as it is inside a compact cycle.
}
\label{fig_type_2_finding_hulls}
\end{figure}
\end{proof}

\begin{lemma}\label{lem:type_3_final}
For a Type~3 hull structure, we can decide in linear time 
which of the two independent triangles are important triangles of~$R$.
\end{lemma}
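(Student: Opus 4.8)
The plan is to reduce, for each $i \in \{1,2\}$, the decision ``is $T_i$ important?'' to the inner-important-triangle criterion of Lemma~\ref{lem:important_partition}, and to evaluate that criterion in linear time using the same idea as for top triangles in Lemma~\ref{lem:type_2_final}. First I would record the constraints. By Theorem~\ref{thm:important_triangles_structure} and the invariant maintained for Type~3 hull structures, every important triangle of $R$ is among $T_1, T_2$, and since $R$ is the radial system of an abstract order type, at least one of $T_1, T_2$ is the convex hull of a realizing abstract order type and hence important. The key equivalence I would prove is: whenever some realization $\chi$ of $R$ has convex hull $T_2$, the triple $T_1 = \seq{a,b,c}$ is important if and only if $T_1$ is an \emph{inner} important triangle of $\chi$. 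The direction ``$\Leftarrow$'' is immediate, because an inner important triangle of $\chi$ is by definition an important triangle of $R$; for ``$\Rightarrow$'', if $T_1$ is important then $R$ has the two disjoint important triangles $T_1, T_2$, so we are in case~(3) of Theorem~\ref{thm:important_triangles_structure}, and in $\chi$ the triple $T_1$ must appear as an empty triangle satisfying conditions~(1)--(3) of Lemma~\ref{lem:important_partition}. The statement is symmetric in $T_1, T_2$. Note also that if $\chi$ has hull $T_2$, then the candidate triangle $T_1$ of $Z_n$ is automatically empty with the remaining points split into the three side-regions of conditions~(1) and~(2) of Lemma~\ref{lem:important_partition}; so the only substantive part of the criterion is condition~(3): for any two points $v, w$ in the same side-region $P_x$, the pseudo-line $vw$ crosses the opposite edge of $T_1$.

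Algorithmically I would do the following. We know a priori that at least one of $T_1, T_2$ is a valid hull; fix one of them, say $T_2$, and use Lemma~\ref{lem:sidedness} with a hull edge of $T_2$ (its direction read off from the orientation of $T_2$ in $Z_n$) to answer every needed triple query of the reconstructed chirotope in constant time. Compute the partition of the points into the three side-regions of $T_1$ and check conditions~(1) and~(2) in $O(n)$ triple tests; if they are violated, this is consistent only with $T_2$ not being a valid hull, so $T_1$ must be the valid hull, $T_1$ is important, and we restart the procedure with $T_1$ fixed. Otherwise we must decide condition~(3). If it holds, $T_1$ is important; and in either case $T_2$, being the hull of the fixed realization, is important, so we are done (in the complementary branch, where $T_1$ was the valid hull, the very same procedure run with $T_1$ fixed settles $T_2$).

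The main obstacle is checking condition~(3) in linear rather than quadratic time, since naively it asks about all $\binom{|P_x|}{2}$ pairs. I would handle it exactly as in the proof of Lemma~\ref{lem:type_2_final}: condition~(3) fails for $P_a$ precisely when two points $v, w$ of $P_a$ that are \emph{consecutive} in the radial order around $a$ already form a convex quadrilateral together with $b$ and $c$ whose interior contains $a$ --- any witnessing non-consecutive pair can be reduced to a consecutive one by the ``growing-quadrilateral'' argument of Lemma~\ref{lem:type_2_final} --- and such a quadrilateral would, by Lemma~\ref{lem:important_triangles_not_crossed}, put the vertex $a$ of $T_1$ inside a compact $4$-cycle, which is exactly the obstruction to $T_1$ being important. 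Hence it suffices to scan the points of each side-region in radial order around the corresponding vertex of $T_1$ and test only its $O(|P_x|)$ consecutive pairs, for a total of $O(n)$ triple tests and $O(n)$ additional time; doing this for both $T_1$ and $T_2$ decides which of the two independent triangles are important triangles of $R$.
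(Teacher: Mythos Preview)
Your overall framework---fix one triangle as the hull, compute the tentative chirotope via Lemma~\ref{lem:sidedness}, and test whether the other triangle satisfies the inner-important-triangle criterion of Lemma~\ref{lem:important_partition}---is close in spirit to the paper's proof. However, the linear-time test you propose for condition~(3) is incorrect.

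You claim that condition~(3) for a side-region $P_a$ fails iff it is already witnessed by a pair $v,w\in P_a$ that is \emph{consecutive in the radial order around~$a$}. This is false. Take $a=(0,1)$, $b=(-1,0)$, $c=(1,0)$ and $P_a=\{v_1=(\tfrac12,2),\,v_2=(0,5),\,v_3=(-\tfrac12,2)\}$. The radial order of $P_a$ around $a$ is $v_1,v_2,v_3$; both consecutive pairs $(v_1,v_2)$ and $(v_2,v_3)$ have supporting lines that meet the segment $bc$, yet the line through $v_1$ and $v_3$ (namely $y=2$) misses $bc$. Thus condition~(3) fails while no consecutive-around-$a$ pair witnesses it. The ``growing-quadrilateral'' argument of Lemma~\ref{lem:type_2_final} does not transfer: there the top vertex $t$ is a \emph{vertex} of every quadrilateral under consideration and the consecutive pair is taken around~$t$; in your situation $a$ lies \emph{inside} the quadrilateral $\{v,w,b,c\}$, and the reduction breaks down. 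The paper's Lemma~\ref{lem:important_order} states the correct version: a witnessing pair can always be chosen consecutive in the radial order around $b$ (or $c$), i.e., around an endpoint of the \emph{opposite} edge. Equivalently---and this is what the paper's proof of Lemma~\ref{lem:type_3_final} actually does---one compares the two radial orders of $P_c$ as seen from $a$ and from $b$ (between suitable delimiters $c'$ and $c$) and checks whether they agree; a mismatch exhibits a compact $4$-cycle with edge $ab$.

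There is a second, smaller issue. You assert that if the assumed hull $T_2$ is valid, then conditions~(1) and~(2) for $T_1$ hold automatically, so that their failure certifies that $T_2$ is not the hull. The paper does not claim this and in fact checks the partition explicitly; when the partition fails it concludes only that \emph{exactly one} of $T_1,T_2$ is important, and then decides which one by testing whether $T_1$ lies inside a compact $4$-cycle. Your inference ``partition fails $\Rightarrow$ $T_2$ is not the hull'' therefore needs an independent justification, or should be replaced by the paper's containment test.
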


\begin{proof}
Let $T_1 = \seq{a, b, c}$ and $T_2 = \seq{a', c', b'}$ 
be the two independent triangles of the Type~3 hull structure.
(Recall that they are labeled such that the edges $aa'$, $bb'$ and $cc'$ are uncrossed in the subdrawing with these six vertices.)
We first check whether there is a partition of the vertices 
not on $T_1$ or $T_2$ into sets $P_a$, $P_b$ and $P_c$ as 
required by Lemma~\ref{lem:important_partition} with $T_2$ on 
the convex hull. We do not verify part~(3) in
Lemma~\ref{lem:important_partition} yet; hence this step is 
easily done in linear time. If such a partition does not exist, 
there is some point $p$ not within this partition and we know 
that exactly one of $T_1$ and $T_2$ is important.
Hence, $T_1$ is contained in a compact $4$-cycle~$Q$ if and only 
if $T_2$ is important.

Otherwise, suppose that such a partition does exist.
We next check the second condition in Lemma~\ref{lem:important_partition}.
For, say, $ab$, let $P_c$ be the points that are in the corresponding 
partition. We have to check whether there is a pair $(v,w)$ in $P_c$ 
that forms a compact $4$-cycle with $ab$.
We consider the elements in the clockwise radial ordering 
around $a$ between $c'$ and $c$, and compare it to the 
counterclockwise radial ordering around $b$, also between 
$c'$ and $c$. We proceed analogously for the other edges of~$T_1$.
If each pair of these orders is consistent, we know that $T_1$ 
is an important triangle under the assumption that $T_2$ is the 
convex hull of~$P$; if the assumption is not true, then $T_1$ 
has to be the convex hull of~$P$ anyway.
So suppose the orders are conflicting for, w.l.o.g., $ab$.
This means that there is a compact 4-cycle~$Q$ with $ab$ as an edge.
We know that $Q$ has to be realized by four points in convex position.
Hence, if $T_1$ is contained in~$Q$, then $T_2$ is the important triangle.
Otherwise, since $Q$ separates $T_1$ from $T_2$, $T_2$ is contained 
in $Q$ and $T_1$ is the important triangle.
\end{proof}

For each important set we obtained for the radial system~$R$, 
its chirotope is now given by Lemma~\ref{lem:sidedness}.
This proves Theorem~\ref{thm:linear_preprocessing}.

Recall that we assumed that there is at least one realization of~$R$.
We can now check this assumption in the following way.
We build the dual pseudo-line arrangement using an arbitrary 
chirotope we obtained for~$R$ using Lemma~\ref{lem:sidedness}.
This whole process takes $O(n^2)$ 
time.\cite{power_of_duality,constructing_arrangements}
If it fails then $R$ has no realization.
Otherwise, the dual pseudo-line arrangement explicitly gives the 
rotation system of the corresponding abstract order type, which 
we now compare to~$R$.

\begin{corollary}\label{cor:radial_orderings}
Testing whether a set of radial orderings is the radial 
system of an abstract order type can be done in~$O(n^2)$ time.
\end{corollary}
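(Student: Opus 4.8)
The plan is to combine the linear-time machinery of Theorem~\ref{thm:linear_preprocessing} with a single pseudo-line arrangement construction. We are given a purported counterclockwise radial system $R$ (if the input arrives as an undirected radial system, we first apply Theorem~\ref{thm:directing_radial_systems}, which either fails or fixes the directions in $O(n)$ additional time). Run the hull-structure algorithm of Lemma~\ref{lem:wrap_up} on $R$ under the working hypothesis that $R$ is realizable. This costs $O(n)$ time once a triple-test data structure is available; building that structure takes $\Theta(n^2)$ time, which is within the target bound and in any case unavoidable, since the input already has size $\Theta(n^2)$. If at some point the algorithm contradicts its working hypothesis (no compact $4$-cycle among the first five points, a predicted neighbour that does not exist, and so on), we stop and report that $R$ is not a valid radial system. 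Otherwise it returns a hull structure $Z_n$, and feeding one of its candidate hull edges into Lemma~\ref{lem:sidedness} yields a sign function $\chi\colon[n]^3\to\{-1,+1,0\}$ that is evaluable with $O(1)$ triple queries.

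Next, we attempt to build the dual pseudo-line arrangement of $\chi$. Treating $\chi$ as an oracle, the standard incremental construction~\cite{power_of_duality,constructing_arrangements} inserts the pseudo-lines one by one; this either completes in $O(n^2)$ time, thereby certifying that $\chi$ is a genuine abstract order type and producing an explicit arrangement, or uncovers an inconsistency, in which case $\chi$ is not an abstract order type and we again report ``not realizable''. From a successfully built arrangement we read off, in $O(n^2)$ time, the rotation system $R_\chi$ of $\chi$: traversing each pseudo-line records the cyclic order in which the others are crossed, which is the radial ordering about the corresponding point (up to the split at the line's own direction). Finally we compare $R_\chi$ with $R$ entrywise, allowing a simultaneous reversal of all $n$ orderings, since the order type, and hence its rotation system, is determined only up to negation of the chirotope; this comparison is $O(n^2)$. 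We answer ``yes'' precisely when $R_\chi$ and $R$ agree.

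For correctness, note that if $R$ is the radial system of an abstract order type $\chi^\ast$, then by Theorems~\ref{thm:urs_combined} and~\ref{thm:linear_preprocessing} the algorithm runs to completion and the candidate edge we chose is a genuine convex-hull edge of some member of $T(R)$; hence the sign function returned by Lemma~\ref{lem:sidedness} is exactly that abstract order type, its dual arrangement exists, $R_\chi = R$, and we answer ``yes''. Conversely, if we answer ``yes'' then $\chi$ is a valid abstract order type with $R_\chi = R$, so $R$ is realizable; and if any stage of the pipeline fails, $R$ cannot be realizable, because a realization would make every stage succeed. Summing $O(n)+O(n^2)+O(n^2)+O(n^2)=O(n^2)$ gives the bound, which is optimal since $\Omega(n^2)$ triple tests are needed (Proposition~\ref{prop:existence_lower_bound}).

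The one genuinely non-routine ingredient is the second step: verifying in only $O(n^2)$ time that the sign function produced above actually satisfies the abstract-order-type (CC-system) axioms. A direct check of those axioms touches $\Omega(n^3)$ triples and is too slow, so instead one exploits the fact that the dual wiring diagram of a valid abstract order type can be built incrementally, inserting the $i$-th curve into the arrangement of the first $i-1$ in $O(i)$ time while locally detecting any violation, which is exactly what the cited arrangement-construction results supply. The remaining pieces (directing and comparing radial orderings, extracting a rotation system from an arrangement) are straightforward bookkeeping, each linear in the $\Theta(n^2)$ input size.
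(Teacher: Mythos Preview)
Your proof is correct and follows essentially the same approach as the paper: run the hull-structure algorithm under the realizability assumption, extract a chirotope via Lemma~\ref{lem:sidedness}, build the dual pseudo-line arrangement in $O(n^2)$ time (reporting failure if this aborts), read off its rotation system, and compare with~$R$. Your write-up is more explicit about the correctness argument and the role of the arrangement construction in avoiding an $\Omega(n^3)$ axiom check, but the underlying pipeline is identical to what the paper sketches in the paragraph preceding the corollary.
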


We can give matching lower bounds for these subtly different settings.

\begin{proposition}\label{prop:number_lower_bound}
Given a radial system~$R$ of an order type, we need $\Omega(n)$ queries
to $R$ in the worst case to determine $|T(R)|$.
\end{proposition}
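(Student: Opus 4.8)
The plan is to exhibit an explicit family of radial systems on which any correct algorithm can be forced by an adversary to make $\Omega(n)$ triple tests before it can commit to the value of $|T(R)|$. The natural candidate is a radial system that is ``almost'' a single large convex polygon (so that $|T(R)|=1$), but in which one carefully chosen point $p$ plays a double role: depending on how $p$ interacts with a particular triple of hull vertices, the structure either stays Type~1 (one important set, $|T(R)|=1$) or degenerates into a Type~2 structure with a top triangle that turns out to be important, raising $|T(R)|$. Concretely, I would take a point set whose convex hull is a triangle $\seq{a,b,c}$ together with one designated point $p$ that, in the realization fixed by the adversary so far, is the top vertex ``destroying'' the triangle $\seq{t,a,b}$ as described in Section~\ref{sec:obtaining_hull_edges}; whether that top triangle is dirty or empty (hence whether it contributes to $T(R)$) is controlled by a single unread point among $n-O(1)$ indistinguishable candidates.

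First I would set up the adversary's state: it maintains a set of points in convex position arranged so that $\Omega(n)$ of them are interchangeable ``padding'' points lying deep inside the hull, none of whose radial orderings the algorithm has fully inspected. The adversary answers every triple test consistently with \emph{both} of two target order types $\chi_1$ (with $|T(R)| = 1$) and $\chi_2$ (with $|T(R)| = 2$, or more generally $|T(R)|$ differing from that of $\chi_1$), which agree on all triples not involving the ``switch'' point $p$. Second, I would verify that as long as the algorithm has queried fewer than some $c\cdot n$ triples, there remains at least one padding point whose position relative to the relevant top triangle is still undetermined, so the adversary retains the freedom to realize either $\chi_1$ or $\chi_2$. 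This is exactly the mechanism hinted at in the paper: ``an adversary can use any unconsidered point\dots to alter $|T(R)|$ (e.g., by using it to destroy a top triangle\dots)''. Third, since $\chi_1$ and $\chi_2$ yield different values of $|T(R)|$, the algorithm cannot output the correct answer while this freedom persists; hence it must perform $\Omega(n)$ triple tests.

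The main obstacle will be the bookkeeping needed to guarantee that the two partial order types can genuinely be extended to full, \emph{valid} abstract order types (equivalently, by Pilz--Welzl, to point sets) that agree on all queried triples yet differ in $|T(R)|$; in particular one must check that ``switching'' the role of a single unread padding point does not silently force changes in far-away triples that the algorithm might already have seen. I would handle this by building $\chi_1$ and $\chi_2$ from an explicit geometric picture — a large convex polygon with a triangular hull and a controlled interior — where the only combinatorial difference is localized at $p$ and the switch point, and invoking Lemma~\ref{lem:important_partition} to certify that the top triangle in question is (resp.\ is not) important in $\chi_2$ (resp.\ $\chi_1$). The counting — that $\Omega(n)$ triples must be revealed before the switch point's role is pinned down — then follows from a straightforward pigeonhole argument over the padding points.
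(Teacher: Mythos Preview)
Your high-level plan (adversary argument, keep two realizations alive that differ in $|T(R)|$, force $\Omega(n)$ queries by pigeonhole) is the right shape, but the concrete construction does not work as stated and is missing the one technical point that makes the argument go through: \emph{locality of the perturbation}. You place the $\Omega(n)$ interchangeable ``padding'' points \emph{deep inside the hull} and then want to move one of them so as to flip the dirty/empty status of a fixed top triangle. But a point deep inside the configuration lies on the ``wrong side'' of many pseudolines; moving it far enough to affect a boundary top triangle changes its position in $\Omega(n)$ radial orderings, not $O(1)$. Consequently a single triple test can distinguish $\chi_1$ from $\chi_2$ for many candidate switch points at once, and the pigeonhole step collapses. (Your write-up also conflates roles: $p$ is first ``the top vertex'' and later ``the switch point'', and a triangle $\seq{t,a,b}$ appears with an undefined $t$ while the hull is $\seq{a,b,c}$; this signals that the two target order types were never actually pinned down.)

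The paper's construction is engineered precisely to get the locality you need. Take $n-1$ points in convex position and one point $t$ that covers them, so the hull structure is Type~2 with $n-1$ top triangles. For any three consecutive cycle vertices $a,b,c$ (in the rotation around $t$), the adversary may push $b$ across the chord $ac$ into the interior of the compact cycle. The crucial observation is that this move changes \emph{only} $R(a)$ and $R(c)$, and in each it merely swaps two adjacent elements; no other radial ordering, and no non-adjacent ranks, are affected. Thus to eliminate the adversary's freedom at $b$ the algorithm must perform a triple test that reads the relative order of $b$ and one of its neighbours in $R(a)$ or $R(c)$. Each triple test touches $O(1)$ such adjacencies, there are $n-1$ of them, and the two resulting radial systems have different $|T(R)|$; the $\Omega(n)$ bound follows. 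If you want to repair your argument, replace the ``deep interior padding'' picture by this convex-position-plus-cover picture, and make the adjacent-swap locality explicit.
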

\begin{proof}
We show the lower bound by an adversary argument.
Intuitively, an adversary can place any unconsidered point to 
``destroy'' the hull structure defined by the points already queried 
by an algorithm. However, as we are also given the indices of each point 
in the rotation around another one, the adversary must not place a 
point in a way that alters these indices significantly.
Thus, our proof uses the following setting.
Consider $n-1$ points in convex position and a point $t$ such that 
the hull structure of these $n$ points is of Type~2 with $t$ as 
top vertex. Let $a, b,$ and $c$ be three consecutive vertices in 
the rotation around $t$.
If the adversary moves $b$ over the edge $ac$, then only 
$R(a)$ and $R(c)$ will change:
in $R(a)$, the elements $b$ and $c$ swap their position, and 
the analogous happens in $R(c)$.
Also, note that, since the swapped elements are adjacent 
in the rotations, only the indices of these two vertices change.
So the adversary has $n-1$ points that could be moved inside the 
compact cycle of the resulting hull structure, and for each of these 
$n-1$ points, an algorithm has to determine the position in the rotation 
around one of its neighbors.
Hence, we need at least a linear number of queries.
Finally, we remark that all such abstract order types are actually 
realizable.
\end{proof}

\begin{proposition}\label{prop:existence_lower_bound}
Given a radial system~$R$, we need $\Omega(n^2)$ queries to the 
radial system to determine whether $T(R) \neq \emptyset$.
\end{proposition}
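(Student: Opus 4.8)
The plan is to prove Proposition~\ref{prop:existence_lower_bound} by a straightforward adversary argument, mirroring the philosophy of the proof of Proposition~\ref{prop:number_lower_bound} but exploiting the fact that now the algorithm does not know a priori that $R$ is realizable. The key idea is that, whereas in the realizable case the adversary must preserve validity (and can only locally swap adjacent elements), here the adversary is allowed to declare the input invalid. So we maintain throughout the interaction a ``core'' configuration that is a genuine radial system of an abstract order type, and we keep a large set of pairs of elements that the algorithm has not yet distinguished. As long as such a pair exists, the adversary answers consistently with the core configuration, but reserves the right to swap those two elements in one radial ordering. If the algorithm terminates, the adversary can realize one of the two possibilities to make it wrong.

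Concretely, first I would fix a base point set: take $n-1$ points in convex position together with an interior point $t$ so that the hull structure is of Type~2 with $t$ the top vertex, exactly as in the proof of Proposition~\ref{prop:number_lower_bound}. Now pair up the $n-2$ convex-hull points (other than the two neighbors of some fixed reference point) into roughly $(n-2)/2$ disjoint consecutive pairs $\{u_j, w_j\}$ along the convex hull, so that $u_j$ and $w_j$ are adjacent in the radial ordering around their common convex-hull neighbor $v_j$. For each such pair, the ``unperturbed'' state has $u_j, w_j$ in one cyclic order around $v_j$; the ``perturbed'' state (moving $w_j$ slightly inside, across the relevant hull edge) swaps $u_j$ and $w_j$ only in $R(v_j)$ and in one other ordering, and — crucially — in both states the configuration is a valid radial system of an order type, but one of the two states (say, perturbing all pairs simultaneously in an inconsistent way, or making two pairs cross) can be made \emph{invalid}. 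Actually, to force a yes/no decision on emptiness of $T(R)$ we want the adversary to be able to commit, for a single not-yet-queried pair, either to the valid order or to a specifically chosen \emph{invalid} one obtained by a local swap that cannot be completed to any order type.

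The main step is the following counting observation: a single triple query to $R$ returns the relative order of three specified elements in one radial ordering $R(x)$; to determine, for a pair $\{u_j, w_j\}$, whether they appear in the ``unperturbed'' or ``perturbed'' relative order around $v_j$, the algorithm must perform a query on a triple that includes both $u_j$ and $w_j$ in the ordering $R(v_j)$ (or an equivalent query that pins down this local orientation). Since the $\Theta(n)$ pairs are pairwise disjoint and live in distinct radial orderings, a single query can resolve at most one pair. Hence any algorithm that has made fewer than, say, $(n-2)/4$ queries leaves at least $\Theta(n)$ pairs undetermined, in particular at least one pair $\{u_{j^*}, w_{j^*}\}$ on which no informative query was made. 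The adversary then, depending on the algorithm's output: if the algorithm says $T(R) \neq \emptyset$, it commits $\{u_{j^*}, w_{j^*}\}$ to the invalid local swap (which, by the local-swap / parity reasoning behind Observations~\ref{obs_4tuple_p4} and \ref{obs_4tuple_swaps}, yields a system that is not the radial system of any abstract order type), so $T(R) = \emptyset$; if the algorithm says $T(R) = \emptyset$, it commits to the valid state and $T(R) \neq \emptyset$. Either way the answer is wrong, proving the $\Omega(n^2)$ bound — wait, this only gives $\Omega(n)$; to get $\Omega(n^2)$ I must iterate: the adversary maintains $\Theta(n)$ \emph{independent} such gadgets, one around each of $\Theta(n)$ distinct ``hub'' vertices, so that resolving all of them requires one query per gadget \emph{per hub}, giving $\Theta(n^2)$ pairs, each needing its own query.

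So the hard part, and where I would spend the most care, is engineering the global configuration so that there are $\Theta(n^2)$ genuinely independent binary choices, each flippable by a local adjacent-transposition in exactly one radial ordering without side effects on the other $n^2 - 1$ indices, and such that exactly one choice among all of them can be toggled to an invalid state while all others stay valid (so that the core always remains a legitimate radial system until the very last commitment). A clean way is to take a point set whose radial structure is ``generic'' enough that around every point $x$ there are $\Theta(n)$ adjacent pairs in $R(x)$ whose transposition corresponds to a legal small perturbation (moving one point across a line through $x$ and another point), and to argue that each triple query reveals the relative order of only one pair in only one $R(x)$; the information-theoretic accounting then forces $\Omega(n^2)$ queries. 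I would present the short version: reduce to the statement ``the input has $\Theta(n^2)$ bits, each readable only by a distinct query, and the adversary can set the last unread bit to flip realizability,'' and cite the parity arguments (Observations~\ref{obs_4tuple_p4}–\ref{obs_4tuple_swaps}) to justify that a single wrong local swap destroys realizability.
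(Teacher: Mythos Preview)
Your final paragraph contains the right counting argument and matches the paper's approach: there are $\Theta(n^2)$ adjacent pairs across all the radial orderings $R(i)$, each triple query pins down at most three of them, and a single adjacent transposition in one $R(i)$ destroys realizability by Observation~\ref{obs_4tuple_swaps}. But most of what precedes that paragraph is unnecessary, and there is residual confusion even at the end.

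You do not need any construction --- no Type~2 structure, no gadgets, no hubs, no ``genericity'', no worry about independence of binary choices. The paper's proof is a few sentences: the adversary fixes an \emph{arbitrary} abstract order type and answers all queries according to its radial system $R$; after $o(n^2)$ queries some adjacent pair $(j,k)$ in some $R(i)$ has never appeared together in a query to $R(i)$; swapping $j$ and $k$ there is consistent with every answer already given; and by Observation~\ref{obs_4tuple_swaps} the swapped system is not the radial system of any abstract order type (any $4$-set $\{i,j,k,m\}$ now exhibits an odd number of swaps relative to $R$).

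Your phrase ``whose transposition corresponds to a legal small perturbation (moving one point across a line)'' is exactly backwards: that is what the adversary needs in Proposition~\ref{prop:number_lower_bound}, where both outcomes must be valid radial systems. Here the adversary \emph{wants} the swap to produce an invalid system, and Observation~\ref{obs_4tuple_swaps} guarantees that any single adjacent swap does so. There is nothing to engineer, and the detour through an $\Omega(n)$ bound before ``iterating'' to $\Theta(n)$ hubs can be dropped entirely.
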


\begin{proof}
Recall Observation~\ref{obs_4tuple_swaps}.
The adversary starts by presenting the radial system of an 
(arbitrary) abstract order type. If an algorithm does not inspect 
the relative order of any adjacent pair in the rotation 
around any point, the adversary can swap exactly these pairs.
The resulting radial system cannot be the one of an abstract order 
type (and not even of a good drawing).
The quadratic lower bound follows.
\end{proof}

Note that Proposition~\ref{prop:existence_lower_bound} applies to 
checking whether, for a set $R$ of radial orderings, 
$T(R) \neq \emptyset$, while Proposition~\ref{prop:number_lower_bound} 
applies to determine $|T(R)|$ under the assumption that $|T(R)| \geq  1$, 
in the same way as Theorem~\ref{thm:linear_preprocessing} provides the 
hull structure of $R$ under the assumption that $R$ is the radial system 
of an abstract order type, while Corollary~\ref{cor:radial_orderings} is 
for checking whether a set of radial orderings is the radial system of 
an abstract order type.

We can apply our insights to obtain all important sets of a given 
chirotope.

\begin{theorem}\label{thm:other_direction}
Given an abstract order type, a hull structure of its radial 
system can be found in $O(n \log n)$ time. Further, the faces in the 
hull structure that can become convex hulls can be reported in the same 
time.
\end{theorem}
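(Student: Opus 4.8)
\textbf{Proof proposal for Theorem~\ref{thm:other_direction}.}
The plan is to reuse the incremental hull-structure machinery of Section~\ref{sec:obtaining_hull_edges}, but to supply the insertion steps with direct access to the chirotope $\chi$ instead of performing triple tests on a radial system. Every primitive the algorithm needs---deciding whether an edge $c_iv$ emanates to the inside or outside of a cycle, whether four vertices form a compact $4$-cycle, whether one edge crosses another, whether a vertex is inside a triangle---is an orientation predicate on a constant number of points, hence can be answered in $O(1)$ time from $\chi$. The only part of the construction that genuinely relied on radial orderings is the choice of which point $p_k$ to insert next so that updates cost $O(1)$ amortized time. I would replace that step as follows.

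First I would compute the radial system $R$ of $\chi$: for each vertex $i$, sort the remaining $n-1$ points angularly around $p_i$ using $\chi$-based comparisons. Because $\chi$ answers an orientation query in $O(1)$ time, each such sort costs $O(n\log n)$, for a total of $O(n^2\log n)$---too slow. Instead I would only sort around a constant number of vertices on demand: whenever the current hull structure needs to find a vertex $v$ whose edge to some boundary vertex $c$ emanates to a prescribed side (the Type~1, Type~2, and Type~3 insertion rules each ask exactly this of one or two specific boundary vertices), I sort $R(c)$ once, in $O(n\log n)$ time, and cache it. The key observation, already implicit in Section~\ref{sec:obtaining_hull_edges}, is that over the whole run the algorithm only ever needs the radial orderings around $O(1)$ distinct vertices per phase transition plus, inside a Type~2 phase, the orderings around the vertices that get deleted from the compact cycle; since a vertex is deleted at most once, and each deletion can be charged $O(n)$ work for its (single) sort, and there are $O(n)$ deletions... that is again $O(n^2\log n)$. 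So the honest route is the first one: sort around \emph{every} vertex once up front, $O(n^2\log n)$ total; but the theorem claims $O(n\log n)$, so this cannot be it either.

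The resolution---and the step I expect to be the main obstacle to state cleanly---is that one must \emph{not} reconstruct the full radial system. Given $\chi$ directly, I would instead find a single convex-hull edge $ab$ of $\chi$ in $O(n)$ time (e.g.\ take the point with lexicographically smallest coordinates in any realization; equivalently, a point $a$ such that all orientations $\chi(a,\cdot,\cdot)$ have a consistent sign, found by a Jarvis-march-style scan using $O(n)$ orientation queries), then compute the convex hull $\seq{h_0,\dots,h_{s-1}}$ of $\chi$ by a gift-wrapping pass in $O(ns)$ time, or by Graham's scan in $O(n\log n)$ time after sorting the points angularly around $a$ (one sort, $O(n\log n)$). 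If $s\geq 4$, Theorem~\ref{thm:important_triangles_structure} tells us this hull is the unique important set and the Type~1 hull structure is done. If $s=3$, the hull is a triangle $\seq{a,b,c}$; the inner important triangles are then characterized by Lemma~\ref{lem:important_partition}, and I would enumerate candidates as follows: for each vertex $v$, its three ``sides'' relative to the directed lines $ab$, $bc$, $ca$ are computed in $O(1)$ from $\chi$; an inner important triangle $\seq{x,y,z}$ must be empty and must induce the three-part partition of condition~(2), so its vertices are forced once we know which ``corner region'' is shrunk---this yields only $O(n)$ candidate triples, which we can enumerate by walking the angular order around $a$ (the one sort we already paid for) and, symmetrically, around $b$ and $c$ (two more sorts). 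Verifying conditions (1) and (3) for all $O(n)$ candidates simultaneously is then exactly the linear-time consistency check already used in Lemmas~\ref{lem:type_2_final} and~\ref{lem:type_3_final}, adapted to read from $\chi$: for a candidate edge $xy$ bounding a would-be partition class $P_z$, condition~(3) fails iff some pair in $P_z$ forms a convex quadrilateral with $xy$, which is detected by comparing the clockwise order of $P_z$ around $x$ with the counterclockwise order around $y$---again available from the $O(1)$ sorts we performed.

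Thus the total cost is $O(n\log n)$ for the (at most six) angular sorts plus $O(n)$ for every scan, comparison, and partition check, and the hull structure together with its important faces is reported within this bound. The delicate point to get right is the enumeration of the $O(n)$ candidate inner important triangles from $\chi$ without a full radial system: one must argue, via Lemma~\ref{lem:important_partition}(2), that fixing the ``inner'' triangle's orientation and the requirement that each of $P_a,P_b,P_c$ be nonempty leaves only linearly many combinatorially distinct choices, each determined by a threshold in the angular order around one hull vertex. Once that is established, correctness is immediate from Lemmas~\ref{lem:important_partition}, \ref{lem:type_2_final}, and~\ref{lem:type_3_final}, and the running time is as claimed.
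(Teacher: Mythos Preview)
Your proposal arrives at the right overall strategy---compute the convex hull of $\chi$ in $O(n\log n)$ time, and if it is a triangle $\seq{a,b,c}$, sort radially around the three hull vertices and search for further important triangles---and this matches the paper's proof. The false starts at the beginning (trying to run the incremental algorithm and sorting on demand) are distracting but harmless once you abandon them.

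The genuine gap is in your enumeration of candidate inner important triangles. You write that ``its vertices are forced once we know which `corner region' is shrunk---this yields only $O(n)$ candidate triples, \dots\ each determined by a threshold in the angular order around one hull vertex,'' but you never say what that threshold is or why it pins down a vertex of the candidate triangle. Without this, there is no algorithm to analyze. The paper fills this hole by treating the two cases of Theorem~\ref{thm:important_triangles_structure} separately. For the disjoint-triangle (Type~3) case it uses a concrete prefix-matching trick: compare the clockwise order around $a$ with the counterclockwise order around $b$, both linearized starting at $c$; the last vertex where the two prefixes agree is a vertex of the inner triangle (the proof argues this via Lemma~\ref{lem:important_partition}), and the other two vertices are found symmetrically. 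This yields a single candidate triple, verified in $O(n)$ time via Lemma~\ref{lem:important_order}. For the shared-vertex (Type~2) case the paper does not attempt to list candidate triples at all; instead it guesses which of $a,b,c$ is the top vertex $t$, computes the convex hull of $P\setminus\{t\}$ (another $O(n\log n)$ step), and applies Lemma~\ref{lem:type_2_final} directly to the resulting Type~2-like structure.

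So your outline is close to the paper's, but you should replace the vague ``threshold'' enumeration with these two concrete subroutines; your current formulation also conflates the two cases and invokes Lemmas~\ref{lem:type_2_final} and~\ref{lem:type_3_final} before any hull structure has actually been built.
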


\noindent
To show this theorem, we use the following lemma.

\begin{figure}
\centering
\includegraphics{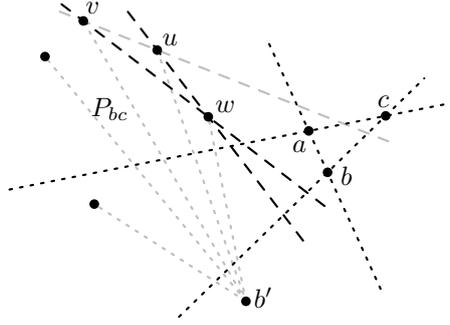}
\caption{If two points $v$ and $w$ are in convex position with 
$bc$, then for any point~$b'$ in the wedge at $b$ there is a 
consecutive pair for which this is also the case.}
\label{fig_important_partition_b}
\end{figure}

\begin{lemma}\label{lem:important_order}
Let $\seq{a, b, c}$ be an empty triangle in an abstract order 
type $\chi$ on a set~$P$. Let $P_b$ be the set of points to the right 
of $bc$ and to the right of $ab$, and let $P_{bc}$ be the set of points 
to the left of $bc$. If there exist two points $v, w \in P_{bc}$ such 
that the line $vw$ does not intersect the edge $bc$ (i.e., the four 
points are in convex position), then, for any point 
$b' \in P_b \cup \{b\}$, there are two points $v', w' \in P_{bc}$ 
that are in convex position with $bc$ and are consecutive in the 
radial ordering around~$b'$ (among the elements of $P_{bc}$).
\end{lemma}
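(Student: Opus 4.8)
The plan is to prove the contrapositive, working in a fixed realization of $\chi$ by a generalized configuration of points $P$ (everything below concerns only convex position of quadruples, radial orders, and point-in-triangle incidences, so arguing as with points is harmless). So assume that no two elements of $P_{bc}$ that are consecutive in the radial order around $b'$ are in convex position with $bc$; the goal is to conclude that then no two elements of $P_{bc}$ are, contradicting the existence of $v,w$. Put $bc$ on a horizontal line with $a$ above it and $b$ left of $c$, so that $P_{bc}$ is the set of points strictly above $bc$ and contains $a$, and the lines $ab,ac$ split the open half-plane above $bc$ into the (empty) triangle $\triangle abc$, the cone $P_a$ opposite $\angle bac$, and the two edge regions $E_{ab},E_{ac}$ across the edges $ab,ac$. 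Since $b'\in P_b\cup\{b\}$ lies on the closed lower side of $bc$, from $b'$ the whole of $P_{bc}$ spans less than a half-turn, so its radial order around $b'$ is a genuine linear order $\sigma$. On $P_{bc}$ define $x\trianglelefteq y\iff x\in\triangle bcy$ (equivalently $\triangle bcx\subseteq\triangle bcy$): this is a partial order, and $\{b,c,x,y\}$ is in convex position exactly when $x,y$ are $\trianglelefteq$-incomparable (if one lies in the other's triangle over $bc$ they are plainly not in convex position, and otherwise all four points are hull vertices). Thus the standing assumption says precisely that every pair of $\sigma$-consecutive elements is $\trianglelefteq$-comparable.

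Two consequences of $\triangle abc$ being empty are then recorded. (a) $a$ is $\trianglelefteq$-minimal, since $x\trianglelefteq a$ would put $x\in\triangle abc$. (b) $a$ is in convex position with $bc$ together with every point $q$ of $P_{bc}$ outside $P_a\cup\{a\}$: if $q\in E_{ac}$ then $q$ and $b$ lie on opposite sides of line $ac$, and the segment $bq$ can leave $E_{ac}$ only through the edge $ac$ (it cannot re-meet the lines $ab$ or $bc$, which both pass through $b$), so $bq$ crosses $ac$ and $\{a,b,c,q\}$ is a convex quadrilateral; the case $q\in E_{ab}$ is symmetric with $c$ in place of $b$. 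For $q\in P_a$, by contrast, $a\in\triangle bcq$, i.e. $a\trianglelefteq q$, because $q$ lies off $\ell_{ab}$ on the side away from $c$ and off $\ell_{ac}$ on the side away from $b$, which forces $a$ into $\triangle bcq$.

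Now the endgame. If one of the one or two $\sigma$-neighbours of $a$ lies outside $P_a$, then by (b) that neighbouring pair is in convex position with $bc$, contradicting the assumption. Otherwise every $\sigma$-neighbour of $a$ lies in $P_a$, and here the precise hypothesis $b'\in P_b\cup\{b\}$ (rather than just "below $bc$") is used: since $b'$ is on the far side of line $ab$ from $c$, for $q\in E_{ac}$ the segment $b'q$ crosses the edge $ab$, forcing the direction $b'\!\to q$ to lie strictly between the directions $b'\!\to b$ and $b'\!\to a$; hence $E_{ac}$ occupies one contiguous block of $\sigma$ lying entirely on the $b$-side of $a$, and a parallel argument pins down the block of $E_{ab}$. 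Consequently every $\trianglelefteq$-incomparable (i.e.\ convex) pair of $P_{bc}$ lies inside a single block $E_{ac}$, $E_{ab}$, or $P_a$ — cross-block pairs being ruled out by the fact that $a$ sits between the blocks together with (b). Finally, within one block the points form a convex cap over the chord $bc$ (cf.\ Observation~\ref{obs:compact_cycle_convex}), so that there "every $\sigma$-consecutive pair $\trianglelefteq$-comparable" propagates to "every pair $\trianglelefteq$-comparable"; combining everything, all pairs of $P_{bc}$ are $\trianglelefteq$-comparable, which is the desired contradiction.

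I expect the genuine obstacle to be the last paragraph — establishing the block decomposition of $\sigma$ (the one place where the exact location of $b'$ matters) and verifying the within-block propagation of $\trianglelefteq$-comparability — whereas the reduction to the contrapositive, the partial order $\trianglelefteq$, and facts (a)–(b) are routine.
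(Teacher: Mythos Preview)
Your reduction to the contrapositive and the partial order $\trianglelefteq$ are fine, and facts (a) and (b) are correct. The proof breaks in the last paragraph, exactly where you suspected.

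The claim ``for $q\in E_{ac}$ the segment $b'q$ crosses the edge $ab$'' is false: $b'$ and $q$ are indeed on opposite sides of the \emph{line} $ab$, but the crossing point can lie on the extension of $ab$ beyond $b$. Concretely, take $b=(0,0)$, $c=(1,0)$, $a=(\tfrac12,1)$, $b'=(-1,-\tfrac12)\in P_b$, and $q=(10,1)\in E_{ac}$; the segment $b'q$ meets line $ab$ near $(-0.2,-0.4)$, well below $b$. Worse, $E_{ac}$ is not a contiguous block of $\sigma$ at all: with the same $b'$, the points $(0.5,1.1)\in P_a$, $(2,3)\in E_{ac}$, $(0.5,2)\in P_a$ sit at angles roughly $47^\circ,49^\circ,59^\circ$ from $b'$, so $E_{ac}$ and $P_a$ interleave. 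Once the block decomposition collapses, both the ``cross-block pairs are ruled out by $a$ sitting between the blocks'' step and the ``within-block propagation'' step have nothing to stand on (and the latter was already unjustified: points of $E_{ac}$ certainly need not form a convex cap over $bc$).

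The paper avoids all of this with a direct shrinking argument, which in your language is the single local lemma: if $v\prec u\prec w$ in $\sigma$ and $(v,w)$ is $\trianglelefteq$-incomparable, then at least one of $(v,u)$, $(u,w)$ is $\trianglelefteq$-incomparable --- namely $(u,w)$ if $u$ lies on the $bc$-side of line $vw$, and $(v,u)$ otherwise. Iterating this brings the incomparable pair down to consecutive positions in $\sigma$. No global block structure is needed; the only role of $a$ and the first observation in the paper's proof is to pin down the linearisation of $\sigma$.
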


\begin{proof}
See \fig{fig_important_partition_b}.
Observe first that if there is a point of $P_{bc}$ that is between $a$
and $b$ or $c$ (if $b' = b$) in the radial ordering around $b'$, then
$a$ and this point are in convex position with $bc$ (recall that $abc$
is empty).  We consider the linear order given by the radial ordering
of~$b'$ with $a$ as the last element.  W.l.o.g., let $v$ precede~$w$
in that linear order.  Let $u$ be a point between $v$ and $w$ (if no
such point exists, we are done).
Suppose $u$ and $bc$ are on the same side of $vw$.  Then the line (or
pseudo-line) $uw$ does not intersect the edge $bc$.  Otherwise, if $u$
and $bc$ are on different sides of $vw$, then the line $vu$ does not
intersect the edge $bc$.  Hence, this line also does not intersect
$bc$ and the two points are closer to each other in the linearized
order around~$b'$.
\end{proof}

In particular, note that if no two points of $P_{bc}$ are in convex
position with $bc$, then $P_{bc} = P_a \cup \{a\}$.

\begin{proof}[Proof of Theorem~\ref{thm:other_direction}]
For the given abstract order type on a set~$P$, construct the convex
hull $\CH(P)$ of~$P$ in $O(n \log n)$ time.%
\footnote{Knuth\cite{knuth1992axioms} discusses how to obtain the
convex hull of abstract order types in $O(n \log n)$ time.  It is also
straight-forward to adapt standard algorithms like Graham's scan.} If
it has more than three vertices, we are done.  Otherwise, let
$\seq{a,b,c}$ be the convex hull.

We first test for the case where there is another important triangle
$\seq{u,v,w}$ that does not share a vertex with~$\CH(P)$.  For this we
use Lemma~\ref{lem:important_partition}.  Radially sort the vertices
around $a$, $b$, and $c$.  Consider the clockwise order around $a$ and
the counterclockwise order around $b$, which can both be interpreted
as linear orders starting with~$c$.  The last vertex where the
prefixes of these two orders match is a vertex of $\seq{u,v,w}$, say,
$w$ for the following reason.  Let $p$ and $q$ be the first
mismatching pair (hence, $\CH(\{a,b,p,q\})$ is a quadrilateral).
Suppose first that $p$ and $q$ either precede $w$ in that order, or
$w$ is one of them.
Then $\CH(\{a,b,p,q\})$ is a quadrilateral that contains $w$, a
contradiction.  Hence, suppose both $p$ and $q$ succeed $w$ in that
order, and there is another point $r$ succeeding $w$ before one of $p$
and $q$.  Then $\CH(\{a,b,r\})$ has $u$ and $v$ in its interior, but
does not contain $w$, leading again to a contradiction.  The analogous
holds for the other pairs of extreme points.  If this method returns
three points $u$, $v$ and $w$, we can check whether the radial
orderings around $u$, $v$, and $w$ match the ones around the extreme
points for the corresponding subsets defined in
Lemma~\ref{lem:important_partition}.  We need $O(n)$ time for the
partitioning, as by Lemma~\ref{lem:important_order}, we only have to
check points that are adjacent in the radial orderings around $a$,
$b$, and~$c$.  The check at $u$, $v$, and~$w$ also takes $O(n)$ time.
If the outcome is positive, we have a valid hull structure of Type~3,
and both independent triangles can become convex hulls.

Suppose there are important triangles that share a vertex.  We guess
the covering extreme point~$a$.  (If the guess is not correct, the
following process has to be repeated at most twice for $b$ and $c$.)
The important triangles incident to~$a$ can be found in the following
way.  We obtain the radial ordering around $a$, as well as the convex
hull of $P \setminus \{a\}$.  This gives us a structure that is very
similar (if not equivalent) to a Type~2 hull structure (the ``compact
cycle'' may have only three vertices).  We can apply
Lemma~\ref{lem:type_2_final} to obtain the important triangles for
this set.
\end{proof}

\section{Minimal non-realizable Radial Systems of Arbitrary Size}\label{sec:minimal_unrealizable}

For any $k \geq 3$ we describe a radial system $R_k$ over $n = 2k + 1$
vertices which is not realizable as an abstract order type, while
every radial system induced by any strict subset of the vertices can
be realized, even as a point set order type.  This shows that
realizability of radial systems cannot be decided by checking
realizability of all induced radial systems up to any fixed constant
size.

\begin{figure}
\centering
\includegraphics[width=\textwidth]{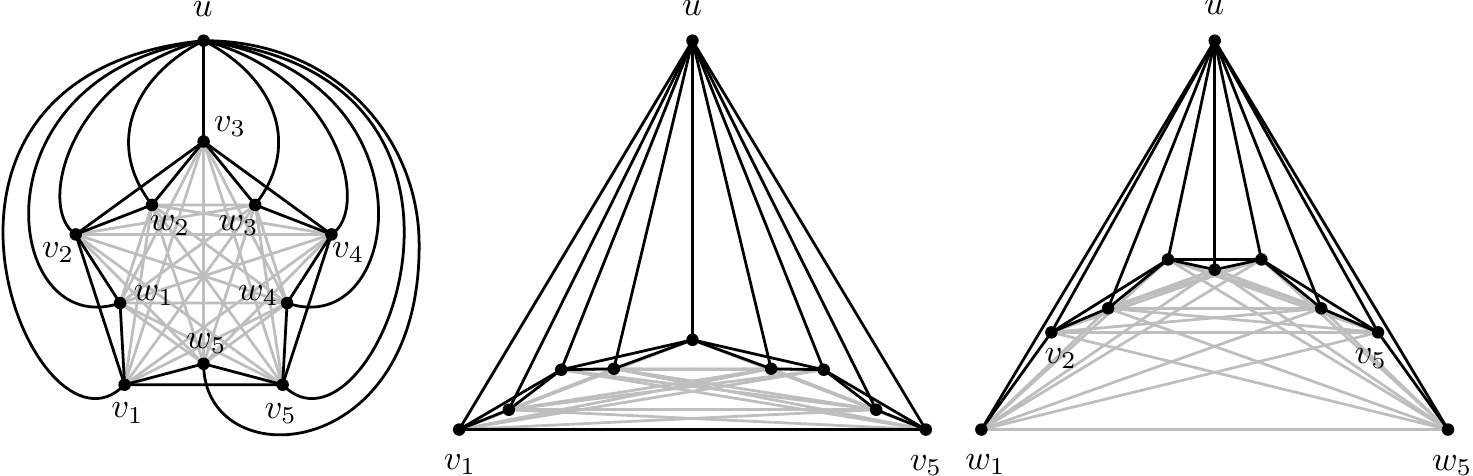}
\caption{The construction of $R_5$ on the left, 
and point set order type realizations of two induced 
radial systems after removing either $w_5$ or $v_1$ on the right.}
\label{fig_minimal_unrealizable}
\end{figure}

\begin{customthm}{\ref{thm:minimal_unrealizable}}
For any $k \geq 3$ there exists a radial system $R_k$ over $n = 2k +
1$ vertices that is not realizable as an abstract order type, while
every radial system induced by any strict subset of the vertices can
be realized as a point set order type.
\end{customthm}
\begin{proof}
Throughout, we refer to \fig{fig_minimal_unrealizable}, which
illustrates the construction of $R_5$.  We start with a so-called
\emph{double circle} with a total of $2k$ vertices.  Imagine a regular
$k$-gon with vertices $v_1,\dots,v_k$, and $k$ additional vertices
$w_1,\dots,w_k$ that are placed inside the $k$-gon and arbitrarily
close to the midpoints of its $k$ edges, where we place $w_i$ next to
the edge $v_iv_{i+1}$.  The radial system for these $2k$ vertices is
obtained by drawing all edges as straight lines and by observing the
radial orderings of the edges around each vertex.  We add one
additional vertex $u$, which can be thought of being outside of the
initial $k$-gon.  More precisely, $u$ comes directly between $v_{i-1}$
and $v_{i+1}$ in the radial ordering around any vertex $v_i$, and it
comes directly between $v_i$ and $v_{i+1}$ in the radial ordering
around any vertex $w_i$.

First, observe that edges of type $v_iw_i$ and $w_iv_{i+1}$ cannot be
boundary edges of the convex hull in any realization of $R_k$ since
they always will be in the interior of the $k$-gon $v_1,\dots,v_k$.
Second, all pairs of edges that cross in
\fig{fig_minimal_unrealizable} also cross in any other good drawing of
$R_k$, and hence they also cannot be convex hull edges.  (See the
paragraph on good drawings in Section~\ref{sec:intro}.) This leaves us
only with edges of type $uv_i$ as potential convex hull edges.
However, since there is no cycle that contains only such edges, there
is no viable candidate for the convex hull of $R_k$, which concludes
the proof of the first part of the theorem.

We now show that any strict subset of the vertices induces a radial
system that can be realized as a point set order type.  We distinguish
the following three cases.  If we remove the vertex $u$ then, by
definition, we already have an appropriate straight-line drawing of
the remaining vertices and edges.  If we remove any vertex $w_i$, say
$w_k$, then we can draw $v_1,\dots,v_k$ as a convex $k$-gon in such a
way that all edges except for $v_1v_k$ face the vertex $u$, which
means the remaining vertices $w_1,\dots,w_{k-1}$ can be added easily.
If we remove any vertex $v_i$, say $v_1$, then we reuse the drawing
from the previous case for the $(k-1)$-gon $v_2,\dots,v_k$ and the
interior vertices $w_2,\dots,w_{k-1}$.  Observe that the two
remaining vertices $w_1$ and $w_k$ do not have to be placed inside
this $(k-1)$-gon, and hence that it is simple to position them
appropriately.
\end{proof}

The above proof also works for general good drawings of 
the complete graph.

\section{Conclusion}

\begin{problem}
Can we reconstruct an order type of the vertices of a simple polygon
when given only the radial orderings of visible vertices around each
vertex (similar to Chen and Wang\cite{CW12}, but without angles)?
\end{problem}
This question is closely related to characterizing visibility graphs
of simple polygons, which is still open.  It is known that there are
infinitely many minimal forbidden induced subgraphs of visibility
graphs.  See the book of Ghosh\cite{visibility_algorithms} and
references therein.

\paragraph{Acknowledgments.}
This work was initiated during the \emph{ComPoSe Workshop on Order
Types and Rotation Systems} held in February 2015 in Strobl, Austria.
We thank the participants for valuable discussions. We would
also like to thank the anonymous reviewers for multiple suggestions
that helped to improve the presentation in the paper.

\bibliographystyle{abbrv}
\bibliography{bibliography}

\end{document}